\newif\ifnotes
\newif\ifcr
\notesfalse
\crfalse

\newif\ifllncs
 \llncsfalse

\ifnotes
\newcommand{\omri}[1]{$\ll$\textsf{\color{red} Omri: { #1}}$\gg$}

\else
\newcommand{\omri}[1]{}
\fi

\ifllncs
\documentclass[11pt]{llncs}
\else
\documentclass[a4paper,11pt]{article}
\usepackage{amsthm}
\usepackage{fullpage}
\fi
\usepackage{footmisc} 
\usepackage[hypertexnames=false,colorlinks=true,citecolor=Maroon]{hyperref}
\usepackage{newtxtext}
\usepackage{amsmath,amssymb,enumerate,algorithm,algorithmic,latexsym}
\usepackage{tikz}
\usetikzlibrary{fpu}
\usepackage{breakcites}
\usepackage{float}
\newfloat{algorithm}{H}{lop}
\usepackage{color}
\usepackage{colortbl}
\definecolor{Maroon}{cmyk}{0, 0.87, 0.68, 0.32}
\usepackage{url}
\usepackage{graphicx}
\usepackage{xspace}
\usepackage{subfigure}
\usepackage{amsfonts}
\usepackage{caption}
\usepackage{enumitem}
\usepackage{soul}
\usepackage[normalem]{ulem}
\usepackage{url}
\usepackage{graphicx}
\usepackage{bookmark}
\numberwithin{algorithm}{section}
\usepackage{mathtools}
\usepackage{dsfont}
\usepackage{cleveref}
\renewcommand{\paragraph}[1]{\vspace{1.5mm}\noindent \textbf{#1}}

\newcommand{\Nat}{\mathbb{N}}

\newcommand{\bbC}{\mathbb{C}}

\newcommand{\lang}{\mathcal{L}}

\newcommand{\BQP}{\textbf{BQP}}
\newcommand{\PSPACE}{\textbf{PSPACE}}
\newcommand{\NP}{\textbf{NP}}

\newcommand{\SAT}{\mathsf{SAT}}

\newcommand{\Rel}{\mathcal{R}}

\newcommand{\YES}{\prod_{\text{YES}}}
\newcommand{\NO}{\prod_{\text{NO}}}

\newcommand{\CTC}{\text{CTC}}

\newcommand{\CJ}{\mathbf{CJ}}
\newcommand{\Lin}{\mathcal{L}}
\newcommand{\Hilb}{\mathcal{H}}
\newcommand{\channel}{\mathcal{C}}
\newcommand{\states}{\mathcal{S}}
\newcommand{\pmo}{\mathbf{PMO}}

\newcommand{\BQPlctc}{\BQP_{\ell\CTC}} 
\newcommand{\BQPuctc}{\BQP_{u\CTC}} 

\newcommand{\zo}{\{0, 1\}}

\newcommand{\ket}[1]{|{#1}\rangle}
\newcommand{\bra}[1]{\langle{#1}|}
\newcommand{\braket}[2]{\langle{#1}|{#2}\rangle}

\newcommand{\Dket}[1]{| {#1} \rangle\!\rangle}
\newcommand{\Dbra}[1]{\langle\!\langle {#1} |}

\newcommand{\textabbrevstyle}[1]{\mbox{#1}}
\newcommand{\textabbrevstylebol}[1]{\mbox{\textbf{#1}}}
\newcommand{\newtextabbrev}[1]{\expandafter\newcommand\csname #1\endcsname{\textabbrevstyle{#1}\xspace}}
\newcommand{\newtextabbrevbol}[1]{\expandafter\newcommand\csname #1\endcsname{\textabbrevstylebol{#1}\xspace}}
\newcommand{\renewtextabbrevbol}[1]{\expandafter\renewcommand\csname
#1\endcsname{\textabbrevstylebol{#1}\xspace}}

\newtextabbrevbol{EXP}
\newtextabbrevbol{Dtime}
\newtextabbrev{PRG}
\newtextabbrev{PRF}
\newtextabbrev{OWF}
\newtextabbrev{DPT}
\newtextabbrev{PPT}
\newtextabbrev{QPT}
\newtextabbrev{EOWF}
\newtextabbrev{GEOWF}
\newtextabbrev{ZAP}
\newtextabbrevbol{Ntime}
\newtextabbrevbol{coAM}
\newtextabbrevbol{MIP}
\newtextabbrevbol{NEXP}
\newtextabbrevbol{BPP}
\newtextabbrevbol{coNP}
\newtextabbrevbol{coMA}
\newtextabbrevbol{TFNP}
\newtextabbrev{NIWI}
\newtextabbrevbol{AM}
\newtextabbrev{SNARG}
\newtextabbrev{SNARK}
\newtextabbrev{coRP}
\newtextabbrev{NIUA}
\newtextabbrev{UA}
\newtextabbrev{ZK}
\newtextabbrev{WZK}
\newtextabbrev{WH}
\newtextabbrev{AI}
\newtextabbrev{ECRH}
\newtextabbrev{PIR}
\newtextabbrev{WIPOK}
\newtextabbrev{CDS}
\newtextabbrev{POK}
\newtextabbrev{FE}
\newtextabbrev{IO}
\newtextabbrev{XIO}
\newtextabbrev{SXIO}
\newtextabbrev{SKFE}
\newtextabbrev{PKFE}
\newtextabbrev{PPRF}
\newtextabbrev{QLWE}
\newtextabbrev{PKE}

\ifllncs
\else
\newtheorem{definition}{Definition}[section]

\newtheorem{lemma}{Lemma}[section]
\newtheorem{corollary}{Corollary}[section]
\newtheorem{theorem}{Theorem}[section]

\newtheorem{proposition}{Proposition}[section]

\theoremstyle{remark}

\newtheorem{remark}{Remark}[section]

\fi

\newcommand{\thmref}[1]{Theorem~\protect\ref{#1}}

\newcommand{\lemref}[1]{Lemma~\protect\ref{#1}}

\newenvironment{boxfig}[2]{\begin{figure}[#1]\fbox{\begin{minipage}{\linewidth}
                        \vspace{0.2em}
                        \makebox[0.025\linewidth]{}
                        \begin{minipage}{0.95\linewidth}
            {{
                        #2 }}
                        \end{minipage}
                        \vspace{0.2em}
                        \end{minipage}}}{\end{figure}}

\newcommand{\norm}[1]{\ensuremath{\left\lVert #1 \right\rVert}} %
\newcommand{\tr}{Tr} %
\renewcommand{\paragraph}[1]{\smallskip\noindent{\bf #1}}

\usepackage{bm}
\usepackage{booktabs}
\usepackage{array}
\numberwithin{equation}{section}

\newcommand{\UP}{\textbf{UP}}
\newcommand{\coUP}{\textbf{coUP}}
\newcommand{\TFUP}{\textbf{TFUP}}
\newcommand{\PP}{\textbf{PP}}

\newcommand{\USO}{\mathsf{USO}}

\newcommand{\td}{\Delta_{\mathrm{tr}}}
\newcommand{\id}{\mathrm{id}}
\newcommand{\Ad}{\operatorname{Ad}}

\hypersetup{
    pdftitle={What Can Be Computed with Indefinite Causal Structure?},
    pdfauthor={Omri Shmueli}
}

\title{What Can Be Computed with Indefinite Causal Structure?}
\ifllncs
\author{Omri Shmueli}
\institute{}
\else
\author{Omri Shmueli\thanks{omri.shmueli1@gmail.com}}

\date{}
\fi

\begin{document}
\maketitle

\begin{abstract}
Can quantum computation become more powerful when operations do not need to follow a definite causal order? We study this question in the process matrix formalism of Oreshkov et al. (Nature Communications, 2012), which allows quantum operations to be connected without a fixed global order. In the computational model, a quantum circuit succinctly describes a process matrix, and the algorithm is given access to its generated channel, with computational cost measured by the size of the generating circuit.

We give polynomial-time algorithms in two settings. First, general process matrix computation suffices to solve every problem in $\NP$. The valid processes used by our algorithm satisfy the purification postulate of Ara\'{u}jo et al. (Quantum, 2017), and their generated channels are unitary, although their generating circuits are not. Second, requiring the process matrices themselves to be pure, we give a polynomial-time algorithm for unique-sink orientation, a general problem capturing several longstanding computational problems in optimization. Our unitary generators are constructed explicitly and efficiently. Assuming that unique-sink orientation has no quantum polynomial-time algorithm, our result shows that indefinite causal order can provide computational power beyond $\BQP$ within the process matrix formalism, even while preserving unitarity.
\end{abstract}

\ifllncs
\pagestyle{plain}
\else
\newpage
\thispagestyle{empty}
\tableofcontents
\newpage
\fi
\pagenumbering{arabic}

\section{Introduction} \label{section:introduction}
One of the basic questions in our understanding of nature is whether physical events must have a definite causal structure, namely, a relation specifying which events can influence which others. In general relativity, these relations depend on the geometry of spacetime, which is itself dynamical. If spacetime geometry is subject to quantum uncertainty, as may occur in quantum gravity, it raises the possibility that its causal structure can also be indefinite \cite{hardy2005probability,hardy2007towards}. This possibility, referred to as \emph{indefinite causal structure} (ICS), motivates studying information processing without a predetermined causal order. In particular, it raises the question of whether ICS can itself be a computational resource \cite{hardy2009quantum,chiribella2013quantum}.

The quantum circuit model, based on standard quantum mechanics, allows superpositions of physical states, but the circuit still specifies a causal order (formally, a partial order) between the execution of its gates. It is therefore natural to ask whether relaxing the latter requirement can change the computational power of the model. To make this question meaningful, however, one needs to specify which connections between quantum operations are allowed, why they define consistent computations, and how to measure their computational cost.

\paragraph{Process matrices: local quantum mechanics without a fixed global order.}
Oreshkov, Costa and Brukner \cite{oreshkov2012quantum} introduced the \emph{process matrix formalism} (PMF), which has since become a central framework for studying indefinite causal order. The formalism separates two assumptions. The first is that ordinary quantum mechanics governs information processing within each local spacetime region, and the second is that globally, these regions are connected according to a definite causal order. In the PMF, these regions are modeled as laboratories, each receiving a quantum system, performing an operation, and producing an output system. A \emph{process matrix} is an algebraic object describing the correlations and possible information flow between these laboratories, without imposing a definite causal order on one hand, but also not violating local quantum mechanics.

Let us explain what are process matrices in the language of computer science. A joint probability distribution describes correlations between classical registers, while a joint density matrix describes quantum correlations between quantum registers, including entanglement. These correlations, classical or quantum, are between physical \emph{systems}. Process matrices extend this to correlations between quantum \emph{operations}. To elaborate, an ordinary quantum circuit implements a completely positive trace-preserving (CPTP) map from input registers to output registers. A joint density matrix, in contrast, can be thought of as a snapshot of registers occupying different spatial regions: it specifies their joint state, but not their evolution. This motivates viewing density matrices as describing quantum correlations between systems across space at a given time, and process matrices as generalized states describing correlations between operations across space and time.

Both density matrices and quantum channels arise naturally as special cases of process matrices. When the laboratories receive quantum systems but send none out, the process matrix reduces to a joint density matrix describing their input systems. When one laboratory's output is sent to another laboratory's input, an ordinary quantum channel describes this connection. The PMF includes both situations, while also allowing correlations between local operations that cannot be explained by a definite global causal order \cite{oreshkov2012quantum}. 

A bit more formally, consider local laboratories applying quantum channels $\mathcal A_1,\ldots,\mathcal A_k$. A process matrix $W$ describes a transformation from the local operations to a global operation $\channel_{\mathcal A_1,\ldots,\mathcal A_k}$. This transformation depends linearly on each local operation. We require $\channel_{\mathcal A_1,\ldots,\mathcal A_k}$ to be a quantum channel for \emph{every} choice of the local channels, including operations involving auxiliary registers whose inputs may be entangled across laboratories or with an external input \cite{chiribella2013quantum,araujo2017purification}. This requirement allows us to describe how local quantum operations are \emph{combined}, without either thinking of specific channels or prescribing a definite causal order between them.

\newpage

\paragraph{Purification and physicality.}
The evolution of a closed quantum system is unitary, and hence preserves information. This reversibility is a fundamental feature of quantum mechanics. General quantum channels need not be reversible, but that is only since they may describe a part of a larger system. Namely, any quantum channel can be implemented by initializing auxiliary registers, applying a unitary transformation on the larger system, and discarding some output registers -- this process is called purification. Thus, information that appears to be lost may remain in registers that are no longer observed. The PMF as was initially defined by \cite{oreshkov2012quantum} does not guarantee an analogous reversibility for transformations corresponding to process matrices.

Ara\'{u}jo, Feix, Navascu\'{e}s and Brukner \cite{araujo2017purification} later proposed a \emph{purification postulate} to impose such a requirement. Specifically, we call a process \emph{pure} if every choice of local unitary operations $\mathcal A_1,\ldots,\mathcal A_k$, including operations involving auxiliary registers, induces a global unitary transformation $U_{\mathcal A_1,\ldots,\mathcal A_k}$. A process is called \emph{purifiable} if it can be obtained from a larger pure process by fixing additional inputs to pure states and discarding additional outputs. The postulate requires physically realizable processes to be purifiable. This is indeed a nontrivial restriction: the original process of \cite{oreshkov2012quantum}, whose correlations cannot arise under any definite causal order, even allowing classical uncertainty about that order, is not purifiable \cite{araujo2017purification}.

\paragraph{From process matrices to computation.}
Indefinite causal order already offers advantages in information processing tasks. The quantum switch \cite{chiribella2013quantum}, which coherently controls the order of two operations, already provides advantages using a pure process \cite{araujo2017purification}. For example, it can determine with certainty whether two unknown unitary gates commute or anticommute using each only once, which is impossible for fixed-order circuits under the same query restriction \cite{chiribella2012perfect}. It also enables protocols that use exponentially less communication than one-way quantum protocols \cite{guerin2016exponential}, and allows two channels that individually erase all input information to transmit classical information when applied in a superposition of orders \cite{ebler2018enhanced}. These results do not settle whether pure, or even general process matrices, enable efficient computation beyond $\BQP$.

Ara\'{u}jo, Gu\'{e}rin and Baumeler \cite{araujo2017quantum} formalize process matrix computation using uniformly generated, polynomial-size ordinary quantum circuits $C$ with designated output registers connected back to input registers. In this work we call such a circuit, together with its designated connections, a \emph{process matrix generator} when the connected transformation is a quantum channel for every choice of local channels on the individual connections, including auxiliary registers as above. The computational model grants access to the resulting channel, with cost measured by the size of $C$.

This computational model was shown \cite{araujo2017quantum} to admit a description in terms of a subclass of \emph{postselection closed timelike curves} (CTCs), represented using postselected teleportation \cite{lloyd2011quantum}. To elaborate, general postselection can introduce an input-dependent normalization, making the induced evolution nonlinear. In contrast, process matrix validity imposes a strong restriction: the postselection probability must be independent of both the input state and the local operations, preserving linearity. Thus, \cite{araujo2017quantum} define the class corresponding to polynomial-time process matrix computation as $\BQPlctc$, standing for $\BQP$ with access to linear postselection CTCS. Importantly, $\BQPlctc$ corresponds to process matrix computation \emph{without} asking for the purification restriction. Pure PMGs (Definition \ref{definition:pure_process_matrix_generator}) further restrict this to \emph{unitary} CTCs: the generator is unitary, and so is the connected evolution for every choice of local unitary operations.

An efficient process matrix generator circuit gives a short description of a process, but the computation obtained by its process matrix may be much harder to implement. Postselected teleportation illustrates this difficulty: its success probability can be exponentially small, even when the connected channel is unitary. Ara\'{u}jo, Gu\'{e}rin and Baumeler \cite{araujo2017quantum} showed that $\BQPlctc$ lies between $\BQP$ and $\PP$, leaving open whether the first inclusion is strict. We explore this question, including under the further requirement that the process itself be pure: can indefinite causal structure enable computation beyond $\BQP$ while preserving the reversibility of quantum mechanics?

\subsection{Results} \label{subsection:our_results}
We investigate quantum computation with indefinite causal structure, considering both purifiable processes that are not pure, and the further restriction to pure processes. Our work has two main parts.
\begin{enumerate}
    \item
    First, we show that polynomial-time process computation can solve every problem in $\NP$. Our constructed process is furthermore purifiable (Definition \ref{definition:purifiable_process}), but not pure (Definition \ref{definition:pure_process_matrix_generator}).

    \item
    Second, we give a polynomial-time algorithm for the unique-sink orientation problem, using pure processes whose unitary generators are constructed explicitly and efficiently.
\end{enumerate}
Hardy \cite[Section 4.3]{hardy2009quantum} asked whether indefinite causal structure can enable efficient computation beyond $\BQP$. Within the process matrix framework, our results answer this question affirmatively, even when restricted to pure processes, assuming that $\USO$ has no quantum polynomial-time algorithm. These results also clarify the role of linearity in postselection closed timelike curves, and distinguish between purification as a proposed condition for physicality, and a potential condition for efficient realization.

\vspace{1mm}
As part of our work we also extend the computational framework of \cite{araujo2017quantum}. For a promise problem, the generating circuit constructed from the input may be guaranteed to define a valid process only when the promise holds. Our definition also covers inputs that do not satisfy the promise: if the circuit does not define a valid process, the algorithm is given access to an arbitrary quantum channel, and must remain correct for any such channel. On promised inputs, the computation is unchanged. We give the formal definitions in Section \ref{section:process_matrix_formalism}.

\vspace{2mm}
\paragraph{Part I: Linear CTCs, purification, and $\NP$.}
The computational power of postselection CTCs has been closely associated with their nonlinear evolution \cite[Section V]{lloyd2011quantum}, and quantum computation with unrestricted postselection CTCs has exactly the power of $\PP$ \cite{aaronson2005postselection}. The restriction to linear CTCs was motivated, in part, by the possibility that removing nonlinearity would exclude such computational power \cite[Section III]{araujo2017quantum}. We show that linearity alone does not provide this restriction.

\begin{theorem}
Let $\BQPlctc$ as in Definition \ref{definition:BQPlctc}. Then,
$$
\NP\subseteq\BQPlctc
\enspace .
$$
The containment continues to hold even given two more conditions: the valid generation queries are required to describe purifiable processes (Definition \ref{definition:purifiable_process}) whose generated channels (Definition \ref{definition:process_matrix_generator}) are unitary.
\end{theorem}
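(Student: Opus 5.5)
Our plan is to reduce a suitable $\NP$-complete promise problem to process matrix computation via the Valiant--Vazirani isolation lemma. We work with Unique-SAT: formulas $\varphi$ on $n$ variables promised to have either zero or exactly one satisfying assignment. Valiant--Vazirani gives a randomized polynomial-time reduction from $\SAT$ to this promise problem with inverse-polynomial success probability. Because $\BQPlctc$ is a bounded-error class, and our extended definition tolerates arbitrary channels on inputs violating the promise, we can run polynomially many independent trials. Each candidate assignment is verified classically, so one-sided error suffices and amplification is routine. It therefore remains to decide Unique-SAT, or more precisely to \emph{find} the unique witness when it exists, with a single generation query per trial.

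\textbf{The generator.} For a formula $\varphi$ with a unique satisfying assignment $x^*$, we build a generator circuit $C_\varphi$ whose connected (linear CTC) evolution is unitary and writes $x^*$ into an output register. Concretely, the circuit acts on an $n$-qubit CV register (output fed back to input) and an external register. It reads the CTC input $y$, computes $\varphi(y)$, and outputs $y$ unchanged on the loop if $\varphi(y)=1$. Otherwise it outputs $y+1 \bmod 2^n$, or some other cyclic permutation $\pi$ whose only ``fixed-point-inducing'' modification occurs at satisfying points. It also copies $y$ into the external output.

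In the postselected-teleportation picture, the CTC-consistent states are the fixed points of the induced permutation on the loop. With exactly one satisfying assignment, we design the loop map so that its unique cycle consistent with the linear normalization is the fixed point $x^*$. For instance, the loop map $f(y)=y$ if $\varphi(y)=1$ and $f(y)=$ the next element otherwise is not a permutation. We therefore repair it into a permutation using an auxiliary register that absorbs the lost information. These auxiliary inputs are initialized to fixed pure states and discarded, and this is exactly what makes the process purifiable but not pure. The postselection probability is then $\tr(P_f)/2^n$, which depends only on the number of fixed points. Hence it is the same constant for every input and every local operation, because $\varphi$ is fixed and has exactly one solution. This independence is what linearity requires, and it is why the construction is valid only under the promise. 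The generated channel outputs $\ket{x^*}$ independently of the external input, so it is unitary up to the fixed output state; we would arrange it to be, for example, $\ket{z}\mapsto\ket{z\oplus x^*}$ to get a genuine unitary.

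\textbf{Main obstacle.} The hard step is validity: showing that the connected transformation is a quantum channel for \emph{every} choice of local operations on the connection, including entangled auxiliary systems, and not merely for the identity wiring. We plan to use the characterization from Section~\ref{section:process_matrix_formalism} of valid process matrix generators in terms of the induced postselection probability being constant. Concretely, we would verify that the Choi operator of $C_\varphi$ contracted with any CPTP local map yields trace independent of that map. The natural route is to make the loop-map structure such that, from the CV register's perspective, the process is one-way signalling: a fixed state is prepared (the superposition concentrated on $x^*$) and the slot's output is discarded. Such a process is manifestly valid. Purifiability then follows by exhibiting the larger pure process explicitly with the ancillas. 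We expect most of the technical effort to be in checking this for entangled local operations, and in confirming that outside the promise nothing beyond correctness-under-arbitrary-channels is needed.
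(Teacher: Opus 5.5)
Your outer layer is sound and matches the paper: Valiant--Vazirani isolation, one generation query per trial, classical verification of each candidate, and arbitrary channels tolerated off the promise. The gap is the generator itself.

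Your claim that the postselection probability is $\tr(P_f)/2^n$ ``for every input and every local operation'' holds only when the identity is wired into the loop. Once a local operation is inserted, the consistent loop values are no longer the fixed points of $f$, and validity fails.

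Take the natural unitary realization of your idea, $U\ket{y}_R\ket{z}_Q=\ket{f(y)}_R\ket{z\oplus y}_Q$. Let $f$ be the cyclic shift $y\mapsto y+1$ that skips $x^*$ and fixes it. This is already an efficiently invertible permutation computable from $\varphi$, so no repair is needed. With the identity in every slot you do get $\ket{z}\mapsto\ket{z\oplus x^*}$.

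Now insert a Pauli $X$ on the feedback qubit holding the lowest-order bit, and identities elsewhere. The connected operator becomes $\ket{z}\mapsto\sum_{y:\,f(y)=y\oplus e_1}\ket{z\oplus y}$. At least $2^{n-1}-2$ strings $y$ (those with lowest bit $0$, up to two exceptions) satisfy $f(y)=y\oplus e_1$. The output therefore has squared norm about $2^{n-1}$, and the map is not trace-preserving.

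The normalization depends on the inserted operation, which is exactly the nonlinearity that validity must exclude. Discarded repair ancillas do not address this. The paper also contains no characterization reducing validity to constancy of the identity-wired postselection probability; its definition quantifies over all local channels. Section~\ref{section:unitary_CTC_computation} shows how restrictive the condition is: for the XOR-type loop maps used there, validity holds if and only if the map is a USO outmap (Proposition~\ref{prop:uso-necessity}).

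Your fallback does not help either. If the slot's output is discarded, the feedback input of $C$ influences nothing. The generated channel is then just $C$ run on an arbitrary feedback input with its feedback output traced out, which is an ordinary circuit. A state ``concentrated on $x^*$'' would have to be prepared by $C$ itself, so $C$ would already find $x^*$ from $\varphi$.

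The missing idea is a generator in which the feedback input genuinely matters, yet every branch carries an input-independent weight. The paper obtains this in two steps.
\begin{enumerate}
\item From the verifier alone, it prepares exactly the Choi state of the unitary $S_z$ that swaps $\ket{0^k}$ and $\ket{z}$. The success probability is exactly $1/2$ independently of $z$; after padding it to $1/4$, one round of amplitude amplification makes the preparation exact.
\item It uses the teleportation generator of Lemma~\ref{lemma:from_choi_to_isometry}. This circuit measures the feedback input as $r$, Bell-measures the external input against half of the Choi state with outcome $j$, and writes $r\oplus j$ to the feedback output.
\end{enumerate}
Each branch applies $VP_j/d$ to the external input. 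For an arbitrary joint channel $\mathcal A$ on all feedback qubits, trace preservation reduces, after substituting $a=r\oplus j$, to the Kraus completeness of $\mathcal A$ at each fixed $a$. The $d^2$ values of $a$ then cancel the factor $1/d^2$.

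Purifiability is not obtained by keeping ancillas coherent. The paper notes that the naive purification of this generator is no longer a valid process on all inputs to the added registers. Instead, purifiability follows existentially from this joint validity, via the Chiribella--D'Ariano--Perinotti realization of deterministic supermaps (Proposition~\ref{proposition:joint_validity_implies_purifiability}). Your plan to exhibit the pure extension explicitly would have to overcome exactly that obstruction.
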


In our construction, even though the circuit generator is purifiable and the generated channel is unitary, the circuit generating the process is \emph{not} unitary, which makes the process not pure (Definition \ref{definition:pure_process_matrix_generator}). We discuss how this can be interpreted in our Technical Overview (see Section \ref{section:overview_purifiable_vs_pure}).

\vspace{2mm}
\noindent
The main construction solves unique-witness search. Given a Boolean circuit $v : \zo^n \rightarrow \zo$ with exactly one satisfying input $z \in \zo^n$, we construct in classical polynomial time a PMG whose generated channel swaps between $\ket{0}$ and $\ket{z}$ and fixes the other computational basis states. Thus, one use of the generated channel obtains $z$. To solve all of $\NP$, where the input circuit $v' : \zo^n \rightarrow \zo$ does not necessarily have exactly one witness, we use the Valiant-Vazirani reduction \cite{valiant1985np} and verify every candidate witness before accepting.

We also show how to use general process computation for coherent permutation inversion. Given a classical circuit computing a permutation $\pi:\{0,1\}^n\to\{0,1\}^n$, we efficiently construct a PMG whose generated channel implements $\ket{y}\mapsto\ket{\pi^{-1}(y)}$, including on superpositions and inputs entangled with other registers. The unique-witness and permutation-inversion constructions are given in Section \ref{section:linear_CTC_computation}.

\vspace{2mm}
\paragraph{Part II: Pure processes solve unique-sink orientation.}
We next require the generating circuits themselves to be unitary and the resulting processes to be pure and polynomial-time. We show an algorithm in this framework solving the unique-sink orientation problem ($\USO$), a well-studied problem arising in optimization \cite{szabo2001unique}. An instance specifies the directions of the edges of an $n$-dimensional Boolean cube, with the promise that every face has exactly one sink\footnote{Note that the overall number of faces is $\sum_{k \in \{ 0, 1, \ldots, n \}} (\# \text{ of $k$ dimensional faces}) = \sum_{k \in \{ 0, 1, \ldots, n \}} \binom{n}{k} 2^{n - k} = 3^n$.}. The orientation is given succinctly by a Boolean circuit that, on input a vertex, outputs the directions of its outgoing edges. The task is to find the sink of the entire cube. The most efficient known quantum algorithms for $\USO$ take $2^{\Omega(n)}$ time \cite{bacon2017quantum}. We prove the following.

\begin{theorem}[Informal; See Theorem \ref{thm:uso}]
The problem $\USO$ (Definition \ref{definition:uso}) can be solved in polynomial time using pure process computation (Definition \ref{definition:BQPuctc}). The algorithm uses the generated channel once and succeeds with probability $1$.
\end{theorem}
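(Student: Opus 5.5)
The plan is to reduce $\USO$ to a fixed-point problem that a pure process can solve coherently. Let $s:\zo^n\to\zo^n$ be the \emph{outmap} computed by the instance circuit, where $s(v)_i=1$ iff the edge of $v$ in direction $i$ is outgoing. By the Szab\'o--Welzl characterization \cite{szabo2001unique}, the orientation is a USO iff for all $u\neq v$ the sets $u\oplus v$ and $s(u)\oplus s(v)$ intersect; in particular $s$ is a bijection. The sink is the unique $x$ with $s(x)=0^n$.

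\textbf{Construction.} I would build a PMG with $n$ laboratories. Laboratory $i$ receives and returns a single qubit, namely coordinate $i$ of a loop register $x$. The generator $C$ is the classical reversible, hence unitary, circuit
$$(y,x)\;\mapsto\;(y,\;x\oplus s(x)\oplus y),$$
where $y$ is an external $n$-qubit input and the second output is fed back into the laboratories. It is implemented with ancillas that are uncomputed, using the instance circuit for $s$. With identity local operations, a consistent loop value is a fixed point: $x=x\oplus s(x)\oplus y$, i.e.\ $s(x)=y$. On input $y=0^n$, one use of the channel therefore outputs the sink, with probability $1$. Since $s$ is a bijection, this fixed point exists and is unique.

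\textbf{Validity and purity.} I would treat general local operations in two stages.

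\emph{Classical local operations.} First I would show that for every choice of classical (deterministic, possibly $y$-independent) local functions $a_i:\zo\to\zo$ on the wires, the composed loop has exactly one fixed point. This is the Baumeler--Wolf criterion for deterministic logically consistent processes. A local function on wire $i$ is either a constant or the identity/negation of the returned bit. Constants freeze coordinate $i$, so the fixed-point equation restricts to a face of the cube. Negations amount to flipping bit $i$ of $y$. Hence the fixed-point equation becomes $s|_F(x)=y'$ on some face $F$. The USO promise that every face has a unique sink, together with the Szab\'o--Welzl property inherited by every face, gives that the outmap of $F$ is a bijection on $F$. So there is exactly one fixed point, for every intervention.

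\emph{Quantum local operations.} Second, I would lift this to arbitrary local unitaries with auxiliary systems. Here I would invoke the result of Ara\'ujo et al.\ \cite{araujo2017purification} and Baumeler--Wolf: a reversible classical process that has a unique fixed point for all local classical interventions, when implemented coherently as a unitary generator, yields a pure process. That is, the induced evolution is unitary for all local unitaries. This should also be checked directly through the process-matrix validity conditions of Definition \ref{definition:BQPuctc}. Concretely, I would expand each local unitary in the computational basis and show that the postselected-teleportation normalization is constant, equal to $2^{-n}$, because every basis branch contributes exactly one consistent loop assignment.

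\textbf{Main obstacle.} I expect the hardest step to be this quantum lifting with auxiliary registers entangled across laboratories. If the cited correspondence does not apply verbatim, my first attempt would be an induction on $n$. I would write the loop evaluation as the composition of laboratory $n$'s operation with the process obtained by fixing coordinate $n$, which lives on an $(n-1)$-dimensional face and is again a USO. I would then show unitarity of the connected map inductively, using that each face outmap is a bijection.

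Finally, on inputs violating the promise, the algorithm receives an arbitrary channel. The algorithm simply checks classically that the output $x$ satisfies $s(x)=0^n$, which keeps it correct in the sense of the framework.
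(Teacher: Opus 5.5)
\textbf{The generator you wrote down is not unitary, and it does not output the sink.} Write $h(x):=x\oplus s(x)$. Your map is $(y,x)\mapsto(y,\,y\oplus h(x))$. It discards $x$ and keeps $y$, so it is a bijection of $\zo^{2n}$ only when $h$ is injective, and for USOs this fails. Since $s_i(x\oplus e_i)\neq s_i(x)$, every $h_i$ is independent of $x_i$; for $n=1$ this already makes $h$ constant. For the USO $s(x)=x$ one gets $h\equiv 0^n$, so your map becomes $(y,x)\mapsto(y,y)$. Concretely, the ancilla holding $s(x)$ cannot be uncomputed, and no unitary generating circuit exists. Moreover, even as a formal linear map, its external output is $y$. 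The identity-connected operator is therefore $\tr_R(U_C)\ket{y}=\#\{x:s(x)=y\}\,\ket{y}=\ket{y}$. The fixed point lives only on the closed loop and is never copied to the external register, so one use of the channel does not reveal the sink.

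Both defects disappear if you XOR $h(x)$ into the external register while $x$ is still present, uncompute, and then swap: $\ket{x}_R\ket{y}_Q\mapsto\ket{y\oplus h(x)}_R\ket{x}_Q$. This is the paper's circuit. It is a permutation for every $h$, and it has the same feedback output as your map, so your classical fixed-point analysis is unchanged. It also satisfies $\tr_R(U_C)\ket{y}=\ket{s^{-1}(y)}$.

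\textbf{Even after this correction, the step that carries the theorem is only asserted.} That step is unitarity of $\tr_R[(\bigotimes_i V_i)(U_C\otimes I_E)]$ for arbitrary local unitaries $V_i$ on $R_iE_i$ with entangled auxiliaries. Your classical-intervention analysis is correct: it shows that classical consistency of this circuit is equivalent to the USO property. But the passage from classical logical consistency to validity under local quantum operations is not something you can cite. The paper notes that Baumeler et al.\ treat classical reversibility and quantum validity as separate questions. Your proposed per-branch normalization check also does not help. For unitary insertions, the diagonal $Q$-blocks of $T_{\mathbf V}T_{\mathbf V}^\dagger$ are the identity for every $h$, so all of the content lies in the cross terms between different loop values.

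The missing idea is to use the Szab\'o--Welzl condition in the form: $x\neq x'$ implies there is an $i$ with $x_i\neq x_i'$ and $h_i(x)=h_i(x')$. Set $B_{i,a}:=V_i(X^a\otimes I_{E_i})$, and let $B^{u,v}_{i,a}$ denote its blocks on $E_i$. The $Q$-blocks of the connected map are then $(T_{\mathbf V})_{x,y}=\bigotimes_i B^{x_i,y_i}_{i,h_i(x)}$, so
\[
\bigl(T_{\mathbf V}T_{\mathbf V}^\dagger\bigr)_{x,x'}=\bigotimes_{i=1}^{n}\sum_{v\in\zo}B^{x_i,v}_{i,h_i(x)}\bigl(B^{x_i',v}_{i,h_i(x')}\bigr)^\dagger .
\]
This is the identity for $x=x'$. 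For $x\neq x'$, take the coordinate $i$ above and set $a:=h_i(x)=h_i(x')$. The factor at $i$ then pairs two distinct block-rows of the same unitary $B_{i,a}$, so it vanishes. This direct block computation replaces both the cited lifting and your fallback induction on faces.
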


More specifically, our construction generate access to a coherent inversion of the USO function. Write $s(x)$ for the string whose $i$-th bit indicates whether the edge in direction $i$ leaves $x$. The USO property implies that $s$ is a permutation (Lemma \ref{lem:uso-characterization}), and our generated unitary implements
$$
T_s \cdot \ket{y}=\ket{s^{-1}(y)}
\enspace .
$$
The sink is obtained by setting $y=0^n$.

\vspace{3mm}
\noindent
\textit{Unique-sink orientation and related problems.}
The study of $\USO$ connects decades of work in optimization, game theory, and formal verification \cite{szabo2001unique,jurdzinski2008simple}. Classical polynomial-time reductions to $\USO$ cover rational P-matrix linear complementarity, simple stochastic games, discounted and mean-payoff games, parity games, one-dimensional energy games, and full modal $\mu$-calculus model checking \cite{gartner2005simple,jurdzinski2008simple,fearnley2009linear,brim2011faster,wilke2001alternating}.
These connections give the apparently simple task of finding a sink several different meanings. For example, stochastic games have been used to design controllers for autonomous transport and energy management \cite{kwiatkowska2018prism}. In cybersecurity, this includes finding optimal defensive strategies to preserve service availability against an adversary seeking to disrupt it \cite{deshpande2014stochastic}. Model checking makes it possible to verify properties of software through a formal model, even when that software's behavior is too complex to understand by inspecting its code \cite{clarke2018model}. For these problems, general polynomial-time solvability remains open in both classical and quantum computation.

\paragraph{Technical contributions: Constructing valid process matrices.}
Beyond the generating circuits themselves, our two main results introduce techniques in quantum information theory for proving process validity. For general processes, we show how an ordinary quantum circuit preparing the Choi state of an isometry can be turned into a process matrix generator implementing that isometry (Lemma \ref{lemma:from_choi_to_isometry}). This gives the construction underlying our $\NP$ result. For pure processes, our statement is slightly more general than the $\USO$ algorithm. We construct explicit polynomial-size pure process matrix generators for a family of quantum changes of basis. Each basis vector is obtained from a computational-basis state $\ket{x}$ by applying \emph{single-qubit} unitaries, each chosen between two options. The binary choices depend efficiently on $x$, and furthermore satisfy a combinatorial condition that guarantees process validity (Theorem \ref{theorem:decomposable_unitary_generation}). This combinatorial condition is exactly a characterization of unique-sink orientations, which is how the connection to $\USO$ comes up in the first place.

\paragraph{Directions for physical implementation.}
We also relate the structure of our pure generators to existing circuit simulations and quantum-switch implementation proposals (Section \ref{subsection:uso_physical_implementation}). In particular, our construction includes a product basis for which a quantum-switch measurement protocol and a photonic implementation proposal are known. Efficient physical realization for arbitrary $\USO$ instances remains open.

\subsection{Related Work} \label{subsection:related_work}

\paragraph{Process matrices and information processing.}
The process matrix formalism was introduced by Oreshkov, Costa and Brukner \cite{oreshkov2012quantum} to describe quantum correlations without assuming a definite global causal order. A sequence of works explored the information-processing advantages available in this setting. Chiribella \cite{chiribella2012perfect}, Ara\'{u}jo, Costa and Brukner \cite{araujo2014computational}, and Renner and Brukner \cite{renner2022computational} demonstrated query advantages for tasks involving unknown quantum operations. In communication, Gu\'{e}rin et al.\ \cite{guerin2016exponential} gave a protocol using exponentially less communication than any one-way quantum protocol for the same task. Other communication advantages include transmitting classical information through completely depolarizing channels and perfect quantum communication using certain channels with zero quantum capacity \cite{ebler2018enhanced,chiribella2021indefinite}. Applications also include quantum metrology and refrigeration \cite{zhao2020quantum,felce2020quantum}. More recently, Abbott, Mhalla and Pocreau \cite{abbott2024quantum,abbott2025classical} studied Boolean-function query complexity using general processes, obtaining advantages in both the error achievable with a fixed number of queries and exact query complexity. Bavaresco et al.\ \cite{bavaresco2025simulating} proved an exponential query lower bound for simulating the quantum switch: for arbitrary unknown $n$-qubit channels, a deterministic exact simulation that calls one channel only once must call the other exponentially many times. These query results concern access to unknown operations. For computation with explicitly described circuits, Ara\'{u}jo, Gu\'{e}rin and Baumeler \cite{araujo2017quantum} introduced the computational framework on which we build. They also constructed processes that violate causal inequalities but admit ordinary quantum-circuit simulations with constant overhead. Finally, Ara\'{u}jo, Feix, Navascu\'{e}s and Brukner \cite{araujo2017purification} defined pure and purifiable processes and proposed purifiability as a necessary condition for physical realizability.

\paragraph{Computational models of closed timelike curves.}
Deutsch \cite{deutsch1991quantum} introduced a model of CTCs in which the state travelling around a loop must be a fixed point of its evolution.\footnote{These are usually referred to as Deutschian CTCs in the literature.} Aaronson and Watrous \cite{aaronson2009closed} characterized the power of polynomial-time classical and quantum computation in this model as $\PSPACE$. Another model uses postselected teleportation to connect a system's output back to its input \cite{lloyd2011closed,lloyd2011quantum}.\footnote{These are usually referred to as postselection CTCs in the literature.} Results on quantum computation with postselection and its implementation using CTCs establish exactly the power of $\PP$ for unrestricted postselection CTCs \cite{aaronson2005postselection,brun2012perfect}. Both models generally induce nonlinear evolution.

\paragraph{Logical consistency and reversibility.}
Baumeler and Wolf \cite{baumeler2017noncausal,baumeler2018tameness} studied classical circuits with feedback under the requirement that, for each input, there is exactly one assignment of values to the wires consistent with the gates. They characterized the decision problems solvable in their polynomial-time non-causal circuit model as $\UP\cap\coUP$. They also obtained this class as an upper bound for classical deterministic processes, where consistency must hold under every choice of local operations. Baumeler et al.\ \cite{baumeler2019reversible} further showed that every valid deterministic classical process admits a reversible extension. They distinguished classical reversibility from validity under arbitrary local quantum operations, which requires a separate argument for a pure quantum process.

\paragraph{Experimental evidence and physical realizability.}
The physical realizability of general process matrix computation is unknown. Even for the standard quantum circuit model, constructing a scalable quantum computer remains an outstanding challenge. Nevertheless, several experiments support indefinite causal order between quantum operations. Photonic implementations of the quantum switch have been certified against classical mixtures of definite orders \cite{rubino2017experimental,goswami2018indefinite}, and its full process matrix has been reconstructed \cite{antesberger2024higher}. More recent experiments reported violations of Bell-like inequalities for causal order \cite{richter2026toward,guo2026experimental,qu2026experimental}. These conclusions rely on assumptions about the devices and their spacetime arrangement, and experimental loopholes may remain. The experiments concern the quantum switch and do not establish the physical realizability of general or even pure process matrices.

\section{Technical Overview} \label{section:technical_overview}
In this section we describe the main ideas behind our results. Before technical details, we recall how to formalize computation with process matrices.

\paragraph{How to compute with process matrices.}
In the introduction we described a process matrix as specifying quantum correlations between local quantum operations. To use such a process for efficient computation, we need a \emph{succinct} description of it. In our model, this description is an ordinary quantum circuit. For ease of presentation, we first consider unitary circuits, although the same approach applies to general quantum circuits.

Consider a unitary circuit $C$ acting on $(n+r)$ qubits. We call its first $n$ single-qubit input and output registers the \emph{feedback registers}, denoted $R_1,\ldots,R_n$, and denote the remaining registers by $Q_1,\ldots,Q_r$. For the moment, think of $C$ as a static diagram, rather than a computation being executed. Its gates and wires describe connections between the input and output registers. We will use this entire diagram to specify how local operations at the feedback registers are connected to one another and to the remaining registers. 

\paragraph{How to prove validity.}
At this point we can consider the interpretation in terms of $n$ laboratories. The circuit $C$ (or more precisely the pair $D := (n, C)$) defines the correlations between the laboratories. The PMF says that if $D$ is a PMG, then whatever operations performed in the laboratories, without causal order, the rest of the system will continue to behave like a quantum channel.

Let us explain it more carefully. Each laboratory performs an operation $\mathcal A_i$ at one of the feedback registers. $\mathcal A_i$ receives the \emph{output} of $R_i$ and returns a system to its \emph{input}. Furthermore allow each $\mathcal A_i$ to act also on an auxiliary input $E_i$ and produce an auxiliary output $E_i'$. Connect the output of each feedback register back to its input through $\mathcal A_i$, leaving $Q:=Q_1, \ldots , Q_r$ and all auxiliary registers external. The resulting transformation, denoted $\channel_{\mathcal A_1,\ldots,\mathcal A_n}$, takes an input on $Q, E_1, \ldots ,E_n$ and produces an output on $Q, E_1', \ldots , E_n'$. We require it to be a quantum channel for \emph{every} choice of local quantum channels and auxiliary registers. In particular, the inputs may be entangled across laboratories, with $Q$, or with an external reference. As said, with $D=(n,C)$ satisfies this requirement we call it \emph{process matrix generator} (PMG, Definition \ref{definition:process_matrix_generator}).

\paragraph{The generated channel.}
For the computation itself, we set all the local operations to the identity and use no auxiliary registers. Each feedback output is then connected directly to its corresponding input, leaving a transformation on $Q$. We call this the \emph{generated channel} and denote it by $\pmo_D$.

For a unitary circuit $C$, there is a simple expression for this connection (see Lemma \ref{lemma:simplified_isometry_pm} for formal statement). Writing $U_C$ for its unitary matrix and $R:=R_1, \ldots , R_n$, we have
$$
T_D:=\tr_R(U_C)
\;
,
\qquad
\pmo_D(\rho)=T_D\rho T_D^\dagger
\enspace .
$$
The partial trace here acts on the matrix $U_C$ itself: it connects the feedback input and output indices. It does \emph{not} mean executing $C$ and then discarding the feedback registers from the resulting state. For a unitary $C$, if $D$ is a valid PMG, then $T_D$ is unitary. In fact, a valid PMG with a unitary generating circuit is pure (Definition \ref{definition:pure_process_matrix_generator}): every choice of local operations that are guaranteed to be unitary, including operations on auxiliary registers, gives a global unitary transformation.

For a valid PMG $D$, the model gives us access to $\pmo_D$, with cost measured by the size of $C$. This is where the opportunity for a jump in computational power: a small generating circuit can describe a channel that may be difficult to implement by an ordinary quantum circuit. Our constructions therefore have two parts. We design a circuit whose feedback connections give the transformation we want, and then prove that the circuit remains valid for every choice of local operations.

\subsection{General process matrix computation and NP} \label{subsection:overview_np}
The main quantum-information result behind our $\NP$ algorithm is a construction that turns a circuit preparing a Choi state into a process matrix generator for the corresponding channel. We first explain this construction, and then show how to use it to solve $\NP$.

For an $n$-qubit unitary $V$, its Choi state is
$$
\ket{J_V}
:=
\frac{1}{\sqrt{2^n}}
\sum_{x\in\zo^n}\ket{x}\otimes V\ket{x}
\enspace .
$$
One can think of this state as a superposition of inputs together with their corresponding outputs.
The intuition is that a circuit $P_{V}$ that generates $\ket{J_V}$ gives a succinct description of the channel $V$, however, this description is "not accessible" -- it is unclear how to execute the action of $V$, using $P_{V}$. Here, we show that general process matrices allow this seemingly inaccessible information to be extracted. Our construction gives the following statement, presented here for unitaries.

\begin{theorem}[Informal; see Lemma \ref{lemma:from_choi_to_isometry}]
Given a quantum circuit preparing $\ket{J_V}$, there is a classical polynomial-time algorithm that constructs a PMG whose generated channel applies $V$.
\end{theorem}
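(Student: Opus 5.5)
The plan is to realize $V$ by gate teleportation through $\ket{J_V}$. The one step that cannot be done causally, the Pauli correction, will be absorbed into the feedback loop. Say the given circuit $P_V$ prepares $\ket{J_V}$ on registers $A,B$, each of $n$ qubits, with $A$ the ``input half'' and $B$ the ``output half''. We use the identity
$$
V\ket{\psi}_Q \;=\; 2^n\bigl(\bra{\Phi}_{QA}\otimes I_B\bigr)\bigl(\ket{\psi}_Q\otimes\ket{J_V}_{AB}\bigr),
$$
where $\ket{\Phi}$ is the maximally entangled state and the output is read on $B$. Ordinarily, implementing the cap $\bra{\Phi}_{QA}$ needs postselection with probability $4^{-n}$. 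We aim instead for a generating circuit whose feedback trace (Lemma \ref{lemma:simplified_isometry_pm} and Definition \ref{definition:process_matrix_generator}) yields this cap with the correct normalization, deterministically.

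\textbf{Construction.} Take $n$ feedback registers $R$. The generating circuit $C$ does the following:
\begin{enumerate}
\item run $P_V$ to prepare $\ket{J_V}_{AB}$;
\item perform a Bell-basis measurement on the external input $Q$ together with the feedback input $R$, obtaining classical outcomes $(a,b)\in\zo^n\times\zo^n$;
\item apply the Pauli correction $X^aZ^b$ to $A$ and emit $A$ as the feedback output $R$;
\item emit $B$ as the external output $Q$.
\end{enumerate}
$C$ is a polynomial-size ordinary quantum circuit, but it is not unitary because of the measurement. This matches the remark that the generated processes are purifiable but not pure.

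\textbf{Generated channel.} First I compute $\pmo_D$ by connecting each feedback output directly to its input. For each outcome $(a,b)$ the Kraus operator of the loop is a contraction of $\bra{\Phi_{ab}}_{QR}$ with $X^aZ^b$ acting on $A\to R$. Since $\bra{\Phi_{ab}}=\bra{\Phi}(I\otimes (X^aZ^b)^\dagger)$ up to phase, the Paulis cancel. Every branch therefore gives the same operator $K_{ab}=c\,V$ (up to phase), acting $Q\to B$, for a constant $c$ fixed by the normalizations of $\ket{\Phi_{ab}}$ and $\ket{J_V}$. Summing $\sum_{ab}K_{ab}^\dagger K_{ab}=4^n|c|^2 I$ should give exactly $I$, so $\pmo_D(\rho)=V\rho V^\dagger$. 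For general isometries one replaces $\ket{J_V}$ by $2^{-n/2}\sum_x\ket{x}\otimes V\ket{x}$ and runs the same computation.

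\textbf{Validity, the expected main obstacle.} We must show that for every local channel $\mathcal A$ on the feedback wire, including auxiliary registers $E,E'$ entangled with anything, the connected map $\channel_{\mathcal A}$ is trace preserving; complete positivity then follows from linearity plus the Kraus form. My approach is to write $\mathcal A$ via its Kraus operators $\{L_k\}$ (acting $R\otimes E\to R\otimes E'$) and compute the loop contraction branch by branch. For outcome $(a,b)$ the contraction produces $\bra{\Phi}_{Q\,\cdot}$ applied to $(X^aZ^b)^{\dagger}$, then $L_k$, then $X^aZ^b$, all acting on the $A$-side. The result is a Pauli-conjugated copy of $L_k$ ``teleported'' back against $Q$. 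Summing the squared norms over $(a,b)$ implements a full Pauli twirl on the $R$-part of $\sum_k L_k^\dagger L_k$, and this twirl maps any operator to its normalized partial trace times identity. Trace preservation of $\mathcal A$, namely $\sum_k L_k^\dagger L_k=I$, then yields $\sum_{k,a,b}K^\dagger K=I$ on $Q\otimes E$.

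Carrying out this twirl computation carefully is the step I expect to be delicate. The order of $\mathcal A$'s input and output relative to the Bell measurement and the correction must be tracked precisely, and the argument must also cover inputs of $\mathcal A$ entangled with $Q$ or an external reference. If the direct Kraus calculation becomes messy, I would instead check the process matrix condition on the Choi operator $W$ of $C$. That is, verify that its reduced operator on the feedback registers is proportional to identity on the $R$-input side, or equivalently a no-signalling-to-the-past condition; this is the standard characterization of validity for a single party with external input and output.

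Finally, purifiability follows by making the measurement coherent: copy $(a,b)$ into fresh ancillas initialized to $\ket{0}$, apply the controlled Paulis, and discard those ancillas. The resulting larger circuit is unitary, and the same twirl argument shows it induces a unitary for all local unitaries.
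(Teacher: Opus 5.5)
Your proof is correct, and it takes a genuinely different route from the paper's. The one incorrect claim is in your closing purifiability remark, which this statement does not need.

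\textbf{Comparison with the paper's construction.} The paper uses $2n$ feedback qubits that carry only classical data. Its circuit $C$ measures the feedback input in the computational basis, obtaining $r$. It then Bell-measures the external input against the first half of $\ket{J_V}$, obtaining $j$, and writes $r\oplus j$ on the feedback output. The identity loop therefore forces $j=0^{2n}$. Validity reduces to the Kraus operators $\frac1d VP_j\otimes A_\ell^{r,r\oplus j}$, the relabeling $a=r\oplus j$, and the completeness relation of $\mathcal A$ for each fixed feedback input $a$.

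In your version, the half $A$ of $\ket{J_V}$ itself travels around the loop after the outcome-dependent Pauli and is Bell-measured against the external input. You need only $n$ feedback qubits, and the generated channel is just the cup--cap (teleportation) identity.

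\textbf{The step you flagged as delicate goes through.} Take outcome $(a,b)$ and a Kraus operator $L_k$ of an arbitrary joint channel on all feedback qubits and $E$. The connected Kraus operator is $\frac1d (V\otimes I_{E'})(P_{ab}^\dagger L_k P_{ab})^{T_R}$, where $T_R$ is the partial transpose on the feedback system produced by the contraction. So $\sum_{a,b}K^\dagger K$ twirls $\sum_k (L_k^{T_R})^\dagger L_k^{T_R}$, not $\sum_k L_k^\dagger L_k$ as you wrote. This does not matter, because the twirl only sees the partial trace over the feedback system, and
$\tr_R[(L^{T_R})^\dagger L^{T_R}]=\tr_R[L^\dagger L]=\sum_{u,v}(L^{u,v})^\dagger L^{u,v}$.
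You therefore get $\frac1d\, I\otimes\tr_R(\sum_k L_k^\dagger L_k)=I$. This gives the same joint-validity property as the lemma.

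The trade-off is this: you halve the feedback width, at the price of a partial-transpose and twirl identity. The paper keeps the loop classical, so branch separation and trace preservation are pure XOR bookkeeping.

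\textbf{The purifiability remark is false as stated.} In a unitary version of $C$, the registers on which $P_V$ prepares $\ket{J_V}$, together with its work qubits, become additional inputs $P'$. Definition \ref{definition:purifiable_process_matrix} requires validity for every state on them.

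Since the coherent $P_V$ is unitary, some input makes it output an arbitrary $\ket{\chi}_{AB}=\sum_z\ket{z}\otimes N\ket{z}$. Each of the $d^2$ branches of the identity loop, kept orthogonal by the outcome records, then contributes $d^{-1/2}N$. The total is $d\,N^\dagger N$, which equals $I$ only when $\sqrt d\,N$ is an isometry. For example, a product state $\ket{\chi}=\ket{x}\ket{\phi}$ gives $d\ket{x}\bra{x}$, which is not trace preserving even with identity local operations.

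This is exactly the failure of naive purification that the paper notes in its overview. The correct route to purifiability, for your construction as well as the paper's, is joint validity plus Proposition \ref{proposition:joint_validity_implies_purifiability}. Your twirl argument does supply that joint validity.
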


\paragraph{From a Choi state to a channel.}
Let $\ket{\psi}$ an $n$-qubit state and recall what happens when we use $\ket{J_V}$ for teleportation. For each $i\in[n]$, we perform a Bell-basis measurement on the $i$-th input qubit and the $i$-th qubit of the first half of $\ket{J_V}$. Together, these measurements give a uniformly random outcome $j\in\zo^{2n}$. The remaining register (what used to be the second part of the register containing $\ket{J_{V}}$) contains $V\cdot P_j\ket{\psi}$, where $P_j$ is the Pauli operator associated with $j$ and $P_{0^{2n}}=I$. Thus, the outcome $j=0^{2n}$ gives exactly what we want, but its probability is only $2^{-2n}$.

We construct a generating circuit $C$ with $2n$ single-qubit feedback input registers and $n$ external input qubits containing the state to which we want to apply $V$. Its outputs likewise consist of $2n$ feedback qubits and $n$ external qubits. The idea is to arrange the feedback connections so that only the teleportation outcome $j=0^{2n}$ exists in the generated channel.

The circuit first measures its feedback input registers in the computational basis, obtaining $r\in\zo^{2n}$. It then prepares $\ket{J_V}$ in fresh auxiliary registers and performs the teleportation procedure on the external input and the first half of $\ket{J_V}$, obtaining $j\in\zo^{2n}$. The circuit then swaps the second half of $\ket{J_V}$ into the external registers, which now contain $V\cdot P_j\ket{\psi}$, and sets the feedback output to the computational-basis state $\ket{r\oplus j}$. The remaining registers are discarded. When each feedback output is connected directly to its input, the two strings must agree. This implies
$$
r=r\oplus j
\quad \Longleftrightarrow \quad
j=0^{2n}
\enspace .
$$
If $j$ is nonzero, at least one feedback bit is flipped, so that branch cannot contribute to the connected transformation. If $j=0^{2n}$, every value of $r$ is allowed, and the operation on the external input is $V$.

The normalization also works out exactly. Each value of $r$ contributes the channel $\rho\mapsto V\rho V^\dagger$ with weight $2^{-2n}$, and there are $2^{2n}$ possible values of $r$. Adding these weights gives the desired channel, without any input-dependent normalization.

\paragraph{Why this is a valid process.}
So far, we have connected each feedback output directly to its input, in other words, we know that \emph{if} this is a valid process matrix, it will give us the channel we want. The main technical part is proving that $C$ correctly describes a quantum correlation in the PMF, namely, that its derived operator is a valid process matrix. Formally, we need to argue that the transformation must remain a quantum channel when arbitrary local operations are inserted into the feedback slots. Our construction actually satisfies a stronger condition. We can allow a \emph{single} quantum channel to act jointly on all the feedback qubits, together with arbitrary auxiliary registers. As we explain later, this stronger validity property also guarantees that our process is purifiable. Let $\mathcal A$ be an arbitrary quantum channel acting on the $2n$ feedback qubits, possibly together with auxiliary registers.

We do not go into detail about the full proof, but in a nutshell, explicitly measuring the feedback registers inside the generating circuit $C$ helps is the key to our validity proof.\footnote{We later discuss whether the central role of measurement in this proof points to an artificiality of the model, when generating circuits are allowed to be non-unitary.} The measurement is used in the following way: different measurement outcomes contribute as separate branches, rather than interfering with one another. In a branch with teleportation outcome $j$, the external input undergoes the unitary $V\cdot P_j$, while the inserted channel $\mathcal A$ determines the operation on the auxiliary registers.\footnote{Recall that $\mathcal A$ acts on both the feedback registers and its auxiliary registers. Once the feedback connections are closed, its auxiliary registers remain part of the external input and output.} When checking trace preservation, the unitary $V\cdot P_j$ cancels with its adjoint. We are therefore left with showing that the operations induced by $\mathcal A$ on the auxiliary registers, summed over all measurement branches with their corresponding weights, preserve the trace. We show this using the fact that $\mathcal A$ is a valid quantum channel and thus trace-preserving.

To see why the sum over branches preserves the trace, write the feedback output as $a=r\oplus j$. Summing over $(r,j)$ is the same as summing over all pairs $(r,a)$. For each fixed input $a$ to the inserted channel $\mathcal A$, we are therefore summing over all its possible returned values $r$ and all its branches. Since $\mathcal A$ is trace-preserving, the corresponding Kraus operators satisfy the completeness relation: the sum of each operator's adjoint times itself is the identity on the auxiliary input. Finally, summing over the $2^{2n}$ choices of $a$ cancels the teleportation factor $2^{-2n}$. This proves validity, including on entangled inputs. Lemma \ref{lemma:from_choi_to_isometry} gives the full calculation.

\paragraph{Using the construction to solve $\NP$.}
We first recall that the Valiant--Vazirani reduction \cite{valiant1985np} transforms a satisfiable Boolean circuit into a different formula with exactly one satisfying input, with inverse-polynomial probability. It also preserves non-satisfiability of formulas (see Theorem \ref{theorem:valiant_vazirani}). This suggests our approach to solving $\NP$, by solving $\SAT$: assume momentarily we could take care of unique-witness finding. Then, given a $\SAT$ formula, repeat the VV reduction polynomially many times independently, use our process matrix algorithm to search for a witness in each resulting instance, and verify every returned candidate witness against the VV-reduced formula. If our unique-witness finder works, then whenever \emph{any} reduced instance has a unique witness, our construction will find it. On the remaining instances, whatever channel is returned, an incorrect candidate cannot pass the final classical check.

We therefore focus on unique-witness search. We are given a Boolean circuit $v:\zo^n\rightarrow\zo$ with exactly one satisfying input $z$, and would like to find $z$. For simplicity, assume that $z\neq 0^n$; we can ensure this by adding one input bit and requiring it to be $1$ for verification to pass. Next, consider the unitary $S_z$ that exchanges $\ket{0^n}$ and $\ket{z}$, and fixes every other computational-basis state. If we had access to $S_z$, finding the witness would be immediate by applying it to $\ket{0^n}$. However, the direct way to construct a circuit for $S_z$ uses the very string $z$ we are trying to find. The theorem above gives us another approach: prepare its Choi state using only the verifier $v$, and turn this preparation into a PMG whose generated channel applies $S_z$.

\paragraph{Generating the Choi state of $S_z$.}
For $S_z$, the Choi state is
$$
\ket{J_{S_z}}
:=
\frac{1}{\sqrt{2^n}}
\left(
\sum_{x\notin\{0^n,z\}}\ket{x,x}
+
\ket{0^n,z}
+
\ket{z,0^n}
\right)
\enspace .
$$
Almost all terms have the simple form $\ket{x,x}$. Only two terms are different, and we can recognize when to produce them by checking $v(x)$.

Start with a uniform superposition over pairs $(x,b)\in\zo^n\times\zo$. If $x\neq 0^n$, $v(x)=0$, and $b=0$, write $(x,x)$ into an output register. If $v(x)=1$, write $(0^n,x)$ when $b=0$, and $(x,0^n)$ when $b=1$. Conceptually, we mark these branches as successful. There are overall $2^{n + 1}$ branches and $2^n$ successful branches: one for each $x\notin\{0^n,z\}$, and two for $x=z$. They produce exactly the terms of $\ket{J_{S_z}}$, with equal amplitudes.

\begin{center}
\begin{tabular}{c|c|c|c}
Condition on $x$ & $b$ & First output & Second output \\
\hline
$x\neq 0^n$ and $v(x)=0$ & $0$ & $x$ & $x$ \\
$v(x)=1$ & $0$ & $0^n$ & $x$ \\
$v(x)=1$ & $1$ & $x$ & $0^n$
\end{tabular}
\end{center}

Observe that, on the successful branches, we have the following normalized state:
$$
\frac{1}{\sqrt{2^n}}
\left(
\sum_{x\notin\{0^n,z\}}\ket{x,0}\otimes\ket{x,x}
+
\ket{z,0}\otimes\ket{0^n,z}
+
\ket{z,1}\otimes\ket{z,0^n}
\right)
\enspace .
$$
In each term, the first ket contains the original registers $(x,b)$, and the second contains the two output registers. The outputs contain the pairs we want, but they are still entangled with $(x,b)$. We therefore need to uncompute these original registers. The point is that on successful branches this is easy: the output tells us what $(x,b)$ was. If its first component is zero, then $b=0$ and its second component is $x$; if its second component is zero, then $b=1$ and its first component is $x$; otherwise, $b=0$ and both components equal $x$. We can therefore uncompute $(x,b)$ and the work registers, leaving the outputs in the state $\ket{J_{S_z}}$ on the successful branches. We showed how to uncompute information only once we are at the state of successful branches. However, the successful branches take exactly $1/2$, independently of $z$, and as we show in the body of the paper a constant number of amplitude-amplification operations makes the preparation exact iwth probability $1$; see Section \ref{subsection:unique_witness_search}.

Notice that preparing this state does not by itself give an efficient way to find $z$. Measuring it in the computational basis reveals $z$ only in the two exceptional terms, whose total probability is $2/2^n$. However, our general construction turns this state preparation into a PMG implementing $S_z$. This gives the unique-witness solver needed for the Valiant--Vazirani reduction, completing the algorithm for $\NP$.

\paragraph{Inverting a permutation.}
The same Choi state transformation from Lemma \ref{lemma:from_choi_to_isometry} also gives coherent permutation inversion. Suppose we are given a classical circuit computing a permutation $\pi:\zo^n\rightarrow\zo^n$, then we can efficiently prepare
$$
\frac{1}{\sqrt{2^n}}
\sum_{x\in\zo^n}\ket{\pi(x), x}
=
\frac{1}{\sqrt{2^n}}
\sum_{ y\in \zo^n } \ket{y, \pi^{-1}(y)}
=
\ket{J_{\pi^{-1}}}
$$
simply by evaluating $\pi$ in superposition followed by swapping the input and output registers. The construction above turns this state preparation into a PMG whose generated channel applies $\pi^{-1}$, including on superpositions and inputs entangled with other registers (see Section \ref{sec:permutations}).

\subsection{Purifiable versus pure processes} \label{section:overview_purifiable_vs_pure}
Every process obtained by our Choi-state construction from Section \ref{subsection:overview_np} is actually purifiable. Note that it is purifiable (Definition \ref{definition:purifiable_process}) but not pure (Definition \ref{definition:pure_process_matrix_generator}), because its generating circuit is not unitary. To establish purifiability, we need to show that the process has a pure extension, that is, a larger unitary transformation that remains a valid process under every choice of local operations, and that recovers the original process when we initialize its additional inputs and discard its additional outputs. As we show in Proposition \ref{proposition:joint_validity_implies_purifiability}, the reason for its purifiability is that validity holds even when a single channel $\mathcal A$ acts \emph{jointly} on all $2n$ feedback qubits -- this is a substantially stronger guarantee than validity holding against disjoint spacetime regions. We can therefore regard the entire process as a transformation that takes one quantum channel $\mathcal A$ and produces another, $\channel_{\mathcal A}$. We combine this fact with Theorem 1 of \cite{chiribella2008transforming} to establish the existence of a pure extension.

\paragraph{Is there an efficiency gap between purifiable and pure processes?}
There are two details to observe in what we've seen in this section so far. The first is that in the Choi-state transformation (Lemma \ref{lemma:from_choi_to_isometry}) and its derived constructions for $\NP$ and permutation inversion, measurements are instrumental to our proof. The measurements make the proof work, but in our construction they also make the process non-pure. The second thing to observe is that the purification argument above does not tell us how to find the purification \emph{efficiently}. This suggests a possible gap that does not exist in ordinary quantum computation. Specifically, finding a purification of a given quantum circuit is always easy, and can be done by a classical deterministic polynomial-time algorithm. Here, we do not know whether that is the case. Concretely: given a purifiable PMG that admits a polynomial-size pure PMG extension, is there a polynomial-time procedure, classical or quantum, that finds such an extension? Note, for example, that a straightforward purification procedure applied to the generating circuit from Lemma \ref{lemma:from_choi_to_isometry} gives a unitary quantum circuit, but the resulting circuit is no a PMG anymore: on some inputs to the additional registers introduced by purification, the feedback connections induce a transformation that is not trace-preserving. 

A different question concerns physical realizability, which we are less equipped to answer. Here we are asking about efficient physical realization, rather than existence alone: in a theory that assumes pure processes are physical and have computational cost proportional to the generating circuit's size, is the same true for purifiable processes that are not pure? Concretely, in such a theory, assume we have a purifiable but non-pure PMG. Can one efficiently obtain access to its generated channel without first finding its purification? This question is currently difficult to answer, as we do not even know whether all pure processes are physical.

\vspace{1mm}
\begin{remark}[An example of a pure extension that uses the witness]
    Consider our unique-witness construction. We only need the Boolean verifier $v$ for it and the witness $z$ is of course unknown. From $v$, we efficiently construct a circuit $P_V$ that prepares $\ket{J_{S_z}}$, and use $P_V$ inside our PMG $C$. This circuit prepares the Choi state, performs the teleportation measurements, and sets the feedback outputs according to their outcomes. None of these steps requires knowing $z$, yet the generated channel (after tracing out the feedback registers) applies $S_z$ and therefore computes $z$ on input $\ket{0^n}$.

    Proposition \ref{proposition:joint_validity_implies_purifiability} guarantees that this PMG has a pure extension. So, what happens if we apply this argument to our concrete PMG arising solving unique witness search? For this particular construction, the extension can be chosen to have a polynomial-size generating circuit that uses $S_z$ once, together with elementary unitary operations. We leave verification of this construction to the reader. However, to write down the circuit implementing $S_z$, our construction uses $z$ explicitly: its gates implement the exchange of $\ket{0^n}$ and $\ket{z}$. Thus, given $z$, we can efficiently construct and execute the unitary generating circuit for the pure extension. But the whole point of the model is to find $z$ given $v$. The original generator is efficiently constructed from $v$ alone, whereas constructing this particular description of its pure extension first requires the very witness we are trying to find.

    This proves that a polynomial-size pure extension exists for every unique-witness instance. It does not give an efficient procedure for finding that extension from $v$, and whether a different construction can do so remains open. The purification postulate requires existence; it does not impose this computational requirement. In our next result, we construct the pure generators themselves explicitly and efficiently from the input.
\end{remark}

\subsection{Pure processes and unique-sink orientation}
\label{subsection:overview_uso}
We now construct the unitary generating circuit directly from the input, and prove that it describes a pure process. Our goal is to solve unique-sink orientation ($\USO$), defined formally in \ref{definition:uso}. We will approach this through permutation inversion again, however the technical chronology here will be different. We'll observe that from a relatively simple idea, the description of $\USO$ arises naturally. As it turns out, the condition that makes our validity proof work is exactly a characterization of unique-sink orientations.

\paragraph{Idea: try to invert a permutation unitarily.}
Suppose we are given a classical circuit computing a permutation $\pi:\zo^n\rightarrow\zo^n$, and would like to apply its inverse. Here's an idea for inverting it -- first just thinking of the generated channel, without thinking whether circuit is a valid PMG. Define
$$
h\left( x \right):=x\oplus\pi\left( x \right)
\enspace .
$$
We construct a unitary circuit $C$ on $2n$ qubits, with $n$ single-qubit feedback registers $R_1,\ldots,R_n$, collectively denoted by $R$, and an $n$-qubit external register $Q$.

Let us first describe the ordinary computation of $C$. On input $\ket{x}_R\otimes\ket{y}_Q$, compute $h\left( x \right)$ reversibly into fresh work registers, and add it bitwise modulo $2$ to $Q$. The external register now contains $y\oplus h\left( x \right)$. Since $R$ still contains $x$, we can reverse the computation of $h$ and return the work registers to zero. Finally, swap $R$ and $Q$. The resulting unitary acts as
$$
U_C\left( \ket{x}_R\otimes\ket{y}_Q \right)
=
\ket{y\oplus h\left( x \right)}_R\otimes\ket{x}_Q
\enspace .
$$
This circuit is efficiently constructed from the circuit for $\pi$. In particular, all we used was the ability to evaluate $\pi$ in the forward direction.

Now connect each feedback output directly to its input. Recall that this connection takes the partial trace of the matrix $U_C$ (see our simplification Lemma \ref{lemma:simplified_isometry_pm}). A term contributes only when the feedback input $x$ equals the feedback output $y\oplus h\left( x \right)$. By our choice of $h$,
$$
x=y\oplus h\left( x \right)
\quad \Longleftrightarrow \quad
y=\pi\left( x \right)
\enspace .
$$
For each external input $y$, exactly one string $x$ satisfies this equation, namely $x=\pi^{-1}\left( y \right)$. This is also the string placed in the external output. Therefore,
$$
\tr_R\left( U_C \right)\ket{y}
=
\ket{\pi^{-1}\left( y \right)}
\enspace .
$$
By linearity, the same transformation inverts $\pi$ coherently on superpositions.

So far, this looks like a way to invert any efficiently computable permutation using a pure process. But we have not yet proved that our circuit describes a valid quantum correlation between operations, namely, whether it is a PMG. We only know that the identity feedback connections give a unitary transformation. We still need to show that arbitrary local operations can be inserted into these connections without making the resulting transformation invalid. This is where we will need a restriction on $\pi$.

\paragraph{What makes the circuit describe a valid quantum correlation?}
We begin with local unitaries that have no auxiliary registers. Let $V_i$ be an arbitrary single-qubit unitary inserted into feedback pair $i$, and denote the feedback-connected transformation by $T_{\mathbf V}$, that is:
$$
T_{\mathbf V}
:=
\tr_R
\left[
\left(
\left( \bigotimes_{i=1}^{n}V_i \right)\otimes I_{\Hilb_Q}
\right)
U_C
\right]
\enspace .
$$
Our goal is to show that $T_{\mathbf V}$ is unitary for every choice of the $V_i$.

Fix a feedback input string $x$. Before the swap, the circuit applies $X^{h_i\left( x \right)}$ to qubit $i$ of $Q$: it either leaves that qubit alone or flips it. After the swap, that qubit goes through $V_i$ and must return to feedback input $x_i$. Its contribution is therefore the following row in the $2 \times 2$ matrix $V_i \cdot X^{h_i\left( x \right)}$,
$$
\bra{x_i}V_i \cdot X^{h_i\left( x \right)}
\enspace .
$$
Since the circuit places $x$ in the external output, the entire row of $T_{\mathbf V}$ corresponding to output $x$ is
$$
\bra{x}T_{\mathbf V}
=
\bigotimes_{i=1}^{n}
\bra{x_i}V_i \cdot X^{h_i\left( x \right)}
\enspace .
$$
Each factor is a row of a unitary matrix, so every full row has norm one. What remains is to show that different rows are orthogonal.

Take two strings $x\neq x'$. Because their rows are tensor products, it suffices to find one coordinate where the corresponding factors are orthogonal. Suppose that at some coordinate $i$,
$$
x_i\neq x_i'
\quad \text{and} \quad
h_i\left( x \right) = h_i\left( x' \right)
\enspace .
$$
At this coordinate, the circuit applies the same bit flip for both strings. We are therefore looking at different rows of the \emph{same} unitary matrix $V_i \cdot X^{h_i\left( x \right)}$. These small rows are thus orthogonal, regardless of the choice of $V_i$. Consequently, the two full rows, which are in turn obtained by tensoring the small rows, are orthogonal as well.

This gives us the following sufficient condition:
$$
x\neq x'
\quad \Longrightarrow \quad
\exists i\in[n]:
\quad
x_i\neq x_i'
\ \text{ and }\
h_i\left( x \right)=h_i\left( x' \right)
\enspace .
$$
In words, for every distinct strings, look at the coordinates where they differ -- on one of these coordinates the circuit $h(\cdot)$ has to output the same bit on both inputs. That coordinate preserves orthogonality even after an arbitrary local unitary is inserted.

\paragraph{Where unique-sink orientation enters.}
Guided by the above condition on permutations, we look to the literature, and find that it has a combinatorial meaning. Recall that a unique-sink orientation \cite{szabo2001unique} directs the edges of the Boolean cube so that every face, including the entire cube, has exactly one sink. Its outmap $s:\zo^n\rightarrow\zo^n$ records the outgoing edges: $s_i\left( x \right)=1$ if the edge in direction $i$ is going from $x$ to the outside\footnote{Formally, from $x$ to $x \oplus e_{i}$, where $e_{i} \in \zo^n$ is the vector with all zeros except a single $1$ in coordinate $i$.}, and $s_i\left( x \right)=0$ otherwise. The input to our algorithm is a Boolean circuit computing $s$, and the task is to find the vertex $z$ with $s\left( z \right)=0^n$.

A characterization of USOs states that an outmap $s$ describes a unique-sink orientation if and only if
$$
x\neq x'
\quad\Longrightarrow\quad
\exists i\in[n]:
\quad
x_i\neq x_i'
\ \text{ and }\
s_i\left( x \right)\neq s_i\left( x' \right)
\enspace ;
$$
see \cite{szabo2001unique} and Lemma our \ref{lem:uso-characterization}. In particular, this implies that $s$ is a permutation: distinct inputs always have distinct outputs.

We can therefore set $\pi=s$ in our construction. To see why this gives the condition needed for validity, recall that $h\left( x \right)=x\oplus s\left( x \right)$. At the coordinate supplied by the USO characterization, both $x_i$ and $s_i\left( x \right)$ change when we move from $x$ to $x'$, so their XOR stays the same. Thus,
$$
h_i\left( x \right)=h_i\left( x' \right)
\enspace .
$$
Conversely, when $x_i\neq x_i'$, equality of these XORs implies $s_i\left( x \right)\neq s_i\left( x' \right)$. The USO characterization is therefore exactly the condition we obtained from the unitarity calculation.

\paragraph{Including auxiliary registers.}
To finish proving purity, we must allow each $V_i$ to act jointly on its feedback qubit \emph{and} an arbitrary auxiliary register $E_i$. The inputs to these registers may also be entangled. We explain why the same orthogonality argument continues to work, by generalizing vectors to operators.

For a bit $a$, write
$$
B_{i,a}:=
V_i\left( X^a\otimes I_{\Hilb_{E_i}} \right)
\enspace .
$$
Previously, fixing the input and output bits of the feedback qubit gave a scalar. Now it gives an operator on $E_i$. Denote this operator by $B_{i,a}^{u,v}$, where $v$ is the feedback input bit and $u$ is the feedback output bit. Unitarity of $B_{i,a}$ gives
$$
\sum_{v\in\zo}
B_{i,a}^{u,v}
\left( B_{i,a}^{u',v} \right)^\dagger
=
\begin{cases}
I_{\Hilb_{E_i}}, & u=u' \;,\\
0, & u\neq u' \;.
\end{cases}
$$
This is the same statement about rows as before: each row has unit norm and different rows are orthogonal, except that the entries are now operators.

We use this identity to check that $T_{\mathbf V}\cdot T_{\mathbf V}^\dagger=I_{\Hilb_{Q,E}}$ by a block-wise analysis. Fix two external output strings $x,x'$ for the $Q$ register, and consider the corresponding block of $T_{\mathbf V}\cdot T_{\mathbf V}^\dagger$, which is an operator on $E$. If $x=x'$, every coordinate contributes the identity on $E_i$. Otherwise, the USO condition gives a coordinate $i$ with $x_i\neq x_i'$ and $h_i\left( x \right)=h_i\left( x' \right)$. At that coordinate we multiply different block rows of the same unitary $B_{i,h_i\left( x \right)}$, obtaining zero. The entire off-diagonal block is therefore zero. This proves the unitarity condition from Definition \ref{definition:pure_process_matrix_generator}, and hence our circuit is a pure PMG.

This concludes the pure construction for $\USO$: given the circuit for a USO outmap $s$, construct $C$ as above and use its generated channel, whose unitary acts as
$$
T_s\ket{y}
=
\ket{s^{-1}\left( y \right)}
\enspace .
$$
On input $\ket{0^n}$, this gives the unique sink with probability $1$. The generating circuit is constructed in classical polynomial time from the circuit for $s$, and the algorithm uses its generated channel once; see Theorem \ref{thm:uso}.

\paragraph{Beyond permutation inversion.}
Looking back at the proof, the bit flips were not essential. At each coordinate, what mattered was that equal values of $h_i$ selected the same unitary. We can therefore replace the choices $I,X$ by any two single-qubit unitaries $U_{i,0},U_{i,1}$, given by unitary circuits. The generating circuit computes $h$, applies the selected gates to $Q$ (that is, we execute $U_{i, h_{i}(x)}$ on register $Q_{i}$), uncomputes $h$, and swaps the registers, just as before. The same validity argument applies.

More explicitly, define
$$
U\left( x \right)
:=
\bigotimes_{i=1}^{n}U_{i,h_i\left( x \right)}
\qquad\text{and}\qquad
\ket{b_x}:=U\left( x \right)^\dagger\ket{x}
\enspace .
$$
The condition on $h$ guarantees that the vectors $\ket{b_x}$ form an orthonormal basis, and the generated unitary satisfies
$$
T\ket{b_x}=\ket{x}
\enspace .
$$
This describes our overall general result, given in Theorem \ref{theorem:decomposable_unitary_generation}. The permutation construction is the case $U_{i,0}=I$, $U_{i,1}=X$, where $\ket{b_x}=\ket{s\left( x \right)}$.

\section{Preliminaries} \label{section:preliminaries}
We use the following standard notions and notations from classical and quantum computation in this work.
\begin{itemize}
    \item For $n\in\Nat$, define $[n]:=\{1,2,\cdots,n\}$. The set of all finite binary strings is $\{0,1\}^{*}:=\bigcup_{n\geq0}\{0,1\}^{n}$.
    
    \item Unless stated otherwise, $\log(\cdot)$ is the base-$2$ logarithm, $\norm{\cdot}$ is the Euclidean norm of a vector, and $\oplus$ is bitwise addition modulo $2$ (aka, bitwise XOR).
    
    \item For $n\geq0$, the $n$-qubit state space is $\states(n):=\bbC^{2^n}$. In particular, it is known that $\states\left( n \right)$ is a $2^{n}$-dimensional Hilbert space. A register is a copy of such a space with its specified computational basis. Registers of the same dimension may represent different systems.
    
    \item A pure quantum state is a unit vector $\ket{\psi}$ in $\states(n)$. A mixed quantum state is a matrix $\rho\in\bbC^{2^n\times2^n}$ which is positive semidefinite and satisfies $\tr(\rho)=1$. Positive semidefinite means that $\bra{v}\rho\ket{v}\geq0$ for every vector $\ket{v}$.
    
    \item 
    For $n, m \in \Nat$, an $n$ to $m$ (general) quantum circuit is a quantum circuit $C$ with input an $n$-qubit mixed state, can perform any quantum gate from the fixed set of single-qubit gates $\{H,T,T^{\dagger}\}$, the two-qubit quantum gates $\{ CNOT, SWAP \}$ and three-qubit Toffoli gate $Tof$, execute measurement gates and add ancillary qubit registers containing $\ket{0}$, where each of the above operations costs $1$ time unit. At the end of the execution, $C$ outputs an $m$-qubit mixed state.

    \item 
    For $n \in \Nat$, an $n$-qubit unitary circuit is an $n$ to $n$ quantum circuit such that there exists a unitary matrix $U_{C} \in \bbC^{ 2^{n} \times 2^{n} }$ such that for every $n$-qubit pure state $\ket{\psi}$ the action of $C$ corresponds to the action of the unitary, that is, $C\left( \ket{\psi}\bra{\psi} \right) = U_{C} \cdot \ket{\psi}\bra{\psi} \cdot U_{C}^{\dagger}$.
\end{itemize}

\paragraph{Circuit conventions.}
The single-qubit gates in the fixed set are
$$
H:=\frac{1}{\sqrt{2}}\begin{pmatrix}1&1\\1&-1\end{pmatrix},
\qquad
T:=\begin{pmatrix}1&0\\0&e^{i\pi/4}\end{pmatrix},
$$
and $T^{\dagger}$ is the adjoint of $T$. Measurements are in the computational basis. General quantum circuits may also discard registers and use measurement outcomes for classical control. A unitary circuit may use additional work registers, provided these registers are returned to $\ket{0}$ on every input to the stated unitary. The circuit description includes its register sizes and gate count, type and position, which sums up to $|C|$. Input and output registers with zero qubits are allowed, with $\states(0)=\bbC$ and empty tensor products equal to the one-dimensional identity. The definitions below use the same conventions for zero-qubit registers.

\subsection{Computational Complexity Theory}

\begin{definition}[Promise Problem and Language] \label{definition:language_and_promise_problem}
A promise problem $\prod:=\left( \YES, \NO \right)$ is a pair of disjoint sets $\YES,\NO\subseteq\{0,1\}^{*}$. It is a decision problem when $\YES\cup\NO=\{0,1\}^{*}$, and in this case its language is $\lang:=\YES$.
\end{definition}

\begin{definition}[Search Problem]
A search problem is specified by a relation $\Rel\subseteq\{0,1\}^{*}\times\{0,1\}^{*}$ and a promised input set $\mathcal{P}$. An algorithm solves the problem with probability $p$ if, on every $x\in\mathcal{P}$, it outputs a string $z$ satisfying $(x,z)\in\Rel$ with probability at least $p$. All relations used below have polynomially bounded output length.
\end{definition}

\begin{definition}[The Complexity Class $\BQP$ and Promise $\BQP$]
The class $\BQP$ consists of languages decidable by a uniform polynomial-time quantum algorithm with probability at least $1 - 2^{-|x|}$ on every input. The analogous promise class requires correctness only on $\YES\cup\NO$.
\end{definition}

\begin{definition}[The Complexity Class $\UP$]
A promise problem $\prod:=\left( \YES, \NO \right)$ is in $\UP$ if there is a deterministic polynomial-time verifier $V(x,z)$ and a polynomial $p$ such that, for every $x \in \YES$ there is a single $z \in \{0,1\}^{p(|x|)}$ such that $V(x,z)=1$ and for $x \in \NO$ there is no $z \in \{ 0, 1 \}^{*}$ such that $V(x,z)=1$.
\end{definition}

\begin{definition}[The Complexity Class $\coUP$]
A promise problem $\prod:=\left( \YES, \NO \right)$ is in $\coUP$ if there is a deterministic polynomial-time verifier $V(x,z)$ and a polynomial $p$ such that, for every $x \in \YES$ it follows that for every $z \in \{0,1\}^{p(|x|)}$ we have $V(x,z) = 1$ and for $x \in \NO$ there is exactly one $z \in \{ 0, 1 \}^{p(|x|)}$ such that $V(x,z)=0$.
\end{definition}

\begin{definition}[The Complexity Class $\TFUP$]
$\TFUP$ is the class of search problems where the relations are polynomial-time verifiable, has polynomially bounded witnesses, and has exactly one witness on every input.
\end{definition}

\paragraph{The Satisfiability Problem $\SAT$.}
We define the satisfiability problem $\SAT$, which is known to be $\NP$-complete \cite{karp2010reducibility}. Recall that $\SAT$ is a decision problem and in the below definition, the input is an arbitrary binary string in $\{ 0, 1 \}^{*}$, which is then interpreted as a CNF formula $\varphi$. In case the input is not in the correct format to be interpreted as a CNF formula (which can be checked by a classical efficient algorithm) the input is counted as a $\NO$ instance.

\begin{definition} [The Satisfiability Problem] \label{definition:SAT}
    The Satisfiability problem, denoted $\SAT$ is a decision problem, where the input is a CNF formula $\varphi$ with $n$ input Boolean variables. The problem is defined as follows.
    \begin{itemize}
        \item
        $\YES$ is the set of all (binary encodings of) CNF formulas that are satisfiable.

        \item
        $\NO$ is $\{ 0, 1 \}^{*} \setminus \YES$.
    \end{itemize}
\end{definition}

\paragraph{The Valiant-Vazirani Theorem.}
We use the Valiant-Vazirani Theorem \cite{valiant1985np}, which can be stated as follows.

\begin{theorem} [Valiant-Vazirani Theorem] \label{theorem:valiant_vazirani}
    There exists a classical probabilistic polynomial-time Turing machine $T_{VV}$ that gets as input a CNF formula $\varphi$ that has input $n$ Boolean variables, and outputs a CNF formula $\varphi^{*}$ that has polynomially many (in $n$) Boolean variables, with the following guarantees.
    \begin{itemize}
        \item
        If $\varphi \notin \SAT$ then $\Pr_{ \varphi^{*} \gets T_{VV}\left( \varphi \right) }\left[ 
\varphi^{*} \notin \SAT \right] = 1$.

        \item
        If $\varphi \in \SAT$ then $\Pr_{ \varphi^{*} \gets T_{VV}\left( \varphi \right) }\left[ 
\varphi^{*} \text{ has exactly one satisfying assignment } \right] \geq \Omega\left( \frac{1}{n} \right)$.
    \end{itemize}
\end{theorem}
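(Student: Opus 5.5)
The statement is the classical Valiant--Vazirani isolation lemma. I would prove it with a random system of affine parity constraints, which isolates a solution with constant probability once its size is matched to the solution count, and then encode the constraints back into CNF without changing the number of satisfying assignments.

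\textbf{The reduction.} Let $S\subseteq\zo^n$ be the set of satisfying assignments of $\varphi$. The machine $T_{VV}$ does the following.
\begin{enumerate}
    \item It picks $k\in\{2,\ldots,n+1\}$ uniformly at random.
    \item It picks a uniformly random matrix $A\in\zo^{k\times n}$ and a uniformly random vector $b\in\zo^k$.
    \item It outputs a CNF $\varphi^*$ whose satisfying assignments correspond bijectively to $\{x\in S: Ax\oplus b=0^k\}$.
\end{enumerate}
The affine family $h(x)=Ax\oplus b$ is pairwise independent. For $x\neq x'$, the pair $(h(x),h(x'))$ is uniform on $\zo^k\times\zo^k$, because $b$ randomizes $h(x)$ and $A(x\oplus x')$ is uniform and independent of it.

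\textbf{The NO case.} Every satisfying assignment of $\varphi^*$ restricts to a satisfying assignment of $\varphi$. So if $\varphi\notin\SAT$, then $\varphi^*\notin\SAT$ with probability $1$.

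\textbf{The YES case.} Suppose $|S|\ge1$. With probability $1/n$ the guess $k$ satisfies $2^{k-2}\le|S|<2^{k-1}$; this range of $k$ covers $1\le|S|\le 2^n$. Fix such a $k$ and set $p=2^{-k}$, so $p|S|\in[1/4,1/2)$. For each $x\in S$, inclusion–exclusion and pairwise independence give
$$
\Pr\bigl[h(x)=0,\ \forall x'\in S\setminus\{x\}: h(x')\neq0\bigr]\ \ge\ p-(|S|-1)p^2\ \ge\ p\,(1-p|S|)\ \ge\ p/2 \enspace .
$$
These events are disjoint over $x\in S$. Summing them, the probability of exactly one survivor is at least $|S|p/2\ge1/8$. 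Combined with the guess of $k$, the overall success probability is at least $1/(8n)=\Omega(1/n)$.

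\textbf{The CNF encoding (main obstacle).} The step needing care is that $\varphi^*$ must be a CNF whose satisfying assignments correspond \emph{bijectively} to those of $\varphi\wedge(Ax\oplus b=0)$. Otherwise uniqueness could be destroyed by multiple settings of auxiliary variables.

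I would encode each row $\bigoplus_{j:A_{ij}=1}x_j=b_i$ as a chain with fresh variables $y_{i,1},y_{i,2},\ldots$. Each step of the chain adds the constraint $y_{i,t}=y_{i,t-1}\oplus x_{j_t}$, written as four clauses on three variables. A final unit clause forces the last chain variable to equal $b_i$. The auxiliary values are functionally determined by $x$, so solution counts are preserved exactly.

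The total size is $O(nk)=O(n^2)$ new variables and clauses, so $\varphi^*$ has polynomially many variables and $T_{VV}$ runs in polynomial time.
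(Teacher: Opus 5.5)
The paper does not prove this statement: it quotes the Valiant--Vazirani theorem from \cite{valiant1985np} and uses it as a black box in the proof of Theorem~\ref{thm:np}. Your argument is the standard proof of that result. It uses a pairwise-independent affine hash $x\mapsto Ax\oplus b$, a random guess of the scale $k$, a union-bound isolation estimate, and a parsimonious CNF encoding of the parity constraints. It is correct up to one off-by-one slip and one detail in the encoding.

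\textbf{The off-by-one.} With $k\in\{2,\ldots,n+1\}$ and the half-open condition $2^{k-2}\le|S|<2^{k-1}$, the intervals cover only $1\le|S|\le 2^n-1$. So the case where every assignment satisfies $\varphi$ ($|S|=2^n$) gets no good $k$, contrary to your claim that the range covers $1\le|S|\le 2^n$. The fix is to use the closed condition $2^{k-2}\le|S|\le 2^{k-1}$. Your estimate only needs $p|S|\in[1/4,1/2]$: at $p|S|=1/2$ you still have $1-p|S|\ge 1/2$, so each $x\in S$ is isolated with probability at least $p/2$, and the total is at least $|S|p/2\ge 1/8$. The value $k=n+1$ then handles $|S|=2^n$. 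At powers of two two values of $k$ may qualify, which only helps, so the probability of a good $k$ is still at least $1/n$.

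\textbf{Anchoring the chains.} In the CNF encoding, say explicitly how each chain starts, for instance with a fresh $y_{i,0}$ and a unit clause forcing $y_{i,0}=0$. This matters for rows of $A$ that are identically zero. There the constraint reads $0=b_i$, and when $b_i=1$ the clause set must become unsatisfiable rather than the row being silently dropped. With the anchor, this case is handled automatically. The auxiliary values remain functionally determined by $x$, so the solution count is still preserved exactly, as you argued.
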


\subsection{Quantum Information Theory}
We define linear operators over Hilbert spaces. In the scope of this work we will always define Hilbert spaces as complex vector spaces, for different amounts of qubits\footnote{More precisely, we think of Hilbert spaces (with the notation $\Hilb_{A}$) as a variable that takes a value $\states\left( n \right)$ for some $n \in \Nat$. This in particular means that we can have two different Hilbert spaces that capture the same spaces, but describe the states of non-identical physical quantum systems.}.

\begin{definition} [Set of linear operators between two vector spaces]
    Let $A$, $B$ two complex vector spaces. The set $\Lin\left( A, B \right)$ is defined as the set of all linear transformations from $A$ to $B$.
    \begin{itemize}
        \item 
        In case $A = B$, we write $\Lin\left( A \right)$ for short.
    
        \item
        The identity map on $A$ is denoted by $I_{ \Lin\left( A \right) }$.

        \item 
        In case the sets are Hilbert spaces (and in our case, state spaces), then the set of linear operators is the set of appropriate matrices. Formally, if there are $n, m \in \Nat$ such that $\Hilb_{0} = \states\left( n \right)$, $\Hilb_{1} = \states\left( m \right)$, then $\Lin\left( \Hilb_{0}, \Hilb_{1} \right) = \bbC^{ 2^{m} \times 2^{n} }$, and the action of a linear transformation is vector-matrix multiplication, where the matrix is the left multiplicand. In that case we denote $I_{ \Lin\left( A \right) } := I_{ \Hilb_{0} } := I_{n}$ and $I_{ \Lin\left( B \right) } := I_{ \Hilb_{1} } := I_{m}$.
    \end{itemize}
\end{definition}

\begin{definition} [Unitary matrix and isometry] \label{definition:isometry}
    Let $\Hilb_0=\states(n)$ and $\Hilb_1=\states(m)$. A matrix $U\in\Lin(\Hilb_0)$ is unitary if $U^{\dagger}U=UU^{\dagger}=I_{\Lin(\Hilb_0)}$. A matrix $V\in\Lin(\Hilb_0,\Hilb_1)$ is an isometry if $V^{\dagger}V=I_{\Lin(\Hilb_0)}$; here the input and output spaces need not have the same dimension. We write $\Ad_V$ for the linear map $\Ad_V(X):=VXV^{\dagger}$.
\end{definition}

An operator which we will use extensively in this work is taking the partial trace of a square matrix.

\begin{definition} [Partial trace of an operator] \label{definition:partial_trace}
    Let $n, m \in \Nat$ and a linear operator $M \in \Lin\left( \Hilb_{0} \otimes \Hilb_{1} \right)$ for $\Hilb_{0} = \states\left( n \right)$, $\Hilb_{1} = \states\left( m \right)$, that is, $M \in \bbC^{ 2^{m + n} \times 2^{m + n} }$ is a square matrix. The partial trace of $M$ with respect to $\Hilb_{0}$ is defined as the following matrix in $\bbC^{2^m \times 2^m}$:
    $$
    \tr_{\Hilb_{0}}\left( M \right)
    :=
    \sum_{ x \in \{ 0, 1 \}^{n} }
    \left( \bra{x} \otimes I_{\Lin\left( \Hilb_{1} \right)} \right)
    \cdot M \cdot
    \left( \ket{x} \otimes I_{\Lin\left( \Hilb_{1} \right)} \right)
    \enspace .
    $$
    We can also write $\tr_{n}\left( M \right) := \tr_{\Hilb_{0}}\left( M \right)$.
\end{definition}

\begin{definition} [Partial trace with different external input and output spaces] \label{definition:rectangular_partial_trace}
    Let $\Hilb_R=\states(n)$, $\Hilb_P=\states(r)$, $\Hilb_F=\states(\ell)$, and let $M\in\Lin(\Hilb_R\otimes\Hilb_P,\Hilb_R\otimes\Hilb_F)$. The partial trace over the two copies of $\Hilb_R$ is the matrix
    $$
    \tr_R(M)
    :=
    \sum_{x\in\{0,1\}^{n}}
    \left(\bra{x}\otimes I_{\Lin(\Hilb_F)}\right)
    M
    \left(\ket{x}\otimes I_{\Lin(\Hilb_P)}\right)
    \enspace .
    $$
    This extends Definition \ref{definition:partial_trace}; when $\Hilb_P=\Hilb_F$, the two definitions agree.
\end{definition}

\begin{definition} [Completely positive and trace-preserving maps] \label{definition:cptp}
    Let $\channel\in\Lin(\Lin(\Hilb_0),\Lin(\Hilb_1))$. We say that $\channel$ is completely positive if, for every additional finite-dimensional register $Z$ and every positive semidefinite matrix $X\in\Lin(\Hilb_0\otimes\Hilb_Z)$, the matrix $(\channel\otimes\id_Z)(X)$ is positive semidefinite. Here $\id_Z$ is the identity map on $\Lin(\Hilb_Z)$. We say that $\channel$ is trace-preserving if $\tr(\channel(X))=\tr(X)$ for every $X\in\Lin(\Hilb_0)$. A map satisfying both conditions is called CPTP.
\end{definition}

\begin{definition} [Quantum channel] \label{definition:quantum_channel}
    Let $\Hilb_{0}$, $\Hilb_{1}$ Hilbert spaces and let $\channel \in \Lin\left( \Lin\left( \Hilb_{0} \right) , \Lin\left( \Hilb_{1} \right) \right)$. We say that $\channel$ is a quantum channel from $\Hilb_{0}$ to $\Hilb_{1}$ if it is a completely positive, trace-preserving (linear) map from $\Lin\left( \Hilb_{0} \right)$ to $\Lin\left( \Hilb_{1} \right)$. In case $\Hilb_{0} = \states\left( n \right)$, $\Hilb_{1} = \states\left( m \right)$ for some $n, m \in \Nat$, we say that $\channel$ is a channel from $n$ to $m$ qubits, or an $n$-to-$m$ quantum channel.
\end{definition}

Throughout this work we use the Choi-Jamiolkowski isomorphism (or CJ isomorphism for short), which for Hilbert spaces $\Hilb_{0}$, $\Hilb_{1}$, induces a bijection between
$$
\Lin\left( \Lin\left( \Hilb_{0} \right), \Lin\left( \Hilb_{1} \right) \right)
\enspace ,
$$
which is the set of linear transformations from $\Lin\left( \Hilb_{0} \right)$ to $\Lin\left( \Hilb_{1} \right)$, and
$$
\Lin\left( \Hilb_{0} \otimes \Hilb_{1} \right)
\enspace ,
$$
which is the set of linear transformations from $\Hilb_{0} \otimes \Hilb_{1}$ to itself.
The CJ isomorphism in particular induces an isomorphism between (1) $n$-to-$m$ quantum channels and (2) a special set of matrices in $\bbC^{2^{n + m} \times 2^{n + m}}$. The CJ isomorphism is defined as follows.

\begin{definition} [Choi-Jamiolkowski operator of a linear operator] \label{definition:cj_isomorphism}
    Let $\channel \in \Lin\left( \Lin\left( \Hilb_{0} \right) , \Lin\left( \Hilb_{1} \right) \right)$ a linear operator for $\Hilb_{0} = \states\left( n \right)$, $\Hilb_{1} = \states\left( m \right)$ for some $n, m \in \Nat$. The Choi-Jamiolkowski (or CJ in short) operator of $\channel$ is given by the following matrix $\CJ\left( \channel \right) \in \Lin\left( \Hilb_{0} \otimes \Hilb_{1} \right)$:
    $$
    \CJ\left( \channel \right)
    :=
    \sum_{x, y \in \{ 0, 1 \}^{n} }
    \ket{x}\bra{y} 
    \otimes
    \channel\left( \ket{x}\bra{y}  \right)
    \enspace .
    $$
\end{definition}

We will also use the inverse CJ isomorphism, which is defined as follows.

\begin{definition} [Inverse of Choi-Jamiolkowski transformation] \label{definition:inverse_cj}
    Let $M \in \Lin\left( \Hilb_{0} \otimes \Hilb_{1} \right)$ a linear operator.
    The inverse CJ transformation of the tuple $\left( M, \Hilb_{0}, \Hilb_{1} \right)$ is defined as the linear operator $\CJ^{-1}\left( M, \Hilb_{0}, \Hilb_{1} \right) \in \Lin\left( \Lin\left( \Hilb_{0} \right) , \Lin\left( \Hilb_{1} \right) \right)$ such that for every $X \in \Lin\left( \Hilb_{0} \right)$, 
    $$
    \left[ \CJ^{-1}\left( M, \Hilb_{0}, \Hilb_{1} \right) \right]\left( X \right)
    =
    \tr_{\Hilb_{0}}\left(
    \left( X^{T} \otimes I_{\Lin\left( \Hilb_{1} \right)} \right)
    \cdot
    M
    \right)
    \enspace .
    $$
    Here $X^{T}$ is the transpose in the specified computational basis of $\Hilb_{0}$.
\end{definition}

\begin{remark} [3 input parameters in inverse CJ isomorphism]
    In the above definition \ref{definition:inverse_cj} of the reverse CJ transformation $\CJ^{-1}$, note that unlike the forward transformation $\CJ$, we need to specify not only the operator $M$ but also the spaces $\Hilb_{0}$, $\Hilb_{1}$. The reason is that the space $\Hilb_{0} \otimes \Hilb_{1}$ can be decomposed in many different ways (i.e. for Hilbert spaces of different dimensions), and imply different operators by the inverse CJ transformation. This is different from the case of the forward transformation $\CJ$, where the input and output linear spaces are implicitly defined as part of the operator $\channel$.
\end{remark}

\begin{definition} [Matrix representation of a quantum channel] \label{definition:channel_matrices}
    Let $\channel$ a quantum channel from $\Hilb_0$ to $\Hilb_1$. A matrix representation of $\channel$ is a finite collection of matrices $C_1,\ldots,C_t\in\Lin(\Hilb_0,\Hilb_1)$ such that, for every $X\in\Lin(\Hilb_0)$,
    \begin{equation}
    \channel(X)=\sum_{\alpha=1}^{t}C_{\alpha}XC_{\alpha}^{\dagger} \enspace ,
    \qquad
    \sum_{\alpha=1}^{t}C_{\alpha}^{\dagger}C_{\alpha}=I_{\Lin(\Hilb_0)}
    \enspace .\label{eq:kraus}
    \end{equation}
    Every finite-dimensional quantum channel has such a representation. The matrices $C_{\alpha}$ are also called Kraus operators. They describe the channel mathematically; they are not additional input matrices that an algorithm is required to write down. Omitting the second equality gives a representation of a completely positive map that does not necessarily preserve trace.
\end{definition}

\begin{definition} [Unitary implementation of a quantum channel] \label{definition:channel_dilation}
    Let $\channel$ a quantum channel from $\Hilb_0$ to $\Hilb_1$. A unitary implementation of $\channel$ consists of additional registers $E,E'$ and a unitary $U:\Hilb_0\otimes\Hilb_E\to\Hilb_1\otimes\Hilb_{E'}$ such that, for every input density matrix $\rho$,
    $$
    \channel(\rho)
    =
    \tr_{E'}\left(
    U\left(\rho\otimes\ket{0}\bra{0}_E\right)U^{\dagger}
    \right)
    \enspace .
    $$
    Such an implementation is also called a unitary dilation. For a circuit-described channel, its measurement and discarded registers can be kept coherently to obtain a polynomial-size implementation of this form, which is called the purification of the quantum circuit.
\end{definition}

\begin{definition} [Trace distance and approximation of a state] \label{definition:trace_distance}
    For a matrix $M$, its trace norm is $\norm{M}_1:=\tr\sqrt{M^{\dagger}M}$. For two density matrices $\rho,\sigma$ of the same dimension, their trace distance is
    $$
    \td(\rho,\sigma):=\frac{1}{2}\norm{\rho-\sigma}_1
    \enspace .
    $$
    A state-generation procedure has error at most $\epsilon$ if its output has trace distance at most $\epsilon$ from its specified output state.
\end{definition}

\section{Process Matrix Formalism}\label{section:process_matrix_formalism}
Process matrices were first defined in \cite{oreshkov2012quantum} and since then became a central object in the study of indefinite causal structure in quantum mechanics. In this section we give the required definitions for information processing based on process matrices, computational models, computationally-bounded variations and complexity classes. We follow the definitions from the literature, in some places using previous definitions as they are, and in some other places giving logically equivalent definitions which would be easier to operate in the scope of this paper. We also provide numerous new definitions.

\begin{definition} [Double ket notation] \label{definition:double_ket}
    Let $M \in \Lin\left( \Hilb_{0}, \Hilb_{1} \right)$ a matrix such that $\Hilb_{0} = \states\left( n \right)$, $\Hilb_{1} = \states\left( m \right)$ for some $n, m \in \Nat$. We define the following "double ket" vectors in $\bbC^{ 2^{n + m} }$:
    $$
    \Dket{M}
    :=
    \sum_{ x \in \{ 0, 1 \}^{n} }
    \left(
    \ket{x} \otimes M \cdot \ket{x}
    \right)
    \enspace ,
    $$
    $$
    \Dbra{M}
    :=
    \Dket{M}^{\dagger}
    :=
    \sum_{ x \in \{ 0, 1 \}^{n} }
    \left(
    \bra{x} \otimes \bra{x} \cdot M^{\dagger}
    \right)
    \enspace .
    $$
\end{definition}

\begin{definition} [Boundary pairs and local quantum channels] \label{definition:local_channels}
    Let $k\geq0$, let $\mathbf{m}=(m_1,\ldots,m_k)$ be a tuple of positive integers, and let $n:=\sum_{i=1}^{k}m_i$. For every $i\in[k]$, let $R_i,S_i$ be $m_i$-qubit registers, and set
    $$
    \Hilb_{A_I}=\bigotimes_{i=1}^{k}\Hilb_{R_i},
    \quad
    \Hilb_{A_O}=\bigotimes_{i=1}^{k}\Hilb_{S_i}
    \enspace .
    $$
    The tuple $\mathbf m$ is the ordered partition into boundary pairs. In our circuit notation, the $R_i$ are inputs to the circuit and the $S_i$ are outputs from it. A local quantum channel on pair $i$ is any channel
    $$
    \mathcal A_i:
    \Lin(\Hilb_{S_i}\otimes\Hilb_{E_i})
    \longrightarrow
    \Lin(\Hilb_{R_i}\otimes\Hilb_{E_i'})
    \enspace ,
    $$
    where $E_i,E_i'$ are arbitrary additional registers. These additional registers are external: they are not connected back to the circuit. Their inputs may be entangled with each other and with the external input to the circuit.
\end{definition}

We next define the indefinite operator of a matrix $W$, which is used later to define what is a process matrix. In what follows, we think of $W$ as a CJ representation of some quantum channel from $\Lin \left( \Hilb_{ A_{I} } \otimes \Hilb_{ P } \right)$ to $\Lin \left( \Hilb_{ A_{O} } \otimes \Hilb_{ F } \right)$.

\begin{definition} [Indefinite Operator] \label{definition:indefinite_operator}
    Let $W \in \Lin\left( \Hilb_{ A_{I} } \otimes \Hilb_{ P } \otimes \Hilb_{A_{O}} \otimes \Hilb_{ F } \right)$ for Hilbert spaces $\Hilb_{ A_{I} }$, $\Hilb_{ P }$, $\Hilb_{A_{O}}$, $\Hilb_{ F }$ such that $\Hilb_{ A_{I} } = \Hilb_{ A_{O} } = \states\left( n \right)$. The indefinite operator of $W$ with respect to the Hilbert spaces $\Hilb_{ A_{I} }$, $\Hilb_{ A_{O} }$ is the operator $G_{W, A_{I}, A_{O}} \in \Lin\left( \Hilb_{ P } \otimes \Hilb_{ F } \right)$ such that
    $$
    G_{ W, A_{I}, A_{O} }
    :=
    \tr_{ A_{I}, A_{O} }
    \left(
    W
    \cdot
    \ket{ A_{ I }, P, F }
    \bra{ A_{ I }, P, F }
    \right)
    \enspace ,
    $$
    where 
    $$
    \ket{ A_{ I }, P, F }
    :=
    \sum_{ x \in \{ 0, 1 \}^{n} }
    \ket{x} \otimes I_{ \Lin\left( \Hilb_{P} \right) } \otimes \ket{x} \otimes I_{ \Lin\left( \Hilb_{F} \right) }
    \enspace ,
    $$
    $$
    \bra{ A_{ I }, P, F }
    :=
    \sum_{ y \in \{ 0, 1 \}^{n} }
    \bra{y} \otimes I_{ \Lin\left( \Hilb_{P} \right) } \otimes \bra{y} \otimes I_{ \Lin\left( \Hilb_{F} \right) }
    \enspace .
    $$
\end{definition}

\paragraph{Connecting local operations to the feedback registers.}
We next allow a quantum channel to be inserted into each feedback pair. Each local channel receives a feedback output and returns a feedback input; its auxiliary registers remain external. The following definition gives the CJ operator of the resulting transformation. It uses the contraction from \cite[Section III]{araujo2017quantum}, with our input and output labels.

\begin{definition}[Indefinite operator with local channels]
\label{definition:extended_channel_indefinite_operator}
    Let
    $$
    W\in\Lin\left(
    \Hilb_{A_I}\otimes\Hilb_P\otimes\Hilb_{A_O}\otimes\Hilb_F
    \right),
    $$
    and fix a partition into feedback pairs as in Definition \ref{definition:local_channels}, with
    $$
    \Hilb_{A_I}=\bigotimes_{i=1}^{k}\Hilb_{R_i}
    \quad \text{and} \quad
    \Hilb_{A_O}=\bigotimes_{i=1}^{k}\Hilb_{S_i}
    \enspace .
    $$
    For each $i\in[k]$, let $E_i,E_i'$ be auxiliary registers and let
    $$
    \mathcal A_i:
    \Lin\left(\Hilb_{S_i}\otimes\Hilb_{E_i}\right)
    \longrightarrow
    \Lin\left(\Hilb_{R_i}\otimes\Hilb_{E_i'}\right)
    $$
    be a quantum channel. Write
    $\boldsymbol{\mathcal A}:=\left(\mathcal A_1,\ldots,\mathcal A_k\right)$,
    $E:=\left(E_1,\ldots,E_k\right)$, and
    $E':=\left(E_1',\ldots,E_k'\right)$.

    The indefinite operator of $W$ with these local channels is
    $$
    G_{W,A_I,A_O}\left[\boldsymbol{\mathcal A}\right]
    :=
    \tr_{A_I,A_O}\left[
    \left(W^{T_{A_I,A_O}}\otimes I_{\Hilb_E\otimes\Hilb_{E'}}\right)
    \left(I_{\Hilb_P\otimes\Hilb_F}\otimes
    \bigotimes_{i=1}^{k}\CJ\left(\mathcal A_i\right)\right)
    \right]
    \enspace .
    $$
    Here $T_{A_I,A_O}$ is the partial transpose on the feedback registers in their computational bases. Each factor acts on its named registers, with tensor factors reordered accordingly. After taking the partial trace, the remaining registers are ordered as $P,E,F,E'$.

    The induced linear map from $P,E$ to $F,E'$ is
    \begin{equation}
    \pmo\left(W,\Hilb_{A_I},\Hilb_{A_O}\right)
    \left[\boldsymbol{\mathcal A}\right]
    :=
    \CJ^{-1}\left(
    G_{W,A_I,A_O}\left[\boldsymbol{\mathcal A}\right],
    \Hilb_P\otimes\Hilb_E,
    \Hilb_F\otimes\Hilb_{E'}
    \right)
    \enspace .
    \label{eq:connection}
    \end{equation}
\end{definition}

When all local channels are identities and there are no auxiliary registers, this reduces to the indefinite operator in Definition \ref{definition:indefinite_operator}. The induced map need not be a quantum channel for an arbitrary $W$; requiring this for every choice of local channels is part of the process matrix definition below.

We define process matrices as follows. 

\begin{definition} [Process Matrix] \label{definition:process_matrix}
    Let $W \in \Lin\left( \Hilb_{ A_{I} } \otimes \Hilb_{ P } \otimes \Hilb_{A_{O}} \otimes \Hilb_{ F } \right)$ for Hilbert spaces $\Hilb_{ A_{I} }$, $\Hilb_{ P }$, $\Hilb_{A_{O}}$, $\Hilb_{ F }$ such that $\Hilb_{ A_{I} } = \Hilb_{ A_{O} } = \states\left( n \right)$. Define the following linear operator $\pmo\left( W, \Hilb_{ A_{I} }, \Hilb_{ A_{O} } \right) \in \Lin\left( \Lin\left( \Hilb_{ P } \right), \Lin\left( \Hilb_{ F } \right) \right)$ as
    $$
    \pmo\left( W, \Hilb_{ A_{I} }, \Hilb_{ A_{O} } \right) 
    :=
    \CJ^{-1}\left( G_{ W, \Hilb_{ A_{I} }, \Hilb_{ A_{O} } }, \; \Hilb_{ P }, \Hilb_{ F } \right)
    \enspace ,
    $$
    where $G_{ W, \Hilb_{ A_{I} }, \Hilb_{ A_{O} } }$ is the indefinite operator of $W$ with respect to the spaces $\Hilb_{ A_{I} }$, $\Hilb_{ A_{O} }$, as in Definition \ref{definition:indefinite_operator}, and $\CJ^{-1}$ is the inverse CJ transformation, as in Definition \ref{definition:inverse_cj}.
    
    Fix a partition $\mathbf{m}$ of the two boundary registers as in Definition \ref{definition:local_channels}. We say that $W$ is a process matrix with respect to $\Hilb_{ A_{I} }$, $\Hilb_{ A_{O} }$, and $\mathbf{m}$ if $W$ is positive semidefinite and, for every collection of auxiliary registers and every choice of local quantum channels $\mathcal{A}_{1},\ldots,\mathcal{A}_{k}$, the induced operator
    $$
    \pmo\left( W, \Hilb_{ A_{I} }, \Hilb_{ A_{O} } \right)[\boldsymbol{\mathcal{A}}]
    $$
    from Definition \ref{definition:extended_channel_indefinite_operator} is a quantum channel.
    From hereon unless noted otherwise the partition $\mathbf{m}$ is implicitly set to be $(m_1, \ldots, m_{k})$, where $k=n$ and such that for all $i \in [k]$, $m_{i} := 1$, which is why we suppress the notation. 
\end{definition}

\begin{remark} [Intuition behind the operator $\pmo$]
    The name $\pmo$ is meant for a Process Matrix Operator. The idea behind the operator $\pmo$ is that it tries to take a general linear map, and map it to not just any linear transformation, but a quantum channel. What defines a process matrix $W$ is that it is positive semidefinite and that this mapping gives a quantum channel for every choice of local quantum channels on the specified boundary pairs, including additional registers. The identity connection is only one of these choices.
\end{remark}

\subsection{A Computational Model based on Process Matrices}
In the above subsections, all definitions were information theoretical. The work \cite{araujo2017quantum} was the first to consider the complexity theoretical aspect of process matrices. In this subsection we add our own notions of computation and computational complexity of process matrices, based on the previous work.

\begin{definition} [Quantum channel of a quantum circuit]
    For $n, m \in \Nat$, let $C$ a quantum circuit from $n$ to $m$ qubits (as in defined in Section \ref{section:preliminaries}).

    The quantum channel of $C$, denoted $\channel_{C} \in \Lin\left( \Lin\left( \states\left( n \right) \right), \Lin\left( \states\left( m \right) \right) \right)$ is the linear map that agrees with the execution of $C$ on every $n$-qubit mixed quantum state.
\end{definition}

\begin{definition} [Circuit description for a process] \label{definition:process_descriptor}
    Let $n$ a non-negative integer, $C$ an $(n+r)$-to-$(n+\ell)$ quantum circuit with the boundary registers ordered as in Definition \ref{definition:local_channels}. The pair $D:=(n,C)$ is a circuit description for a process. We implicitly use $n$ separate single-qubit boundary pairs $(R_i,S_i)$, $i\in[n]$, where $R_i$ is an input qubit of $C$ and $S_i$ is the corresponding output qubit.
    
    For any $n$ arbitrary local channels $\mathcal A_1$, $\ldots$, $\mathcal A_n$, the channel $\mathcal A_i$ acts from $S_iE_i$ to $R_iE_i'$, where $E_i,E_i'$ may have arbitrary sizes and remain external. Their inputs may be entangled with each other and with the external input to $C$. For $n=0$, there are no boundary pairs. The size $|D|$ includes $n$ and $|C|$.
\end{definition}

\paragraph{Quantum Circuits as Process Matrix Generators.}
We next define a new notion, called Process Matrix Generators (PMG). Intuitively, PMGs are quantum circuits such that if we take their CJ representations, we get a process matrix. More precisely, for an $(n + r)$-to-$(n + \ell)$ quantum circuit $C$ we take the $n$ first qubits to be boundary registers and ask whether the CJ representation is a process matrix w.r.t. these registers. We care about polynomial-time PMGs, as they play a central role in the definition of polynomial-time process matrix computation. In particular, when we later define the complexity class $\BQPlctc$ from previous work (and a new class $\BQPuctc$), we will do so using the new definition, and ask that all process matrix transformations allowed in $\BQPlctc$ are only by quantum polynomial-time PMGs.

\begin{definition} [Process Matrix Generator] \label{definition:process_matrix_generator}
    Let $n \in \Nat\cup\{0\}$, let $r, \ell \in \left( \Nat \cup \{ 0 \} \right)$ and let $C$ an $(n + r)$-to-$(n + \ell)$ quantum circuit. Consider the quantum channel $\channel_{C} \in \Lin\left( \Lin\left( \Hilb_{ A_{I} } \otimes \Hilb_{P} \right), \Lin\left( \Hilb_{A_{O}} \otimes \Hilb_{F} \right) \right)$ of the circuit $C$, such that $\Hilb_{ A_{I} } = \states\left( n \right)$, $\Hilb_{ P } = \states\left( r \right)$, $\Hilb_{ A_{O} } = \states\left( n \right)$, $\Hilb_{ F } = \states\left( \ell \right)$.
    
    Define the following operator $\pmo_{\left( n, C \right)} \in \Lin\left( \Lin\left( \Hilb_{P} \right), \Lin\left( \Hilb_{F} \right) \right)$:
    $$
    \pmo_{\left( n, C \right)}
    :=
    \pmo\left( \CJ\left( \channel_{C} \right), \Hilb_{ A_{I} }, \Hilb_{ A_{O} } \right)
    \enspace ,
    $$
    where $\pmo\left( \CJ\left( \channel_{C} \right), \Hilb_{ A_{I} }, \Hilb_{ A_{O} } \right)$ is as per Definition \ref{definition:process_matrix}.
    We use the implicit single-qubit boundary convention from Definition \ref{definition:process_descriptor}. If $\CJ\left(\channel_C\right)$ is a process matrix for these $n$ separate boundary pairs, as in Definition \ref{definition:process_matrix}, we say that the tuple $\left( n, C \right)$ is a Process Matrix Generator (or PMG), and denote by $\pmo_{\left( n, C \right)}$ the Process Matrix Channel of the tuple.
    For $D=(n,C)$, we also write $\pmo_D:=\pmo_{(n,C)}$ and $\pmo_D[\boldsymbol{\mathcal A}]$ for the connection with local channels.
\end{definition}

\paragraph{Simplified formula for evaluating process matrix generators.}
The following Lemma gives a formula for the identity-connected operator of the tuple $\left( n, C \right)$. It does not replace the validity condition for all local quantum channels.
Specifically, assume that $C$ is an isometry on quantum states, that given an $(n + r)$-qubit pure state $\ket{\psi} \in \states\left( n + r \right)$, concatenates $k$ ancillary qubits and executes a unitary $U$ on $n + r + k$ qubits. The below Lemma is a generalization of the simplifying Equation (21) from \cite{araujo2017quantum}.

\begin{lemma} [Simplified PM Formula for Isometries] \label{lemma:simplified_isometry_pm}
    Let $n, r \in \Nat$, $k \in \Nat \cup \{ 0 \}$, and let $U \in \bbC^{ \left( 2^{ n + r + k } \right) \times \left( 2^{ n + r + k } \right) }$ a unitary matrix.
    Let $\channel_{\left( U, k \right)}$ the quantum channel of the $k$-isometry of $U$, that is, for an input $(n + r)$-qubit mixed state $\rho$ the channel adds an ancilla of $k$ qubits and executes $U$, that is,
    $$
    \channel_{\left( U, k \right)}\left( \rho \right)
    :=
    U \cdot \left( \rho \otimes \ket{ 0^{k} }\bra{ 0^{k} } \right) \cdot U^{\dagger}
    \enspace .
    $$
    Denote the Hilbert spaces of the channel,
    $$
    \channel_{\left( U, k \right)} \in \Lin\left( \Lin\left( \Hilb_{ A_{I} } \otimes \Hilb_{P} \right), \Lin\left( \Hilb_{A_{O}} \otimes \Hilb_{F} \right) \right) \enspace ,
    $$
    such that $\Hilb_{ A_{I} } = \states\left( n \right)$, $\Hilb_{ P } = \states\left( r \right)$, $\Hilb_{ A_{O} } = \states\left( n \right)$, $\Hilb_{ F } = \states\left( r + k \right)$. 

    Let $W := \CJ\left( \channel_{\left( U, k \right)} \right)$ and let $G_{W, A_{I}, A_{O}}$ the indefinite operator of $W$ with respect to $\Hilb_{ A_{I} }$, $\Hilb_{ A_{O} }$ (as in Definition \ref{definition:indefinite_operator}). Then we have the equality
    $$
    G_{W, A_{I}, A_{O}}
    =
    \CJ\left( \channel_{ \left( U_{G}, k \right) } \right)
    \enspace ,
    $$
    for the matrix $U_{G} \in \bbC^{ 2^{r + k} \times 2^{r + k} }$ such that
    $$
    U_{G}
    =
    \tr_{ \Hilb_{ A_{I} } }\left( U \right) \enspace .
    $$
    That is, $U_{G}$ is the matrix derived by tracing out the Hilbert space of the leftmost $n$ qubits from the unitary matrix $U$ itself, and the linear operator $\channel_{ \left( U_{G}, k \right) }$ is such that for an input matrix $\sigma \in \bbC^{ 2^{r} \times 2^{r} }$, outputs:
    $$
    \channel_{ \left( U_{G}, k \right) }\left( \sigma \right)
    =
    U_{G} \cdot \left( \sigma \otimes \ket{ 0^{k} }\bra{ 0^{k} } \right) \cdot U^{\dagger}_{G}
    \enspace .
    $$

    In particular, if $W$ is a process matrix with respect to $\Hilb_{A_I}$ and $\Hilb_{A_O}$ under the implicit single-qubit boundary convention, then
    \begin{itemize}
        \item
        $\channel_{\left(U_G,k\right)}$ is a valid quantum channel.

        \item
        For any quantum circuit $C$ that implements the original channel $\channel_{\left(U,k\right)}$ with the same ordered boundary registers, the tuple $\left(n,C\right)$ is a PMG with generated channel $\pmo_{(n,C)}=\channel_{\left(U_G,k\right)}$ (as in Definition \ref{definition:process_matrix_generator}).
    \end{itemize}
    Note that the converse does not follow from the identity-connected formula: $\channel_{\left(U_G,k\right)}$ can be a quantum channel even when $W$ is not a process matrix.
\end{lemma}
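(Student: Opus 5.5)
The plan is to compute both sides explicitly in the computational basis and compare entries. First I will write the CJ operator of the isometry channel as a rank-one operator. Let $V := U(I \otimes \ket{0^k})$, an isometry from $\Hilb_{A_I}\otimes\Hilb_P$ to $\Hilb_{A_O}\otimes\Hilb_F$. Then $\channel_{(U,k)}(X) = VXV^\dagger$, and Definition \ref{definition:cj_isomorphism} gives
$$
W = \Dket{V}\Dbra{V}, \qquad \Dket{V} = \sum_{x\in\zo^n,\,p\in\zo^r} \ket{x,p}\otimes V\ket{x,p} .
$$
Similarly, set $V_G := U_G(I\otimes\ket{0^k})$ with $U_G=\tr_{A_I}(U)$ as in Definition \ref{definition:partial_trace}. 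Since appending $\ket{0^k}$ commutes with tracing out the leftmost $n$ qubits, $V_G = \tr_{A_I}(V)$ in the rectangular sense of Definition \ref{definition:rectangular_partial_trace}. The goal therefore reduces to showing $G_{W,A_I,A_O} = \Dket{V_G}\Dbra{V_G}$.

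Next I will evaluate the indefinite operator from Definition \ref{definition:indefinite_operator}. By cyclicity of the partial trace over $A_I,A_O$ (the operator $\ket{A_I,P,F}$ acts as identity on $P,F$), I can rewrite $\tr_{A_I,A_O}(W\ket{A_I,P,F}\bra{A_I,P,F})$ as $\bra{A_I,P,F}\,W\,\ket{A_I,P,F}$, i.e.\ a sandwich contracting $A_I$ with $A_O$ via $\sum_y \bra{y}_{A_I}\bra{y}_{A_O}$. Because $W$ is rank one, this factorizes: $G = \ket{\phi}\bra{\phi}$ with
$$
\ket{\phi} = \sum_{y}(\bra{y}_{A_I}\otimes\bra{y}_{A_O})\Dket{V} = \sum_{p}\ket{p}\otimes\sum_y(\bra{y}\otimes I_F)V(\ket{y}\otimes\ket{p}) = \sum_p \ket{p}\otimes V_G\ket{p} = \Dket{V_G} .
$$
The only index bookkeeping is to check that the $A_I$ copy in $\Dket{V}$ is the input label $x$ and the $A_O$ copy is the output of $V$, so contracting them forces $x = y$ equal to the output feedback index. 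This is exactly the rectangular partial trace. This identifies $G$ with $\CJ(\channel_{(U_G,k)})$.

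For the consequences, I will use the fact that for a process matrix the identity connection (trivial auxiliary registers) is one admissible choice of local channels, so $\pmo(W,\Hilb_{A_I},\Hilb_{A_O}) = \CJ^{-1}(G,\Hilb_P,\Hilb_F)$ is a quantum channel. Since $\CJ^{-1}\circ\CJ$ is the identity on maps, this channel equals $\channel_{(U_G,k)}$. For any circuit $C$ implementing $\channel_{(U,k)}$ we have $\CJ(\channel_C)=W$, so $(n,C)$ is a PMG by Definition \ref{definition:process_matrix_generator}, with $\pmo_{(n,C)}=\channel_{(U_G,k)}$.

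The main obstacle I expect is purely notational: tracking tensor-factor ordering and the transpose conventions in $\CJ$, $\CJ^{-1}$ and the indefinite operator, so that the contraction lands on $\tr_{A_I}(U)$ and not on its transpose or a partially transposed variant. I would check this on basis elements $\ket{x,p}\mapsto\ket{x',f}$ before writing the general computation.
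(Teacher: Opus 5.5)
Your proposal is correct and follows essentially the paper's route: write $W=\Dket{V}\Dbra{V}$, use cyclicity of the partial trace to turn the indefinite operator into a sandwich that factorizes (the paper expresses this as the $u$- and $v$-sums separating), and identify the resulting vector with $\sum_{p}\ket{p}\otimes U_G(\ket{p}\otimes\ket{0^k})=\Dket{V_G}$. Your short argument for the ``in particular'' consequences, namely that the identity connection is an admissible choice of local channels and that $\CJ^{-1}\circ\CJ$ is the identity, is correct and supplies a step the paper's proof leaves implicit.
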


\begin{proof}
     Note that there is an isometry $V \in \bbC^{ 2^{n + r + k} \times 2^{n + r} }$ that for any $x \in \{ 0, 1 \}^{n + r}$, $V\cdot \ket{x} = U \cdot \left( \ket{x} \otimes \ket{ 0^{k} } \right)$. It follows that the action of the channel $\channel_{\left( U, k \right)}$ for an input $(n + r)$-qubit mixed state $\rho$ is $V \cdot \rho \cdot V^{\dagger}$. This implies that $W := \CJ\left( \channel_{\left( U, k \right)} \right) = \Dket{V}\Dbra{V}$.
    
    By Definition \ref{definition:indefinite_operator}, the indefinite operator $G_{W, A_{ I }, A_{ O }}$ of $W = \Dket{V}\Dbra{V}$ with respect to the Hilbert spaces $\Hilb_{ A_{I} }$, $\Hilb_{ A_{O} }$ is
    $$
    G_{ W, A_{I}, A_{O} }
    :=
    \tr_{ A_{I}, A_{O} }
    \left(
    \Dket{V}\Dbra{V}
    \cdot
    \ket{ A_{I}, P, F } \bra{ A_{I}, P, F }
    \right)
    \enspace .
    $$

    Let us break down the expression inside the partial trace. By Definition \ref{definition:double_ket}, 
    $$
    \Dket{V}\Dbra{V}
    :=
    \sum_{ x \in \{ 0, 1 \}^{n + r} } \left( \ket{x} \otimes V \cdot \ket{x} \right)
    \sum_{ y \in \{ 0, 1 \}^{n + r} } \left( \bra{y} \otimes \bra{y} \cdot V^{\dagger} \right)
    \enspace .
    $$
    From Definition \ref{definition:indefinite_operator} we also have
    $$
    \ket{ A_{ I }, P, F }
    :=
    \sum_{ u \in \{ 0, 1 \}^{n} }
    \ket{u} \otimes I_{ \Lin\left( \Hilb_{P} \right) } \otimes \ket{u} \otimes I_{ \Lin\left( \Hilb_{F} \right) }
    \enspace ,
    $$
    $$
    \bra{ A_{ I }, P, F }
    :=
    \sum_{ v \in \{ 0, 1 \}^{n} }
    \bra{v} \otimes I_{ \Lin\left( \Hilb_{P} \right) } \otimes \bra{v} \otimes I_{ \Lin\left( \Hilb_{F} \right) }
    \enspace .
    $$

    By the linearity of trace we get
    $$
    G_{ W, A_{I}, A_{O} }
    :=
    \tr_{ A_{I}, A_{O} }
    \left(
    \Dket{V}\Dbra{V}
    \cdot
    \ket{ A_{I}, P, F } \bra{ A_{I}, P, F }
    \right)
    $$
    $$
    =
    \sum_{ x, y \in \{ 0, 1 \}^{n + r}, u, v \in \{ 0, 1 \}^{n} }
    \tr_{ A_{I}, A_{O} }
    \bigg(
    \left( \ket{x} \otimes V \cdot \ket{x} \right)
    \left( \bra{y} \otimes \bra{y} \cdot V^{\dagger} \right)
    $$
    $$
    \hspace{4cm}
    \cdot \left(
    \ket{u} \otimes I_{ \Lin\left( \Hilb_{P} \right) } \otimes \ket{u} \otimes I_{ \Lin\left( \Hilb_{F} \right) }
    \right)
    \left(
    \bra{v} \otimes I_{ \Lin\left( \Hilb_{P} \right) } \otimes \bra{v} \otimes I_{ \Lin\left( \Hilb_{F} \right) }
    \right)
    \bigg)
    \enspace .
    $$

    Next, by the cyclic property of trace, the above equals 
    $$
    =
    \sum_{ x, y \in \{ 0, 1 \}^{n + r}, u, v \in \{ 0, 1 \}^{n} }
    \tr_{ A_{I}, A_{O} }
    \bigg(
    \left(
    \bra{v} \otimes I_{ \Lin\left( \Hilb_{P} \right) } \otimes \bra{v} \otimes I_{ \Lin\left( \Hilb_{F} \right) }
    \right)
    \left( \ket{x} \otimes V \cdot \ket{x} \right)
    $$
    $$
    \hspace{4cm}
    \cdot
    \left( \bra{y} \otimes \bra{y} \cdot V^{\dagger} \right)
    \left(
    \ket{u} \otimes I_{ \Lin\left( \Hilb_{P} \right) } \otimes \ket{u} \otimes I_{ \Lin\left( \Hilb_{F} \right) }
    \right)
    \bigg)
    $$
    $$
    =
    \sum_{ a, b \in \{ 0, 1 \}^{n}, i, j \in \{ 0, 1 \}^{r}, u, v \in \{ 0, 1 \}^{n} }
    \tr_{ A_{I}, A_{O} }
    \bigg(
    \left(
    \bra{v} \otimes I_{ \Lin\left( \Hilb_{P} \right) } \otimes \bra{v} \otimes I_{ \Lin\left( \Hilb_{F} \right) }
    \right)
    \left( \ket{a, i} \otimes V \cdot \ket{a, i} \right)
    $$
    $$
    \hspace{4cm}
    \cdot
    \left( \bra{b, j} \otimes \bra{b, j} \cdot V^{\dagger} \right)
    \left(
    \ket{u} \otimes I_{ \Lin\left( \Hilb_{P} \right) } \otimes \ket{u} \otimes I_{ \Lin\left( \Hilb_{F} \right) }
    \right)
    \bigg)
    \enspace .
    $$

    By considering the result of inner products for the expression inside the trace, we get that the summand for $i, j, u, v$ is
    $$
    \left(
    \ket{i}
    \otimes
    \left(
    \bra{v} \otimes I_{ \Lin\left( \Hilb_{F} \right) }
    \right)
    \cdot 
    V \cdot \ket{v,i}
    \right)
    \cdot
    \left(
    \bra{j}
    \otimes
    \bra{u, j} \cdot V^{\dagger}
    \cdot
    \left(
    \ket{u} \otimes I_{ \Lin\left( \Hilb_{F} \right) }
    \right)
    \right)
    \enspace .
    $$
    Observe that at this point of the calculation, for the expression inside the trace (for each set of indices $i, j, u, v$), the systems $\Hilb_{ A_{I} }$, $\Hilb_{ A_{O} }$ are not longer present. In particular, the trace can be removed from the previous sum, and we get that $G_{ W, A_{I}, A_{O} }$ equals:
    $$
    =
    \sum_{ i, j \in \{ 0, 1 \}^{r}, u, v \in \{ 0, 1 \}^{n} }
    \left(
    \ket{i}
    \otimes
    \left(
    \bra{v} \otimes I_{ \Lin\left( \Hilb_{F} \right) }
    \right)
    \cdot 
    V \cdot \ket{v,i}
    \right)
    \cdot
    \left(
    \bra{j}
    \otimes
    \bra{u, j} \cdot V^{\dagger}
    \cdot
    \left(
    \ket{u} \otimes I_{ \Lin\left( \Hilb_{F} \right) }
    \right)
    \right)
    \enspace .
    $$

    We next consider, for every \emph{fixed} pair $i, j \in \{ 0, 1 \}^{r}$, the sum over all \emph{different} $u, v \in \{ 0, 1 \}^{n}$:
    $$
    \sum_{ v, u \in \{ 0, 1 \}^{n} }
    \left(
    \ket{i}
    \otimes
    \left(
    \left(
    \bra{v} \otimes I_{ \Lin\left( \Hilb_{F} \right) }
    \right)
    \cdot 
    V \cdot \ket{v,i}
    \right)
    \right)
    \cdot
    \left(
    \bra{j}
    \otimes
    \left(
    \bra{u, j} \cdot V^{\dagger}
    \cdot
    \left(
    \ket{u} \otimes I_{ \Lin\left( \Hilb_{F} \right) }
    \right)
    \right)
    \right)
    \enspace .
    $$
    Observe that in the above we have a sum, over $u, v \in \{ 0, 1 \}^{n}$, where every summand is a product between two elements. However, the left element in each product only depends on $v$, and every element on the right only depends on $u$. It follows that the above can be written in the form:
    $$
    =
    \left(
    \sum_{ v \in \{ 0, 1 \}^{n} }
    \ket{i}
    \otimes
    \left(
    \left(
    \bra{v} \otimes I_{ \Lin\left( \Hilb_{F} \right) }
    \right)
    \cdot 
    V \cdot \ket{v,i}
    \right)
    \right)
    \cdot
    \left(
    \sum_{ u \in \{ 0, 1 \}^{n} }
    \bra{j}
    \otimes
    \left(
    \bra{u, j} \cdot V^{\dagger}
    \cdot
    \left(
    \ket{u} \otimes I_{ \Lin\left( \Hilb_{F} \right) }
    \right)
    \right)
    \right)
    \enspace .
    $$
    Let us analyze the left multiplicand, and the analysis of the right multiplicand will be symmetric:
    $$
    \sum_{ v \in \{ 0, 1 \}^{n} }
    \ket{i}
    \otimes
    \left(
    \left(
    \bra{v} \otimes I_{ \Lin\left( \Hilb_{F} \right) }
    \right)
    \cdot 
    V \cdot \ket{v,i}
    \right)
    $$
    $$
    =
    \ket{i}
    \otimes
    \left(
    \sum_{ v \in \{ 0, 1 \}^{n} }
    \left(
    \left(
    \bra{v} \otimes I_{ \Lin\left( \Hilb_{F} \right) }
    \right)
    \cdot 
    V \cdot \ket{v,i}
    \right)
    \right)
    \enspace .
    $$

    Recall that $\forall (v, i) \in \{ 0, 1 \}^{n + r}$ we have $V\cdot \ket{v, i} = U \cdot \left( \ket{v, i} \otimes \ket{0^k} \right)$. Now, observe that for every $i \in \{ 0, 1 \}^{r}$ we have 
    $$
    \sum_{ v \in \{ 0, 1 \}^{n} }
    \left(
    \left(
    \bra{v} \otimes I_{ \Lin\left( \Hilb_{F} \right) }
    \right)
    \cdot 
    V \cdot \ket{v,i}
    \right)
    $$
    $$
    =
    \sum_{ v \in \{ 0, 1 \}^{n} }
    \left(
    \left(
    \bra{v} \otimes I_{ \Lin\left( \Hilb_{F} \right) }
    \right)
    \cdot 
    U \cdot \left( \ket{v, i} \otimes \ket{0^k} \right)
    \right)
    $$
    $$
    =
    \left( \tr_{ A_{I} }\left( U \right) \right)
    \cdot
    \left( \ket{i} \otimes \ket{0^k} \right)
    \enspace .
    $$
    To conclude, by denoting $U_{G} := \tr_{ A_{I} }\left( U \right) \in \bbC^{ 2^{ r + k } \times 2^{ r + k } }$, for every $i \in \{ 0, 1 \}^{r}$ we have the equality :
    $$
    \sum_{ v \in \{ 0, 1 \}^{n} }
    \ket{i}
    \otimes
    \left(
    \left(
    \bra{v} \otimes I_{ \Lin\left( \Hilb_{F} \right) }
    \right)
    \cdot 
    V \cdot \ket{v,i}
    \right)
    =
    \ket{i}
    \otimes
    U_{G}
    \cdot
    \left( \ket{i} \otimes \ket{0^k} \right)
    \enspace .
    $$
    In a similar manner, by a symmetric calculation we have the equality for every $j \in \{ 0, 1 \}^{r}$:
    $$
    \sum_{ u \in \{ 0, 1 \}^{n} }
    \bra{j}
    \otimes
    \left(
    \bra{u, j} \cdot V^{\dagger}
    \cdot
    \left(
    \ket{u} \otimes I_{ \Lin\left( \Hilb_{F} \right) }
    \right)
    \right)
    =
    \bra{j}
    \otimes
    \left( \bra{j} \otimes \bra{0^k} \right)
    \cdot
    U^{\dagger}_{G}
    \enspace .
    $$

    Let us summarize the results and conclude the proof. We saw that 
    $$
    G_{ W, A_{I}, A_{O} }
    =
    \sum_{ i, j, \in \{ 0, 1 \}^{r} }
    \left(
    \sum_{ v \in \{ 0, 1 \}^{n} }
    \ket{i}
    \otimes
    \left(
    \left(
    \bra{v} \otimes I_{ \Lin\left( \Hilb_{F} \right) }
    \right)
    \cdot 
    V \cdot \ket{v,i}
    \right)
    \right)
    $$
    $$
    \hspace{6cm}
    \cdot
    \left(
    \sum_{ u \in \{ 0, 1 \}^{n} }
    \bra{j}
    \otimes
    \left(
    \bra{u, j} \cdot V^{\dagger}
    \cdot
    \left(
    \ket{u} \otimes I_{ \Lin\left( \Hilb_{F} \right) }
    \right)
    \right)
    \right)
    \enspace ,
    $$
    and by our analysis of the multiplicands in the sum, the above equals
    $$
    =
    \sum_{ i, j, \in \{ 0, 1 \}^{r} }
    \left(
    \ket{i}
    \otimes
    U_{G}
    \cdot
    \left( \ket{i} \otimes \ket{0^k} \right)
    \right)
    \left(
    \bra{j}
    \otimes
    \left( \bra{j} \otimes \bra{0^k} \right)
    \cdot
    U^{\dagger}_{G}
    \right)
    $$
    $$
    =
    \sum_{ i, j, \in \{ 0, 1 \}^{r} }
    \ket{i}\bra{j}
    \otimes
    \left(
    U_{G}
    \cdot
    \left( \ket{i} \otimes \ket{0^k} \right)
    \right)
    \left(
    \left( \bra{j} \otimes \bra{0^k} \right)
    \cdot
    U^{\dagger}_{G}
    \right)
    $$
    $$
    =
    \sum_{ i, j, \in \{ 0, 1 \}^{r} }
    \ket{i}\bra{j}
    \otimes
    \left(
    U_{G}
    \cdot
    \left( \ket{i}\bra{j} \otimes \ket{0^k}\bra{0^k} \right)
    \cdot
    U^{\dagger}_{G}
    \right)
    \enspace .
    $$
    It is left to observe that by the Definition \ref{definition:cj_isomorphism}, the last expression is exactly the CJ transformation of the linear operator $\channel_{ \left( U_{G}, k \right) }$ defined in the statement of Lemma \ref{lemma:simplified_isometry_pm}.
\end{proof}

\subsection{Existing and New Complexity Classes}\label{sec:models}
We now define two models of efficient process matrix computation. The first allows general generating circuits, provided they describe valid processes. The second requires the generating circuits to be unitary and the processes to be pure. We also distinguish this restriction from purifiability, which requires a pure extension to exist without imposing an efficiency requirement on its construction.

\paragraph{General process computation.}
Both our model and that of \cite[Section IV]{araujo2017quantum} assume access to the channel specified by a classical generating-circuit description. In the CTC interpretation, the model grants the ability to construct the corresponding feedback connections. The previous definition requires the supplied circuit to describe a valid process. Here we also specify what happens when it does not: a well-formed generation query returns an arbitrary quantum channel of the prescribed dimensions, and the algorithm must satisfy its correctness requirement for every such channel. This defines the generation operation on every well-formed description, without requiring the algorithm to determine process validity.

\begin{definition} [Linear Closed Timelike Curve Generator] \label{definition:linear_ctc_generator}
    Let $A$ a function with input set of pairs $\left( m, C \right)$ where $m \in \Nat\cup\{0\}$ and $C$ a quantum circuit from $m + r$ to $m + \ell$ qubits for some $r, \ell \in \Nat\cup\{0\}$, and output set of $r$-to-$\ell$ qubit quantum channels. $A$ need not be efficiently computable. We say that $A$ is a linear CTC generator if for every well-formed input $\left( m, C \right)$, if the pair is a PMG (as in Definition \ref{definition:process_matrix_generator}) then $A$ outputs the channel $\pmo_{\left( m, C \right)}$, and if the input is not a PMG, the output channel of $A$ can be arbitrary.
\end{definition}

We can now define the complexity class $\BQPlctc$.

\begin{definition} [The Complexity Class $\BQPlctc$] \label{definition:BQPlctc}
    The complexity class $\BQPlctc$ is the set of promise problems $\prod := \left( \YES , \NO \right)$ that are solvable by a quantum polynomial-time algorithm with access to linear post-selection CTCs. Formally, such that there exists a quantum polynomial-time Turing Machine $M$ such that for every linear CTC generator $A$ (as per Definition \ref{definition:linear_ctc_generator}), for every $x \in \YES$, $M^{A}(x) = 1$ with probability at least $p(x)$, and for every $x \in \NO$, $M^{A}(x) = 0$ with probability at least $p(x)$, such that
    \begin{itemize}
        \item
        $$
        \forall x \in \left( \YES \cup \NO \right) : p(x) \geq 1 - 2^{-|x|} \enspace ,
        $$

        \item 
        $M^{A}$ is the machine $M$ with oracle access to $A$ and the channels that it outputs.

        \item
        Every generation query $\left(m,C\right)$ has a classical description. The query may depend on the input $x$ itself and on the preceding computation, and may be adaptive.
        
        \item 
        Every query $\left( m, C \right)$ to $A$ and every use of the channel it produces costs $\Theta\left( |C| \right)$ time.
    \end{itemize}
\end{definition}

\paragraph{Pure processes.}
Ara\'{u}jo, Feix, Navascu\'{e}s and Brukner \cite[Definition 1]{araujo2017purification} introduced pure processes as those that induce a global unitary transformation whenever the local operations are unitary, including operations on auxiliary registers. We give the relevant definitions.

\begin{definition}[Unitary Quantum Channel] \label{definition:unitary_quantum_channel}
    Let $n\in\Nat\cup\{0\}$ and let $\channel$ be an $n$-to-$n$ qubit quantum channel, as in Definition \ref{definition:quantum_channel}. We say that $\channel$ is a unitary quantum channel if there exists a unitary matrix $U_{\channel}\in\bbC^{2^n\times 2^n}$ such that, for every $n$-qubit pure state $\ket{\psi}$,
    $$
    \channel\left( \ket{\psi}\bra{\psi} \right)
    =
    U_{\channel}\cdot\ket{\psi}\bra{\psi}\cdot U_{\channel}^{\dagger}
    \enspace .
    $$
\end{definition}

\begin{definition}[Pure Process Matrix] \label{definition:pure_process_matrix}
    Let $W$ be a process matrix as in Definition \ref{definition:process_matrix}, with feedback input $R$, feedback output $S$, external input $P$, and external output $F$, with a fixed partition into feedback pairs $\left( R_i, S_i \right)$, $i\in[k]$.

    We say that $W$ is \emph{pure} if, for every choice of auxiliary registers $E_i, E_i'$ and local unitary channels $\mathcal A_i$ from $S_i, E_i$ to $R_i, E_i'$, the induced channel
    $$
    \pmo\left( W,\Hilb_R,\Hilb_S \right)
    \left[ \mathcal A_1,\ldots,\mathcal A_k \right]
    $$
    from $P, E_1,\ldots,E_k$ to $F, E_1',\ldots,E_k'$ is unitary.
\end{definition}

By Theorem 2 of \cite{araujo2017purification}, a process matrix $W$ is pure if and only if
$$
W=\Dket{U}\Dbra{U}
$$
for a unitary $U:\Hilb_R\otimes\Hilb_P\longrightarrow\Hilb_S\otimes\Hilb_F$. We use this notion to define pure process matrix generators.

\begin{definition}[Pure Process Matrix Generator] \label{definition:pure_process_matrix_generator}
    Let $D=\left( n,C \right)$ be a PMG as in Definition \ref{definition:process_matrix_generator}. We say that $D$ is a \emph{pure process matrix generator}, or pure PMG, if $C$ is a unitary quantum circuit. Each of the following conditions is equivalent to $D$ being pure:
    \begin{itemize}
        \item
        $\CJ\left( \channel_C \right)$ is a pure process matrix according to Definition \ref{definition:pure_process_matrix}, with the same $n$ separate single-qubit feedback pairs.

        \item
        $C$ is a unitary circuit with matrix $U_C$ satisfying the following condition. Write $R:=\left( R_1,\ldots,R_n \right)$ for the feedback registers. For every choice of auxiliary registers $E:=\left( E_1,\ldots,E_n \right)$ and local unitaries $V_i$ acting on $R_i, E_i$,
        \begin{equation}
            \tr_R\left[
            \left( \bigotimes_{i=1}^{n}V_i \right)
            \left( U_C\otimes I_{\Hilb_E} \right)
            \right]
            \quad\text{is unitary}.
            \label{eq:pure-test}
        \end{equation}
        Here we identify corresponding feedback input and output registers, and each operator acts as the identity on registers not mentioned in its definition. The auxiliary registers remain external and are not traced out.
    \end{itemize}
    In particular, the generated channel $\pmo_D$ is unitary.
\end{definition}

\begin{remark}[A PMG with a unitary generated channel is not necessarily pure]
    A unitary generated channel alone does not make a PMG pure. For example, our Choi-state construction in Lemma \ref{lemma:from_choi_to_isometry} generates a unitary channel when $V$ is unitary, although the ordinary channel $\channel_C$ implemented by its generating circuit is not unitary. To see which local operations reveal this distinction, introduce an auxiliary input qubit $E_i$ at each feedback pair $\left( R_i, S_i \right)$ and choose the local operation to be a SWAP. This sends $E_i$ into $R_i$ and returns $S_i$ as an auxiliary output $E_i'$. The resulting global channel is exactly $\channel_C$, up to register names. Thus, although every inserted operation is unitary, the global channel is not. Purity requires unitarity for every such choice, not only when the local operations are identities.
\end{remark}

\paragraph{Pure process computation.}
We define the class $\BQPuctc$, which restricts $\BQPlctc$ to pure process computation.

\begin{definition} [Unitary Closed Timelike Curve Generator] \label{definition:unitary_ctc_generator}
    Let $A$ a function with input set of pairs $\left( m, C \right)$ where $m \in \Nat\cup\{0\}$ and $C$ a quantum circuit from $m + r$ to $m + r$ qubits for some $r \in \Nat\cup\{0\}$, and output set of $r$-to-$r$ qubit quantum channels. We say that $A$ is a unitary CTC generator if for every well-formed input $\left( m, C \right)$, if the pair is a pure PMG (as in Definition \ref{definition:pure_process_matrix_generator}) then $A$ outputs the channel $\pmo_{\left( m, C \right)}$ (and if the input is not a pure PMG, the output channel of $A$ can be any $r$-to-$r$ quantum channel).
\end{definition}

We can now define the complexity class $\BQPuctc$.

\begin{definition}[The Complexity Class $\BQPuctc$] \label{definition:BQPuctc}
    The complexity class $\BQPuctc$ is the set of promise problems $\prod:=\left( \YES,\NO \right)$ solvable by a quantum polynomial-time algorithm with access to unitary post-selection CTCs. Formally, there exists a quantum polynomial-time Turing machine $M$ such that, for every unitary CTC generator $A$ as in Definition \ref{definition:unitary_ctc_generator},
    $$
    \forall x\in\YES:
    \Pr\left[ M^A\left( x \right)=1 \right]\geq 1-2^{-|x|}
    \enspace ,
    $$
    and
    $$
    \forall x\in\NO:
    \Pr\left[ M^A\left( x \right)=0 \right]\geq 1-2^{-|x|}
    \enspace .
    $$
    The access and cost conventions are as follows:
    \begin{itemize}
        \item
        $M^A$ is the machine $M$ with oracle access to $A$ and the channels that it returns.

        \item
        Every generation query $\left( m,C \right)$ has a classical description. Queries may depend on the input $x$ and on the preceding computation, and may be adaptive.

        \item
        Every query $\left( m,C \right)$ to $A$ and every use of the channel it returns costs $\Theta\left( |C| \right)$ time.
    \end{itemize}
\end{definition}

\paragraph{Purifiable processes.}
Ara\'{u}jo, Feix, Navascu\'{e}s and Brukner \cite[Definition 3]{araujo2017purification} introduced purifiable processes and proposed purifiability as a necessary condition for physical realizability. Informally, a process is purifiable if it can be recovered from a larger pure process by initializing additional inputs to pure states and discarding additional outputs.

\begin{definition}[Purifiable Process Matrix] \label{definition:purifiable_process_matrix}
    Let $W$ be a process matrix as in Definition \ref{definition:process_matrix}, with feedback input $R$, feedback output $S$, external input $P$, and external output $F$, with a fixed partition of $R, S$ into feedback pairs. We say that $W$ is \emph{purifiable} if there are additional external registers $P', F'$ and a unitary
    $$
    U:\Hilb_R\otimes\Hilb_P\otimes\Hilb_{P'}
    \longrightarrow
    \Hilb_S\otimes\Hilb_F\otimes\Hilb_{F'}
    $$
    satisfying both of the following conditions:
    \begin{enumerate}
        \item
        $\widetilde W:=\Dket{U}\Dbra{U}$ is a process matrix for the same feedback pairs, with external input $P, P'$ and external output $F, F'$. Since $U$ is unitary, $\widetilde W$ is a pure process.

        \item
        Initializing $P'$ to zero and discarding $F'$ recovers $W$. Formally, writing
        $$
        V:=U\left( I_{\Hilb_R\otimes\Hilb_P}\otimes\ket{0}_{P'} \right)
        \enspace ,
        $$
        we have
        $$
        W=\tr_{F'}\left( \Dket{V}\Dbra{V} \right)
        \enspace .
        $$
    \end{enumerate}
    Validity of $\widetilde W$ is required on every input to $P'$, including inputs entangled with the other external registers, and under every choice of local quantum channels. No circuit-synthesis or efficiency requirement is imposed on $U$.
\end{definition}

\begin{definition}[Purifiable Process Matrix Generator] \label{definition:purifiable_process}
    Let $D=\left( n,C \right)$ be a PMG as in Definition \ref{definition:process_matrix_generator}. We say that $D$ is \emph{purifiable} if $\CJ\left( \channel_C \right)$ is a purifiable process matrix according to Definition \ref{definition:purifiable_process_matrix}, with the same $n$ separate single-qubit feedback pairs.

    Equivalently, there is a unitary $U$ defining a pure extension as above such that, for every matrix $X$ on the original circuit input,
    $$
    \channel_C\left( X \right)
    =
    \tr_{F'}\left[
    U\left( X\otimes\ket{0}\bra{0}_{P'} \right)U^\dagger
    \right]
    \enspace .
    $$
    If $U$ is implemented by a circuit $\widetilde C$, then $\left( n,\widetilde C \right)$ is a pure PMG. The existence of such a circuit over our fixed gate set, its size, and the complexity of constructing it are additional requirements, not part of purifiability.
\end{definition}

\section{Quantum Computation with Efficient General Process Matrices} \label{section:linear_CTC_computation}
We show that general process matrix computation can solve every problem in $\NP$. Our main tool turns an efficient preparation of an isometry's Choi state into a process matrix generator implementing that isometry. We first prove this construction and its purifiability, and then use it for unique-witness search, solving $\NP$, and coherent permutation inversion.

\subsection{Generating a unitary channel from a generator of its Choi state}\label{subsection:choi_to_isometry}
Preparing the Choi state of an isometry does not by itself give an efficient circuit implementing that isometry. The following lemma shows that it does give an efficient process matrix generator. We also prove a stronger validity guarantee that will allow us to establish purifiability.

\begin{lemma}[Generating an isometry from its Choi-state generator] \label{lemma:from_choi_to_isometry}
    Let $n,m\in\Nat\cup\{0\}$, and let $V\in\Lin\left(\states(n),\states(m)\right)$ be an isometry. Denote $d:=2^n$, and let
    $$
    \ket{J_V}
    :=
    \frac{1}{\sqrt d}\sum_{x\in\{0,1\}^n}\ket{x}\otimes V\ket{x}
    \enspace .
    $$
    Assume there is an ordinary polynomial-size quantum circuit $P_V$ that prepares the state $\ket{J_V}$ exactly. Then there is a classical deterministic polynomial-time algorithm that, given $P_V$, outputs a pair $(2n,C)$ which is a PMG (Definition \ref{definition:process_matrix_generator}) such that, for every input quantum state $\rho$,
    \begin{equation}
        \pmo_{(2n,C)}(\rho)=V\rho V^\dagger\enspace .\label{eq:linear-choi-generated}
    \end{equation}
    In particular, when $n=m$, the generated channel is unitary.

    \vspace{2mm}
    \noindent
    The pair $(2n,C)$ actually satisfies a stronger property than being a PMG. Specifically, the induced Process Matrix Operator is a quantum channel even when the $2n$ single-qubit feedback registers are treated as one joint register. That is, the validity condition holds for every quantum channel $\mathcal{A}$ acting jointly on all $2n$ feedback qubits, with arbitrary auxiliary input and output registers, rather than only for tuples of local quantum channels $\mathcal{A}_1,\ldots,\mathcal{A}_{2n}$ as in Definition \ref{definition:process_matrix}.
\end{lemma}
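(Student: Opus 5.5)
The plan is to build the circuit $C$ described in the Technical Overview and to verify the strong (joint-channel) validity condition by a direct Kraus-operator computation.

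\textbf{Construction.} The circuit $C$ acts on $2n$ feedback qubits $R$ and an $n$-qubit external input $P$. It does the following in order:
\begin{enumerate}
\item measure $R$ in the computational basis, obtaining $r\in\zo^{2n}$;
\item run $P_V$ on fresh ancillas, producing $\ket{J_V}$ on registers $A$ ($n$ qubits) and $B$ ($m$ qubits);
\item perform the Bell measurements between $P$ and $A$ (basis change by CNOT and $H$, then computational-basis measurement), obtaining $j\in\zo^{2n}$;
\item output $\ket{r\oplus j}$ on the feedback outputs $S$ and output $B$ as the external output $F$, discarding everything else.
\end{enumerate}
All steps are explicit gate lists computed from $P_V$, so $C$ is produced in deterministic polynomial time.

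\textbf{Kraus form.} By the standard teleportation identity, the branch with outcome $j$ maps $\ket{\psi}_P$ to $\frac1d VP_j\ket\psi$ on $B$, where $P_j$ is the associated Pauli (up to transpose/conjugation conventions, which I will fix so that $P_{0^{2n}}=I$). Hence $\channel_C$ has Kraus operators
$$
K_{r,j}=\tfrac1d\,\ket{r\oplus j}_S\bra{r}_R\otimes VP_j .
$$

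\textbf{Contraction with an arbitrary joint channel.} Let $\mathcal A$ be any channel from $S,E$ to $R,E'$ acting jointly on all feedback qubits, with Kraus operators $A_\beta$. Write its blocks as $A_\beta^{u,a}:=(\bra u_R\otimes I)A_\beta(\ket a_S\otimes I)$. I would show directly from Definition~\ref{definition:extended_channel_indefinite_operator} (the link product) that, since every $K_{r,j}$ is a rank-one map in the feedback indices, the connected map $\pmo[\mathcal A]$ has Kraus operators
$$
L_{r,j,\beta}=\tfrac1d\,A_\beta^{r,\,r\oplus j}\otimes VP_j .
$$
Intuitively, a feedback output $a=r\oplus j$ enters $\mathcal A$ and must return as $r$. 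This is the same contraction as in Lemma~\ref{lemma:simplified_isometry_pm}, now with each classical branch treated separately. The fact that each branch contributes separately, with no interference between branches, is exactly what the measurement in step 1 buys.

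Complete positivity is then immediate. For trace preservation, $(VP_j)^\dagger VP_j=I$ because $V$ is an isometry, so
$$
\sum_{r,j,\beta}L^\dagger L=\frac1{d^2}\sum_{r,a}\sum_\beta (A_\beta^{r,a})^\dagger A_\beta^{r,a}\otimes I ,
$$
after reindexing $(r,j)\mapsto(r,a=r\oplus j)$. Completeness of $\mathcal A$ gives $\sum_{r,\beta}(A_\beta^{r,a})^\dagger A_\beta^{r,a}=I_E$ for each fixed $a$. Summing over the $d^2$ values of $a$ then cancels the factor $d^{-2}$, so the sum equals $I$.

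\textbf{Positivity and the generated channel.} Positivity of $\CJ(\channel_C)$ holds because $C$ is a channel. Taking $\mathcal A=\id$ gives $A^{r,a}=\delta_{r,a}$, which forces $j=0$ and yields Kraus operators $\frac1d V$ for each of the $d^2$ values of $r$. This gives $\pmo(\rho)=V\rho V^\dagger$, proving \eqref{eq:linear-choi-generated}. Local tuples $\mathcal A_1\otimes\cdots\otimes\mathcal A_{2n}$ are a special case of a joint $\mathcal A$, so the PMG property follows from the joint one.

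\textbf{Main obstacle.} The hardest step is justifying rigorously that the link product in Definition~\ref{definition:extended_channel_indefinite_operator} yields exactly the operators $L_{r,j,\beta}$. This requires tracking the partial transposes and the conjugation in $\CJ^{-1}$ correctly. I would handle it by expanding $\CJ(\channel_C)=\sum_{r,j}\Dket{K_{r,j}}\Dbra{K_{r,j}}$ and $\CJ(\mathcal A)=\sum_\beta\Dket{A_\beta}\Dbra{A_\beta}$, and then redoing the index bookkeeping of Lemma~\ref{lemma:simplified_isometry_pm} one Kraus pair at a time.
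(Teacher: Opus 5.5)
Your proposal is correct and follows the paper's proof essentially step for step. It uses the same circuit, the same branch Kraus operators $\frac1d VP_j\otimes A^{r,r\oplus j}$ for the connected map, the same reindexing $a=r\oplus j$ combined with the completeness relation for each fixed $a$, and the same identity-insertion computation for the generated channel. The paper simply asserts this branch-wise Kraus form by appeal to the mid-circuit measurements, so your planned explicit link-product check (expanding both CJ operators into rank-one double kets) only makes rigorous a step the paper takes for granted.
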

\begin{proof}
    Let $d:=2^n$, and denote by $P_j$, for $j\in\{0,1\}^{2n}$, the $n$-qubit Pauli matrices, chosen with phases so that $P_j^\dagger=P_j$, $P_j^2=I_{n}$, and $P_{0^{2n}}=I_{n}$. These matrices satisfy that the trace of their inner product with each other is always $0$ unless it is with themselves, specifically, $\tr(P_j^\dagger P_{j'})=d\cdot\mathbf{1}[j=j']$.

    Write
    $$
    \ket{\Phi_d}:=\frac{1}{\sqrt d}\sum_{x\in\{0,1\}^{n}}\ket{x,x}
    \quad \text{and} \quad
    \ket{\beta_j}:=(P_j\otimes I_{n})\ket{\Phi_d}\enspace .
    $$
    The vectors $\{\ket{\beta_j}\}_{j\in\{0,1\}^{2n}}$ form an orthonormal basis for $\states\left( 2n \right)$. Next, consider the $(2n + m)$-qubit state $\ket{\psi}\otimes\ket{J_V}$, and specifically, making a partial measurement to its leftmost $2n$ qubits in the basis $\{\ket{\beta_j}\}_{j\in\{0,1\}^{2n}}$.
    By definition, for every $j\in\{0,1\}^{2n}$, the measurement outputs $j$ with probability
    $$
    p_j
    :=
    \left\|
    \left(\bra{\beta_j}\otimes I_m\right)
    \left(\ket{\psi}\otimes\ket{J_V}\right)
    \right\|^2
    \enspace ,
    $$
    and conditioned on this outcome, the remaining $m$ qubits
    are in the state
    $$
    \frac{1}{\sqrt{p_j}}
    \left(\bra{\beta_j}\otimes I_m\right)
    \left(\ket{\psi}\otimes\ket{J_V}\right)
    \enspace ,
    $$
    whenever $p_j>0$. A useful observation is that for every quantum state $\ket{\psi}\in\states(n)$ and $j\in\{0,1\}^{2n}$ we have
    \begin{equation}
    \left( \bra{\beta_j} \otimes I_{m} \right) \left(\ket{\psi}\otimes\ket{J_V}\right)
    =\frac{1}{d} VP_j\ket{\psi}\enspace .\label{eq:linear-teleportation}
    \end{equation}
    To see the equality, expand the three vectors in the computational basis and use the definition of $\ket{\beta_j}$.
    
    Now, since $V^\dagger V=I$ and $P_j$ is unitary, the squared norm of the right-hand side is $1/d^2$, independently from $\ket{\psi}$. This altogether implies that measuring the first two registers in the basis $\{\ket{\beta_j}\}_j$ produces each $j$ with probability $1/d^2$, and the corresponding external state is $VP_j\ket{\psi}$. The corresponding statement for mixed inputs and inputs entangled with additional registers follows by linearity, where the transformation acts as the identity on the additional registers.

    \paragraph{The process matrix generator.}
    We describe a polynomial-size ordinary quantum circuit $C$ acting on $(2n + n)$-qubit inputs and outputting $2n + m$ qubits. Specifically, the feedback registers are the first $2n$ qubits, thus there are additional $n$ external input qubits and $m$ external output qubits. On an arbitrary input, $C$ executes the following procedure:
    \begin{enumerate}
        \item Measure the feedback input in the computational basis, and denote the outcome by $r\in\{0,1\}^{2n}$.
        \item Execute the circuit $P_V$ to prepare $\ket{J_V}$ in an $(n + m)$-qubits register $R_{V}$, and perform the measurement from Equation \eqref{eq:linear-teleportation} on the remaining $n$ qubits in the external input and the first $n$ qubits of $R_V$. Denote the measurement outcome by $j\in\{0,1\}^{2n}$.
        \item Set the feedback output to $a := r\oplus j$ by adding (mod 2) the measurement outcome to the feedback registers. Output the remaining $m$ qubits of $R_V$ as the external output, and discard the remaining registers.
    \end{enumerate}
    The circuit $C$ uses only ordinary quantum operations and note that its size is polynomial in $|P_V|+n+m$.

    \paragraph{Validity of the generator.}
    We next show that the pair $(2n,C)$ constructed above is a PMG. Recall that in our definition a PMG needs to remain a valid quantum channel even when arbitrary quantum channels, with arbitrary auxiliary registers, are inserted in the feedback. In fact, we prove a stronger property, where all the $2n$ feedback qubits are treated as a single register and the inserted quantum channel can act jointly on all of them.

    Let $E,E'$ be arbitrary auxiliary registers, and let $\mathcal A$ be an arbitrary quantum channel from the $2n$ feedback output qubits and $E$ to the $2n$ feedback input qubits and $E'$. Formally,
    $$
    \mathcal A:\Lin\left(\states(2n)\otimes\Hilb_E\right)
    \longrightarrow
    \Lin\left(\states(2n)\otimes\Hilb_{E'}\right)
    \enspace .
    $$
    Recall the matrix representation of a quantum channel from Definition \ref{definition:channel_matrices} through Kraus operators. Since $\mathcal A$ is a quantum channel, there exist matrices $A_1,\ldots,A_t\in\Lin\left(\states(2n)\otimes\Hilb_E,\states(2n)\otimes\Hilb_{E'}\right)$ such that for every input matrix $X$,
    $$
    \mathcal A(X)=\sum_{\ell=1}^{t}A_{\ell}XA_{\ell}^{\dagger}
    \quad \text{and} \quad
    \sum_{\ell=1}^{t}A_{\ell}^{\dagger}A_{\ell}=I_{\Lin\left(\states(2n)\otimes\Hilb_E\right)}
    \enspace .
    $$
    We use the Kraus operators $A_1,\ldots,A_t$ only for analyzing the quantum channel $\mathcal A$, and do not assume that the algorithm receives their descriptions or can compute them efficiently.

    Recall that the feedback channel $\mathcal A$ has input which is the out $a$ of $C$ on its first $2n$ qubits and output $r$ which is conceptually the input to $C$. For every $r,a\in\{0,1\}^{2n}$ and $\ell\in[t]$, denote by $A_{\ell}^{r,a}\in\Lin(\Hilb_E,\Hilb_{E'})$ the part of $A_{\ell}$ corresponding to the feedback input $a$ and the feedback output $r$. Formally, we define
    $$
    A_{\ell}^{r,a}
    :=
    \left(\bra{r}\otimes I_{\Lin(\Hilb_{E'})}\right)
    A_{\ell}
    \left(\ket{a}\otimes I_{\Lin(\Hilb_E)}\right)
    \enspace .
    $$
    Thus, on an auxiliary input state $\ket{\eta}\in\Hilb_E$, the vector $A_{\ell}^{r,a}\ket{\eta}$ is the $r$-component of $A_{\ell}\left(\ket{a}\otimes\ket{\eta}\right)$ after removing the feedback output register. In particular, the second equality in the matrix representation of $\mathcal A$ gives, for every fixed $a\in\{0,1\}^{2n}$,
    \begin{equation}
    \begin{aligned}
    \sum_{r\in\{0,1\}^{2n}}\sum_{\ell=1}^{t}\left(A_{\ell}^{r,a}\right)^\dagger A_{\ell}^{r,a}
    &=\left(\bra{a}\otimes I_{\Lin(\Hilb_E)}\right)
    \left(\sum_{\ell=1}^{t}A_{\ell}^{\dagger}A_{\ell}\right)
    \left(\ket{a}\otimes I_{\Lin(\Hilb_E)}\right)\\
    &=I_{\Lin(\Hilb_E)}
    \enspace .
    \end{aligned}\label{eq:linear-local-tp}
    \end{equation}

    We now consider the transformation obtained by connecting the circuit $C$ through the channel $\mathcal A$. Recall the execution of $C$: it first measures the feedback input and obtains $r$, then makes the Bell-basis measurement and obtains $j$, and finally sets the feedback output to $a=r\oplus j$. By Equation \eqref{eq:linear-teleportation}, the operation applied to the external input in the branch with outcome $j$ is $VP_j/d$. For this branch to contribute to the connected transformation, the channel $\mathcal A$ must take the feedback output $a=r\oplus j$ back to the feedback input $r$. According to the above notation, the corresponding operation on the auxiliary registers is $A_{\ell}^{r,r\oplus j}$, for some $\ell\in[t]$.

    Consequently, the connected transformation has the following matrix representation. For $r,j\in\{0,1\}^{2n}$ and $\ell\in[t]$, define the matrices
    $$
    K_{r,j,\ell}:=\frac1d VP_j\otimes A_{\ell}^{r,r\oplus j}
    \enspace .
    $$
    Since the feedback input and the Bell basis are measured during the execution of $C$, the different outcomes $r,j$ contribute as separate branches. Therefore, denoting by $\mathcal G_{\mathcal A}$ the induced Process Matrix Operator, for every matrix $X$ on the external input and the auxiliary input $E$ we have
    $$
    \mathcal G_{\mathcal A}(X)
    =\sum_{r,j\in\{0,1\}^{2n}}\sum_{\ell=1}^{t}
    K_{r,j,\ell}XK_{r,j,\ell}^{\dagger}
    \enspace .
    $$
    In particular, this shows that $\mathcal G_{\mathcal A}$ is completely positive, also when its input is entangled with arbitrary additional registers. It remains to prove that the induced operator is trace-preserving.

    Recall that $V$ is an isometry and the matrices $P_j$ are unitary. Thus $V^\dagger V=P_j^\dagger P_j=I_{n}$, and we have
    \begin{align*}
    \sum_{r,j,\ell}K_{r,j,\ell}^{\dagger}K_{r,j,\ell}
    &=\frac1{d^2}I_{\Lin(\states(n))}\otimes
    \sum_{r,j,\ell}\left(A_{\ell}^{r,r\oplus j}\right)^\dagger A_{\ell}^{r,r\oplus j}\\
    &=\frac1{d^2}I_{\Lin(\states(n))}\otimes
    \sum_{a\in\{0,1\}^{2n}}\sum_{r,\ell}\left(A_{\ell}^{r,a}\right)^\dagger A_{\ell}^{r,a}\\
    &=\frac1{d^2}I_{\Lin(\states(n))}\otimes
    \sum_{a\in\{0,1\}^{2n}}I_{\Lin(\Hilb_E)}\\
    &=I_{\Lin(\states(n)\otimes\Hilb_E)}
    \enspace .
    \end{align*}
    In the second equality we simply used the change of variables $a=r\oplus j$, and the third equality follows from Equation \eqref{eq:linear-local-tp}. The last equality holds because there are $2^{2n}=d^2$ possible strings $a$. By Definition \ref{definition:channel_matrices}, this proves that the induced operator is trace-preserving. Combining this with complete positivity, we get that the Process Matrix Operator induced by $C$ is a quantum channel for every $\mathcal A$ and arbitrary auxiliary registers $E,E'$. Thus, $(2n,C)$ is a PMG, and the stronger validity property in the statement of the lemma also holds.

    \paragraph{The generated channel.}
    We are left to show the action of the generated channel $\pmo_{(2n,C)}$, which is obtained by inserting the identity channel on the feedback registers without auxiliary input. In this case the inserted channel has the single matrix $A_1=I_{\Lin(\states(2n))}$, and therefore $A_1^{r,a}=\mathbf{1}[r=a]$. Recall that the feedback output of $C$ is $a=r\oplus j$, so the condition $r=a$ holds if and only if $j=0^{2n}$. Thus, in the matrix representation of the derived channel, the only nonzero operators are
    $$
    K_{r,0^{2n},1}=\frac1d V
    \enspace ,
    $$
    one for every $r\in\{0,1\}^{2n}$. In particular, recalling that there are $d^2$ different values of $r$, for every quantum state $\rho$ we have
    $$
    \pmo_{(2n,C)}(\rho)
    =\sum_{r\in\{0,1\}^{2n}}\frac1{d^2}V\rho V^\dagger
    =V\rho V^\dagger
    \enspace .
    $$
    This completes the proof.
\end{proof}

\paragraph{Purifiability of the construction.}
The stronger validity property in Lemma \ref{lemma:from_choi_to_isometry} also implies purifiability. We state the implication separately, since it applies beyond our particular construction.

\begin{proposition}[Joint validity implies purifiability]
\label{proposition:joint_validity_implies_purifiability}
    Let $D=(n,C)$ be a PMG. Suppose that its induced transformation remains a quantum channel when an arbitrary quantum channel $\mathcal A$ acts jointly on all $n$ feedback qubits, including arbitrary auxiliary input and output registers. Then $D$ is purifiable in the sense of Definition \ref{definition:purifiable_process}.
\end{proposition}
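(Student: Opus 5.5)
The plan is to treat the joint-validity hypothesis as saying that $D$ defines a \emph{deterministic supermap}, and then use the structure theorem of Chiribella, D'Ariano and Perinotti \cite{chiribella2008transforming}. That theorem realizes such a supermap as a sequential comb, and a comb can be purified by ordinary dilation.

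\textbf{Step 1: $D$ defines a deterministic supermap.} Set $W:=\CJ(\channel_C)$ and define
$$
\mathcal S:\ \mathcal A\ \longmapsto\ \pmo(W,\Hilb_{A_I},\Hilb_{A_O})[\mathcal A],
$$
where $\mathcal A$ ranges over channels from $S,E$ to $R,E'$ acting jointly on all $n$ feedback qubits. By Definition \ref{definition:extended_channel_indefinite_operator}, $\mathcal S$ is a linear contraction of $W^{T_{A_I,A_O}}$ with $\CJ(\mathcal A)$. Two properties need to be checked.
\begin{itemize}
\item \emph{Complete positivity of $\mathcal S$.} Since $W\succeq 0$, the contraction of $W^{T}$ with any positive operator on $S,R$ together with extra reference registers is positive. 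This makes $\mathcal S$ completely CP-preserving: it maps CP maps to CP maps, also when it is tensored with identity supermaps.
\item \emph{Determinism of $\mathcal S$.} By hypothesis, $\mathcal S$ maps every channel, with arbitrary auxiliary registers $E,E'$, to a channel from $P,E$ to $F,E'$.
\end{itemize}
These are exactly the hypotheses of Theorem 1 of \cite{chiribella2008transforming}.

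\textbf{Step 2: realize $\mathcal S$ as a sequential comb and dilate.} By that theorem, there exist a memory register $M$ and isometries
$$
U_1:\Hilb_P\to\Hilb_S\otimes\Hilb_M,\qquad U_2:\Hilb_R\otimes\Hilb_M\to\Hilb_F\otimes\Hilb_{F''},
$$
such that, for every $\mathcal A$,
$$
\mathcal S(\mathcal A)=\tr_{F''}\circ\Ad_{U_2}\circ(\mathcal A\otimes\id_M)\circ\Ad_{U_1}.
$$
Next I extend each isometry to a unitary. I add ancilla input registers $P'_1,P'_2$ and pad so that dimensions match, obtaining unitaries $\tilde U_1$ on $P,P'_1$ to $S,M$ and $\tilde U_2$ on $R,M,P'_2$ to $F,F'$. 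Here $F'$ absorbs $F''$ and the leftover dimensions, and $\tilde U_i$ restricted to $\ket{0}$ on $P'_i$ equals $U_i$. I then define the single unitary
$$
U:\Hilb_R\otimes\Hilb_P\otimes\Hilb_{P'}\to\Hilb_S\otimes\Hilb_F\otimes\Hilb_{F'},\qquad P':=(P'_1,P'_2),
$$
as the wiring of $\tilde U_1$ (on $P,P'_1$) and $\tilde U_2$ (on $R,M,P'_2$) through $M$. This is a composition of unitaries on disjoint or sequentially linked registers, so it is unitary.

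\textbf{Step 3: the two conditions of Definition \ref{definition:purifiable_process_matrix}.}
\begin{itemize}
\item \emph{Validity of $\widetilde W:=\Dket{U}\Dbra{U}$.} Inserting any channel on the feedback, even a joint one with arbitrary inputs on $P'$, yields the honest sequential circuit $\tilde U_1$, then $\mathcal A$, then $\tilde U_2$. This is a composition of channels, so it is a channel. In particular it is a channel for local product channels, so $\widetilde W$ is a valid (pure) process.
\item \emph{Recovery of $W$.} Initializing $P'$ to $\ket{0}$ and tracing out $F'$ gives the comb realizing $\mathcal S$. So the process matrix $W'$ obtained this way induces the same supermap $\mathcal S$ as $W$.
\end{itemize}
To conclude $W'=W$ I would use the observation from the remark after Definition \ref{definition:pure_process_matrix_generator}. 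Insert a SWAP that sends an auxiliary input $E$ into $R$ and returns $S$ as $E'$. Under $W$ this yields exactly $\channel_C$, and under $W'$ it yields the channel whose Choi operator is $W'$. Hence $\channel_C$ equals that channel, and the equivalent form in Definition \ref{definition:purifiable_process} holds.

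\textbf{Main obstacle.} I expect the main difficulty to be in Step 1, namely matching the paper's contraction conventions (partial transposes and register orderings) to the link-product formalism of \cite{chiribella2008transforming}. In particular one must check that joint validity with \emph{arbitrary} auxiliary registers gives admissibility of the supermap in their exact sense. The dimension padding in Step 2 is routine bookkeeping.
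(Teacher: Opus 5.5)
Your proposal is correct and follows the paper's proof essentially step for step. Both treat $\mathcal A\mapsto\mathcal G_{\mathcal A}$ as a deterministic supermap, invoke Theorem 1 of \cite{chiribella2008transforming} to obtain two isometries linked by a memory register, dilate these to unitaries with ancilla inputs, argue that $\Dket{U}\Dbra{U}$ is valid because any inserted channel yields an ordinary sequential circuit, and recover $\channel_C$ by inserting the SWAP channel. The only cosmetic difference is that you justify complete positivity of the supermap directly from $W\succeq 0$, while the paper notes that the supermap's Choi operator is $\CJ(\channel_C)$ up to a reordering of registers.
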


\begin{proof}
    Write $R,S$ for the feedback input and output of $C$, and $P,F$ for its external input and output. Treat all feedback pairs as one pair, so that $\mathcal A$ acts from $S$ to $R$, possibly together with auxiliary registers. Denote the resulting transformation by $\mathcal G_{\mathcal A}$.

    The mapping $\mathcal A\mapsto\mathcal G_{\mathcal A}$ is a deterministic quantum supermap in the terminology of \cite{chiribella2008transforming}. Indeed, it is linear and completely positive in the CJ representation, since its Choi operator is $\CJ(\channel_C)$ up to a reordering of registers. By our assumption, it maps every quantum channel to a quantum channel, including when auxiliary registers are present.

    Theorem 1 of \cite{chiribella2008transforming} therefore gives a memory register $M$, an additional output register $G$, and isometries
    $$
    B_1:\Hilb_P\longrightarrow\Hilb_S\otimes\Hilb_M
    \quad \text{and} \quad
    B_2:\Hilb_R\otimes\Hilb_M\longrightarrow\Hilb_F\otimes\Hilb_G
    $$
    such that, for every $\mathcal A$ without auxiliary registers,
    $$
    \mathcal G_{\mathcal A}(X)
    =
    \tr_G\left[
    B_2\,
    (\mathcal A\otimes\id_M)\left(B_1XB_1^\dagger\right)
    B_2^\dagger
    \right]
    \enspace .
    $$
    The same identity holds with arbitrary auxiliary registers, on which the surrounding operations act as the identity. Crucially, $B_1$ and $B_2$ are fixed independently of $\mathcal A$.

    Extend $B_1$ and $B_2$ to unitaries by adding input registers $K_1,K_2$, respectively. Denote these unitaries by
    $$
    U_1:\Hilb_P\otimes\Hilb_{K_1}
    \longrightarrow\Hilb_S\otimes\Hilb_M
    \quad \text{and} \quad
    U_2:\Hilb_R\otimes\Hilb_M\otimes\Hilb_{K_2}
    \longrightarrow\Hilb_F\otimes\Hilb_G
    \enspace .
    $$
    Memory and environment spaces may be enlarged to qubit registers as needed. Initializing $K_i$ to zero recovers $B_i$. Define a unitary $U$ from $RPK_1K_2$ to $SFG$ by applying $U_1$ to $PK_1$, and then applying $U_2$ to $RMK_2$, leaving $S$ untouched.

    To verify process validity, insert an arbitrary joint channel $\mathcal A$ from $SE$ to $RE'$. The connected transformation is an ordinary circuit: apply $U_1$, pass $S$ through $\mathcal A$, and apply $U_2$ to the returned register $R$ and the kept registers $M,K_2$. Every operation is a quantum channel, so the connected transformation is a quantum channel on every joint input to $P,K_1,K_2,E$. If $\mathcal A$ is unitary, every operation is unitary. In particular, this holds for every tuple of local unitary operations on the original separate feedback pairs. Thus, $\CJ(\Ad_U)$ is a pure process.

    Finally, initialize $K_1,K_2$ to zero and discard $G$. By the realization above, this recovers $\mathcal G_{\mathcal A}$ for every $\mathcal A$, including its auxiliary extensions. This also recovers the entire original generating channel $\channel_C$: choose $\mathcal A$ to swap $S$ with an external $n$-qubit input, returning that input to $R$ and sending $S$ to an external output. With this choice, the connected transformation is precisely $\channel_C$, up to register ordering. Consequently, for every matrix $X$ on $RP$,
    $$
    \channel_C(X)
    =
    \tr_G\left[
    U\left(X\otimes\ket{0}\bra{0}_{K_1K_2}\right)U^\dagger
    \right]
    \enspace .
    $$
    Taking $P'=K_1K_2$ and $F'=G$ proves purifiability.
\end{proof}

Applying Proposition \ref{proposition:joint_validity_implies_purifiability} to the stronger validity guarantee in Lemma \ref{lemma:from_choi_to_isometry}, every process obtained by that lemma is purifiable. In particular, this applies to the unique-witness and permutation-inversion constructions below. The argument establishes existence of a pure extension; it does not provide a polynomial-time algorithm for constructing its description from the supplied Choi-state preparation circuit.

\subsection{Finding a unique satisfying input}\label{subsection:unique_witness_search}

Our next construction applies the previous lemma to the problem of finding the unique satisfying input of an efficiently computable Boolean function.

\begin{theorem}[Unique-witness search]\label{theorem:unique_witness_search}
    There exists a classical deterministic polynomial-time algorithm $M$ that, given a Boolean circuit $v:\{0,1\}^k\rightarrow\{0,1\}$ with exactly one satisfying input $z$, computes a pair $(n,C)$ for $n = O\left( k \right)$ which is a PMG and whose generated channel is unitary. Moreover, using one query to this generated channel, $M$ recovers $z$ with probability $1$.
\end{theorem}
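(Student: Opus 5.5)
The plan is to reduce to Lemma \ref{lemma:from_choi_to_isometry}, applied to the unitary $S_z$ that exchanges $\ket{0^n}$ and $\ket{z}$. First we ensure $z\neq 0^n$. Set $n:=k+1$ and define $v'(x,c):=v(x)\wedge c$ on $\{0,1\}^{n}$. Then $v'$ has the unique satisfying input $z':=(z,1)\neq 0^n$, and a circuit for $v'$ is computed in linear time from $v$. The unitary $S_{z'}$ is well defined and unitary. It suffices to give a classical deterministic polynomial-time construction, from $v'$ alone, of an exact preparation circuit $P$ for $\ket{J_{S_{z'}}}$. Lemma \ref{lemma:from_choi_to_isometry} then yields a PMG $(2n,C)$, with $2n=O(k)$, whose generated channel is $\rho\mapsto S_{z'}\rho S_{z'}^\dagger$; this channel is unitary. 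A single query on input $\ket{0^n}$ returns $\ket{z'}$ deterministically, and we read off $z$ from the first $k$ bits.

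The construction of $P$ follows the table in the Technical Overview. We start from $\ket{+}^{\otimes(n+1)}$ on registers $(x,b)$. A reversible circuit computes $v'(x)$ and the indicator $[x=0^n]$ into work qubits, and then writes into two fresh $n$-qubit output registers $O_1,O_2$. If $v'(x)=1$, it writes $(0^n,x)$ when $b=0$ and $(x,0^n)$ when $b=1$. If $v'(x)=0$, $x\neq 0^n$ and $b=0$, it writes $(x,x)$. In all remaining branches it leaves the outputs at zero. It also sets a flag qubit $g$ to $1$ exactly on the written branches. There are exactly $2^n$ good branches out of $2^{n+1}$: one for each $x\notin\{0^n,z'\}$ and two for $x=z'$. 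The good amplitude is therefore exactly $1/\sqrt2$, independently of $z'$.

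Next we uncompute $(x,b)$ and the work qubits on the good subspace, controlled on $g=1$, using the classical rule from the overview. If $O_1=0^n$, then $b=0$ and $x=O_2$. If $O_2=0^n$, then $b=1$ and $x=O_1$. Otherwise $b=0$ and $x=O_1=O_2$. Here we use that $z'\neq0^n$, so $O_1=O_2=0^n$ never occurs on a good branch. This step can be implemented reversibly, since the map (good branch) $\mapsto$ (outputs) is injective and its inverse is computable. On the good branch we then get $\ket{J_{S_{z'}}}\ket{0\cdots0}\ket{g=1}$. On the bad branch we get some state orthogonal to it, carrying $g=0$.

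The main obstacle is making the preparation exact, since Lemma \ref{lemma:from_choi_to_isometry} requires $P_V$ to prepare $\ket{J_V}$ exactly. Write $A$ for the above unitary, with $A\ket{0}=\sin\theta\ket{\mathrm{good}}+\cos\theta\ket{\mathrm{bad}}$ and $\sin^2\theta=1/2$, so $\theta=\pi/4$. One amplitude amplification iterate $Q=-A S_0 A^\dagger S_g$ rotates by $2\theta$, giving amplitude $\sin(3\theta)=\sin(3\pi/4)$, which overshoots. Instead we first reduce the success probability to exactly $1/4$. We add one extra qubit prepared by a gate $R$ with $R\ket{0}=\frac1{\sqrt2}(\ket0+\ket1)$, which is just $H$, and we redefine good as ($g=1$ and extra qubit $=0$). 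This gives $\sin^2\theta=1/4$, so $\theta=\pi/6$, and a single iterate yields $\sin(3\theta)=1$. The reflections need only the phases $-1$, which are implementable with $H$ and Toffoli gates, and $S_g$ is a controlled-$Z$ on the flag qubits. We would verify that the resulting state is exactly $\ket{J_{S_{z'}}}$ tensored with fixed, known ancilla values, with no residual global phase issue beyond an irrelevant phase. The ancillas are then discarded, or kept as zero qubits, which is allowed since they end in a fixed basis state. The gate set $\{H,T,T^\dagger,CNOT,SWAP,Tof\}$ suffices for all of this, and every step is classical deterministic polynomial-time in $|v|$.
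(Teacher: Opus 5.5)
Your proposal is correct and follows the paper's proof essentially step for step. It uses the same padding to force $z\neq 0$, the same table-driven preparation of $\ket{J_{S_z}}$ with reversible uncomputation on the flagged branches (success probability exactly $1/2$), the same extra Hadamard qubit lowering the success probability to exactly $1/4$ so that one amplitude-amplification iterate yields the Choi state exactly, and then Lemma~\ref{lemma:from_choi_to_isometry} with $V=S_z$ and a single query on the all-zero input. The only differences are cosmetic: you mark the extra qubit as good on $0$ rather than $1$, and you argue exactness through the rotation angle $\theta=\pi/6$ instead of the paper's direct two-reflection calculation.
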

\begin{proof}
    Let $v$ be a Boolean circuit as in the statement, with unique satisfying input $z$. Introduce one additional input bit and require it to be $1$ for verification to pass. We continue to write $v,k,z$ for the padded verifier, its input length, and its unique witness, respectively. Thus, $z\neq0^k$. We discard the added bit when reporting the witness for the original instance. Denote $d:=2^k$. Consider the following unitary matrix $S_z\in\bbC^{d\times d}$:
    $$
    S_z\ket{x}:=
    \begin{cases}
    \ket{z},&x=0^k,\\
    \ket{0^k},&x=z,\\
    \ket{x},&\text{otherwise}.
    \end{cases}
    $$
    Observe that $S_z$ is unitary, as it is a permutation of the computational-basis vectors, and that $S_z\ket{0^k}=\ket{z}$. We will show how to generate the channel corresponding to $S_z$ without first computing $z$.

    By \lemref{lemma:from_choi_to_isometry}, it suffices to give an ordinary polynomial-size quantum circuit that prepares the state
    \begin{equation}
    \ket{J_z}
    :=\frac1{\sqrt d}
    \left(
    \sum_{x\in\{0,1\}^k\setminus\{0^k,z\}}\ket{x,x}
    +\ket{0^k,z}+\ket{z,0^k}
    \right)
    \enspace .\label{eq:linear-unique-choi}
    \end{equation}
    We describe such a circuit next.

    \paragraph{Preparing the state $\ket{J_z}$.}
    Start with the uniform superposition on $k+1$ qubits,
    $$
    \frac1{\sqrt{2d}}\sum_{x\in\{0,1\}^k}\sum_{b\in\{0,1\}} \ket{x,b}
    \enspace .
    $$
    Add registers $A,C,G$ on $k$, $k$ and $1$ qubits respectively, initially in the all-zero state $\ket{0^{2k+1}}$. By ordinary reversible computation, evaluate $v(x)$ and compute the following (additions are modulo 2):
    \begin{enumerate}
        \item If $x\neq0^k$, $v(x)=0$, and $b=0$, add $(x,x)$ to $(A, C)$ and $1$ to $G$.
        \item If $v(x)=1$ and $b=0$, add $(0^k,x)$ to $(A, C)$ and $1$ to $G$.
        \item If $v(x)=1$ and $b=1$, add $(x,0^k)$ to $(A, C)$ and $1$ to $G$.
        \item In all other cases, leave $(A, C, G)$ unchanged.
    \end{enumerate}
    Observe that, on all branches with $G=1$, the pair $(A,C)$ uniquely determines the original input $(x,b)$: if $A=0^k$ then $(x,b)=(C,0)$; if $C=0^k$ then $(x,b)=(A,1)$; and otherwise $(x,b)=(A,0)$. Therefore, conditioned on $G=1$, the original registers $x,b$ and all computational work registers can be reversibly returned to the all-zero state, without changing $(A,C)$.

    There are exactly $d$ successful pairs $(x,b)$: one for each $x\notin\{0^k,z\}$ and two more for $x=z$. All successful pairs give different values of $(A,C)$, and these values are exactly the $d$ pairs appearing in Equation \eqref{eq:linear-unique-choi}. Therefore, the probability that $G=1$ is exactly $1/2$, and conditioned on this event the output registers are in state $\ket{J_z}$, with no entangled information left behind.

    Recall that in our Lemma \ref{lemma:from_choi_to_isometry} we use a circuit that makes $\ket{J_{z}}$ with probability $1$, and the above approach achieves probability $1/2$, and can get arbitrarily close to $1$ by repetition, but not exactly. We next turn this constant-probability preparation to be exact.
    
    Append an independent Hadamard bit $\ket{+}_T$ in a single-qubit register $T$, and call a branch good if $G=T=1$. The probability of this event is exactly $1/4$. Let $N$ be the total number of qubits used by this preparation, including all work registers. Denote the unitary computation described so far by $A$, let $\ket{\Psi}:=A\ket{0^N}$, and let $\Pi$ be the projector onto the good subspace. Then we can write
    $$
    \ket{\Psi} := \frac{1}{2}\ket{S_0} + \frac{\sqrt{3}}{2}\ket{S_1}
    \enspace ,
    $$
    for normalized orthogonal states $S_0,S_1$, where $\ket{S_0}$ has the output registers $(A, C)$ in state $\ket{J_z}$ and every work register in a fixed (untangled) state. Direct calculation gives
    \begin{equation}
    \left(2\ket{\Psi}\bra{\Psi}-I\right)
    (I-2\Pi)\ket{\Psi}=\ket{S_0}\enspace .\label{eq:linear-unique-amplification}
    \end{equation}
    Indeed, the right reflection produces $-\ket{S_0}/2+\sqrt3\ket{S_1}/2$, whose inner product with $\ket{\Psi}$ is $1/2$. Reflecting this state about $\ket{\Psi}$ (which is the second operation) therefore gives $\ket{S_0}$. The first (rightmost) operation in Equation \eqref{eq:linear-unique-amplification} is an ordinary phase flip controlled on $G=T=1$ (i.e., apply a control $Z$ two-qubit gate with control register $G$ and target register $T$) and the second 
    reflection is implemented by applying $A^\dagger$, then $2\ket{0^N}\bra{0^N}-I$, and finally $A$. Both are implementable by polynomial-size quantum circuits. Thus we obtain, without measurement or postselection, an exact polynomial-size preparation of $\ket{J_z}$.

    \paragraph{Process generation and correctness.}
    Apply \lemref{lemma:from_choi_to_isometry} to this preparation circuit, with $V:=S_z$. We obtain a polynomial-size ordinary quantum circuit $C$ for which $(n,C) := (2k,C)$ is a PMG and whose generated channel is
    $$
    \pmo_{(n,C)}(\rho)=S_z\rho S_z^\dagger
    \enspace .
    $$
    The generated channel is unitary, even though the original circuit $C$ may use measurements and discarded registers. By applying this generated channel to $\ket{0^k}\bra{0^k}$ and measuring the output, the algorithm recovers the unique satisfying input $z$ with probability $1$. The construction and every oracle use cost polynomial resources, which concludes the proof.
\end{proof}

\subsection{Solving \texorpdfstring{$\NP$}{NP}}\label{sec:np}

We now use \thmref{theorem:unique_witness_search} and the Valiant--Vazirani reduction from Theorem \ref{theorem:valiant_vazirani} to solve $\SAT$, and consequently all of $\NP$, in $\BQPlctc$. We apply our unique-witness construction to each reduced formula and verify the returned candidate assignment before accepting.

\begin{theorem}\label{thm:np}
    $$
    \NP\subseteq\BQPlctc
    \enspace .
    $$
    The algorithm makes polynomially many nonadaptive generation queries. Whenever the formula used in a query has exactly one satisfying assignment, the query is a PMG and its generated channel is unitary.
\end{theorem}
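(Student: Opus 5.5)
It suffices to show $\SAT\in\BQPlctc$, since $\SAT$ is $\NP$-complete under classical polynomial-time reductions and $\BQPlctc$ is closed under such reductions: the machine $M$ can compute the reduction itself before making its queries. Given a CNF formula $\varphi$ on $n$ variables, the algorithm $M$ will run the Valiant--Vazirani machine $T_{VV}$ from Theorem \ref{theorem:valiant_vazirani} independently $q:=c\cdot n\cdot(|\varphi|+1)$ times for a suitable constant $c$. This produces formulas $\varphi^*_1,\ldots,\varphi^*_q$. For each $\varphi^*_i$, viewed as a Boolean verifier circuit $v_i$, $M$ applies the classical deterministic algorithm of Theorem \ref{theorem:unique_witness_search} to obtain a pair $(n_i,C_i)$. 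These descriptions depend only on $\varphi$ and the sampled randomness, so all $q$ generation queries can be issued nonadaptively. For each returned channel, $M$ applies it to $\ket{0^{k_i}}\bra{0^{k_i}}$, measures, and drops the padding bit to get a candidate $w_i$. It then checks classically whether $\varphi^*_i(w_i)=1$ and whether $w_i$, restricted to the original variables, satisfies $\varphi$. $M$ accepts iff some candidate passes.

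I plan to verify the VV-reduced witness against $\varphi$ itself. Then soundness does not depend on how the oracle behaves off the promise. Any channel $A$ may return on non-PMG queries, and the verification is deterministic. So if $\varphi\notin\SAT$, no string passes and $M$ rejects with probability $1$, for every linear CTC generator $A$.

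Completeness comes next. If $\varphi\in\SAT$, each $\varphi^*_i$ has a unique satisfying assignment with probability $\Omega(1/n)$, independently across $i$. Whenever it does, Theorem \ref{theorem:unique_witness_search} guarantees that $(n_i,C_i)$ is a PMG with unitary generated channel. By Definition \ref{definition:linear_ctc_generator}, $A$ must return exactly $\pmo_{(n_i,C_i)}$, so the candidate equals the unique witness with probability $1$ and passes verification. It is a satisfying assignment of $\varphi$ because VV only adds constraints, or, in the other formulation, we take its projection to the original variables. Thus $M$ fails only if no instance is uniquely satisfiable. That has probability at most $(1-\Omega(1/n))^{q}\le 2^{-|\varphi|}$ for $c$ large enough.

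The purifiability and unitarity clauses follow from Proposition \ref{proposition:joint_validity_implies_purifiability} applied to the joint-validity guarantee of Lemma \ref{lemma:from_choi_to_isometry}, together with the unitarity in Theorem \ref{theorem:unique_witness_search}.

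For cost, each $|C_i|$ is polynomial in $|\varphi^*_i|$, which is polynomial in $|\varphi|$. Each query and each channel use costs $\Theta(|C_i|)$, so the total running time is polynomial. The main subtlety is not computational. It is making sure the correctness argument quantifies over all generators $A$, including adversarial ones on the non-unique instances. The final classical verification handles this: bad instances can only fail to produce a witness, never produce a false acceptance.
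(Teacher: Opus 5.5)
Your proposal is correct and follows the paper's proof essentially step for step. Both use independent Valiant--Vazirani repetitions, build the deterministic unique-witness PMG of Theorem~\ref{theorem:unique_witness_search} for each reduced formula, use each returned channel once on the all-zero input, and verify the candidate classically. Soundness holds against every generator $A$, and completeness comes from the $\Omega(1/n)$ uniqueness probability. Your $q=c\,n(|\varphi|+1)$ plays the role of the paper's $\lceil K|\varphi|^2\rceil$.

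The one difference is your extra check of $w_i$ against $\varphi$ itself, which is redundant. The paper checks only $\varphi^*_i(w_i)=1$ and gets soundness from the first guarantee of Theorem~\ref{theorem:valiant_vazirani}. Your extra check also makes completeness depend on a fact not in the stated theorem: that satisfying assignments of $\varphi^*$ restrict to satisfying assignments of $\varphi$. This is true for the standard construction, so it is not a gap.
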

\begin{proof}
    It suffices to solve $\SAT$. Let $A$ be any generation oracle satisfying Definition \ref{definition:linear_ctc_generator}, and let $\varphi$ an input CNF formula with $m$ input variables. If the input does not encode a CNF formula, reject. Formulas with no variables are evaluated directly. We describe a quantum polynomial-time Turing Machine $M$ that, with access to $A$, decides whether $\varphi\in\SAT$.

    Let $T:=\left\lceil K|\varphi|^2\right\rceil$, for a sufficiently large absolute constant $K$. The machine $M$ executes the following procedure $T$ times, with independent randomness for each execution of the Valiant-Vazirani reduction:
    \begin{enumerate}
        \item Execute $T_{VV}(\varphi)$ from Theorem \ref{theorem:valiant_vazirani}, obtaining a CNF formula $\varphi^*$.
        \item Apply the deterministic circuit construction from \thmref{theorem:unique_witness_search} to the Boolean circuit that verifies a satisfying assignment of $\varphi^*$. Denote the resulting pair by $(n,C)$, and query $A(n,C)$.
        \item Apply the returned channel to the designated all-zero input, and measure its output in the computational basis. Remove the extra witness bit introduced in the proof of \thmref{theorem:unique_witness_search}, obtaining a candidate assignment $w$. Accept if $\varphi^*(w)=1$.
    \end{enumerate}
    If none of the iterations accepted, reject.

    The construction in Step 2 outputs a well-formed circuit description for every $\varphi^*$. If $\varphi^*$ has exactly one satisfying assignment, \thmref{theorem:unique_witness_search} guarantees that this description is a PMG whose generated channel finds the assignment.

    \paragraph{Correctness.}
    Suppose first that $\varphi\notin\SAT$. By Theorem \ref{theorem:valiant_vazirani}, every formula $\varphi^*$ generated in Step 1 is unsatisfiable. Therefore, no candidate assignment can pass Step 3, regardless of which channels $A$ returns. The machine rejects with probability $1$.

    Now suppose that $\varphi\in\SAT$. There is a constant $c>0$ such that each execution of Step 1 produces a uniquely satisfiable formula with probability at least $c/m$. Whenever this event occurs, \thmref{theorem:unique_witness_search} guarantees that Step 2 supplies the channel that recovers the unique satisfying assignment. Step 3 therefore accepts with probability $1$. Consequently, the probability that all $T$ iterations reject is at most
    $$
    \left(1-\frac{c}{m}\right)^T
    \leq
    \exp\left(-\frac{cT}{m}\right)
    \enspace .
    $$
    Since $m\leq|\varphi|$, choosing $K$ sufficiently large gives
    $$
    \exp\left( -\frac{cT}{m} \right)
    \leq
    \exp\left( -cK|\varphi| \right)
    \leq
    2^{-|\varphi|}
    \enspace .
    $$

    \paragraph{Running time and generation queries.}
    The reduction, the circuit construction, and the final verification all take polynomial time. Every generation query and every use of its returned channel have polynomial cost under Definition \ref{definition:BQPlctc}. As a side note, one can also verify that the generation queries are nonadaptive. This proves $\SAT\in\BQPlctc$.
\end{proof}

\subsection{Inverting an efficiently computable permutation}\label{sec:permutations}
We end with an application of \lemref{lemma:from_choi_to_isometry} to computing a function, rather than finding a single witness. The construction uses the same computational model $\BQPlctc$, and provides a unitary that inverts a permutation on arbitrary quantum inputs.

\begin{theorem}[Permutation inversion]\label{theorem:permutation_inversion}
    There exists a deterministic polynomial-time algorithm that, given an ordinary classical circuit for a permutation $\pi:\{0,1\}^n\rightarrow\{0,1\}^n$, outputs a PMG whose generated channel is the unitary transformation
    \begin{equation}
    \ket{y}\longmapsto\ket{\pi^{-1}(y)}\enspace .\label{eq:linear-permutation-inverse}
    \end{equation}
    In particular, this transformation acts coherently on arbitrary input superpositions and inputs entangled with additional registers.
\end{theorem}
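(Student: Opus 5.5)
The plan is to apply \lemref{lemma:from_choi_to_isometry} with $V:=T_{\pi^{-1}}$, the permutation unitary $\ket{y}\mapsto\ket{\pi^{-1}(y)}$. This is unitary because $\pi$ is a bijection, so it is an isometry with $n=m$. The lemma needs only a polynomial-size circuit $P_V$ that prepares
$$
\ket{J_{\pi^{-1}}}=\frac{1}{\sqrt{2^n}}\sum_{y\in\zo^n}\ket{y}\otimes\ket{\pi^{-1}(y)}
$$
exactly. Once we have it, the lemma's classical deterministic polynomial-time transformation outputs a pair $(2n,C)$ that is a PMG with $\pmo_{(2n,C)}(\rho)=T_{\pi^{-1}}\rho\, T_{\pi^{-1}}^\dagger$. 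Since $n=m$, the lemma also states that this generated channel is unitary.

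To build $P_V$, first apply $H^{\otimes n}$ to an $n$-qubit register $X$ to get $2^{-n/2}\sum_x\ket{x}_X$. Then convert the given classical circuit for $\pi$ into a reversible Toffoli/CNOT circuit with clean work registers, by computing, copying out, and uncomputing. This computes $\ket{x}_X\ket{0^n}_Y\mapsto\ket{x}_X\ket{\pi(x)}_Y$ with all ancillas returned to zero. The result is $2^{-n/2}\sum_x\ket{x,\pi(x)}$, and reindexing by $y=\pi(x)$ gives $2^{-n/2}\sum_y\ket{\pi^{-1}(y),y}$. A final SWAP of $X$ and $Y$ yields $\ket{J_{\pi^{-1}}}$.

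This preparation is exact and uses no measurement or amplification, since no branch needs to be discarded. Its size is polynomial in the size of the circuit for $\pi$, and it is produced deterministically from that circuit.

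Coherence comes for free. \lemref{lemma:from_choi_to_isometry} establishes the generated channel as $\Ad_V$ on all inputs, and by linearity of the process matrix operator this extends to arbitrary superpositions and to inputs entangled with reference registers, on which the channel acts as $V\otimes I$. The main obstacle is just bookkeeping: checking that the reversible implementation of $\pi$ leaves no garbage entangled with the output, because garbage would make the prepared state mixed rather than exactly $\ket{J_{\pi^{-1}}}$. The standard Bennett compute–copy–uncompute trick handles this. As a remark, the resulting process is purifiable by Proposition \ref{proposition:joint_validity_implies_purifiability}.
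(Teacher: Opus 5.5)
Your proposal is correct and follows the paper's own proof: prepare the uniform superposition, evaluate $\pi$ into a fresh register, swap the two registers to obtain $\ket{J_{\pi^{-1}}}$ exactly, and invoke Lemma~\ref{lemma:from_choi_to_isometry}. Your explicit compute--copy--uncompute step, which guarantees that no garbage remains entangled with the output, spells out a detail the paper leaves implicit when it says ``compute $\pi(x)$ into a fresh register.''
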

\begin{proof}
    Let $\pi$ be a permutation with polynomial-size evaluation circuit. The strategy is to use Lemma \ref{lemma:from_choi_to_isometry} by a simply circuit that generates the Choi state of the inverse $\pi^{-1}$.
    
    Prepare the uniform superposition over $x$, compute $\pi(x)$ into a fresh register and SWAP the two registers. Since $\pi$ is a permutation, the same state equals
    $$
    \frac1{\sqrt{2^n}}\sum_{x\in\{0,1\}^n}\ket{\pi(x)}\otimes\ket{x}
    =
    \frac1{\sqrt{2^n}}\sum_{y\in\{0,1\}^n}\ket{y}\otimes\ket{\pi^{-1}(y)}
    =
    \ket{J_{U_{\pi^{-1}}}}
    \enspace ,
    $$
    where $U_{\pi^{-1}}\ket{y}:=\ket{\pi^{-1}(y)}$ is unitary. Applying \lemref{lemma:from_choi_to_isometry} gives a polynomial-size PMG with generated channel $U_{\pi^{-1}}$. The computational model therefore gives access to this channel. This proves the theorem.
\end{proof}

\section{Quantum Computation with Efficient and Pure Process Matrices} \label{section:unitary_CTC_computation}
In this section we construct pure processes, with unitary generating circuits constructed efficiently from the input. We first prove a general result on quantum changes of basis and then use it to invert USO outmaps. This gives our algorithm for $\USO$ and its consequences through classical reductions. We end by discussing connections to physical implementation.

\subsection{A unitary generator for the USO Oracle} \label{subsection:uso_generator}
We recall the unique-sink orientation problem, introduced in \cite{szabo2001unique}. The input describes an exponentially large graph by a circuit which computes the directions of the edges incident to a given vertex.

\begin{definition} [Unique-Sink Orientation (USO) Problem] \label{definition:uso}
    Let $n\in\Nat$. The $n$-dimensional Boolean cube has vertex set $\{0,1\}^n$, where two vertices are adjacent if they differ in exactly one coordinate. For every $i\in[n]$, let $e_i\in\zo^n$ be the string whose $i$-th coordinate is $1$ and all other coordinates are $0$. An orientation of the cube assigns a direction to every edge, and is represented by a function $s:\{0,1\}^n\to\{0,1\}^n$, where $s_i(x)=1$ if the edge between $x$ and $x\oplus e_i$ is directed away from $x$, and $s_i(x)=0$ if it is directed to $x$. In particular, $s_i(x)\neq s_i(x\oplus e_i)$ for every $x$ and $i$. We call $s$ the outmap of the orientation.

    A face of the cube is obtained by fixing an arbitrary subset $F\subseteq[n]$ of coordinates to specified values. A sink of a face is a vertex of that face with no outgoing edges along any of its free coordinates. An orientation is called a unique-sink orientation (USO) if every face has exactly one sink. This includes the entire cube, obtained by fixing no coordinates, and faces consisting of a single vertex, obtained by fixing all coordinates (which trivially satisfies the requirement, because there are no free coordinates in that case).

    In the computational problem $\USO$, the input is a classical Boolean circuit computing an outmap $s$, promised that the corresponding orientation is a USO. The output is the unique vertex $z\in\{0,1\}^n$ satisfying $s(z)=0^n$. Running times are measured in the description length of the input circuit.
\end{definition}

The following characterization of unique-sink orientations is given in \cite[Lemma 2.3]{szabo2001unique}. We also state an equivalent formulation that will be useful for our constructions. For completeness, we provide a proof in Appendix \ref{app:uso-facts}.\footnote{The original paper does not provide a proof, but this combinatorial fact is well-known and folklore.}

\begin{lemma} [Pairwise characterization of USO] \label{lem:uso-characterization}
    A function $s:\{0,1\}^n\to\{0,1\}^n$ is a USO outmap if and only if
    \begin{equation}
    x\neq x'\quad\Longrightarrow\quad\exists i\in[n]:\ x_i\neq x_i'\ \text{ and }\ s_i(x)\neq s_i(x')
    \enspace .\label{equation:uso_characterization}
    \end{equation}
    In particular, every USO outmap is a permutation of $\{0,1\}^n$. For $h(x):=x\oplus s(x)$, Equation \eqref{equation:uso_characterization} is equivalent to
    \begin{equation}
    x\neq x'\quad\Longrightarrow\quad\exists i\in[n]:\ x_i\neq x_i'\ \text{ and }\ h_i(x)=h_i(x')
    \enspace .\label{equation:uso_combinatorial_condition}
    \end{equation}
\end{lemma}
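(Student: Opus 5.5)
The plan is to prove the three claims in order: the equivalence of \eqref{equation:uso_characterization} and \eqref{equation:uso_combinatorial_condition}, the fact that \eqref{equation:uso_characterization} forces $s$ to be a permutation, and the main equivalence between \eqref{equation:uso_characterization} and the USO property. The first two are immediate. If $x_i\neq x_i'$, then $h_i(x)=h_i(x')$ holds exactly when $s_i(x)\neq s_i(x')$, since $h_i=x_i\oplus s_i$ and $x_i\oplus x_i'=1$. So the two conditions agree coordinate by coordinate. Moreover, \eqref{equation:uso_characterization} gives $s(x)\neq s(x')$ for all $x\neq x'$, so $s$ is injective and hence a bijection on the finite set $\{0,1\}^n$.

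For the direction from \eqref{equation:uso_characterization} to the USO property, I take a face given by a fixed coordinate set $F$ with prescribed values, and free set $\bar F=[n]\setminus F$. I first check that $s$ is a valid outmap: for $x'=x\oplus e_i$ the only differing coordinate is $i$, so \eqref{equation:uso_characterization} gives $s_i(x)\neq s_i(x\oplus e_i)$. For uniqueness of the sink, suppose $x\neq x'$ are both sinks of the face. They agree on $F$, so any coordinate where they differ lies in $\bar F$, and there both satisfy $s_i=0$, which contradicts \eqref{equation:uso_characterization}. For existence, I restrict to the face, viewed as a $|\bar F|$-dimensional cube, and look at the map $x\mapsto s(x)|_{\bar F}$. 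The condition restricted to pairs inside the face still supplies a differing coordinate in $\bar F$ where $s$ differs, so this map is injective, hence bijective onto $\{0,1\}^{\bar F}$. In particular some vertex of the face maps to $0$, and that vertex is a sink.

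For the converse, assume $s$ is a USO outmap and $x\neq x'$. Let $D=\{i: x_i\neq x_i'\}$, which is nonempty, and suppose for contradiction that $s_i(x)=s_i(x')$ for all $i\in D$. I work inside the face $\Phi$ spanned by $x$ and $x'$, with free coordinates $D$ and the remaining coordinates fixed to their common values. I then reorient $\Phi$ by flipping every edge in the directions $\{i\in D: s_i(x)=1\}$. The new outmap is $s'(y)=s(y)\oplus c$, where $c$ is the indicator vector of $s(x)|_D$ on $D$, so $s'(x)|_D=s'(x')|_D=0$. I will invoke the standard fact that flipping all edges of a fixed set of directions preserves the USO property, which is Lemma 2.1 of \cite{szabo2001unique}. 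Granting it, the reoriented $\Phi$ has two sinks $x$ and $x'$, a contradiction.

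The main obstacle is proving that flipping directions preserves USO, since the paper proves everything in the appendix. My first attempt would be induction on dimension via a counting argument. The key step is to show that in a USO every face $\Phi$ satisfies that $y\mapsto s(y)|_{\mathrm{free}(\Phi)}$ is a bijection; given that, flipping a direction composes this bijection with a fixed XOR and so keeps exactly one preimage of $0$ in each subface. I would prove the bijection claim by induction on the face dimension. Each $k$-face splits along a free direction $i$ into two $(k-1)$-faces, which by induction induce bijections on the other $k-1$ coordinates. It then remains to argue that the two vertices with the same restricted outmap $t$ have opposite $i$-th bits, and I would do this using uniqueness of sinks in the smaller face determined by $t$.
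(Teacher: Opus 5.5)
Your handling of the $h$-reformulation, the permutation claim, and the direction from \eqref{equation:uso_characterization} to the USO property is the same as the paper's. For the other direction you take a genuinely different route. You flip the directions $\{i\in D: s_i(x)=1\}$ so that $x$ and $x'$ become two sinks of the face they span, and you appeal to the flipping lemma of Szab\'{o} and Welzl. Granting that lemma, the argument is correct. You can apply the lemma either to the whole cube, or to $\Phi$ regarded as a cube in its own right; its restricted outmap is a USO because faces of $\Phi$ are faces of the cube.

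The paper never reorients. It shows directly that on every face with free set $J$ the restricted outmap $x\mapsto(s_i(x))_{i\in J}$ is a bijection, by counting. For each $K\subseteq J$, exactly $2^{|J|-|K|}$ vertices have $s_i=0$ on $K$, one sink per $K$-subface. Inclusion--exclusion then leaves exactly one vertex for each pattern. The paper applies this to the smallest face containing $x,x'$. Your route is shorter once flipping is available; the paper's gives the stronger face-bijection statement with no external lemma.

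The self-contained plan you sketch for the flipping lemma does not hold up. First, it is circular in effort. The face-bijection claim you want to prove by induction already gives \eqref{equation:uso_characterization} directly: if $s(x)$ and $s(x')$ agree on $D$, bijectivity fails on the face spanned by $x,x'$. Once you have that claim, flipping is superfluous.

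More seriously, the inductive step as described fails. Split $\Phi$ along $i$ and take the two vertices with restricted outmap $t$. If they differ on a proper subset of the free coordinates, the induction hypothesis on the face they span finishes the step. If they are antipodal in $\Phi$, there is no smaller face to use, and information from proper subfaces alone cannot suffice. The cyclic square $00\to01\to11\to10\to00$ has a unique sink on every edge, yet $s(00)=s(11)=(0,1)$. You must use the unique sink of $\Phi$ itself, which only settles $t=0$. Reaching every $t$ needs an extra idea, e.g.\ the relations $N(T)+N(T\oplus e_j)=2$ for all free $j$ anchored at $N(0)=1$, where $N(T)$ is the number of vertices of $\Phi$ with restricted outmap $T$.

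The cleanest repair is to prove single-direction flipping directly. Faces not containing direction $i$ are unaffected by the flip. For a face containing $i$, its sink must be one of the sinks $u_0,u_1$ of its two facets along $i$, and these do not depend on direction $i$. Exactly one of $u_0,u_1$ has $s_i=0$, because the face has a unique sink, and flipping $i$ merely swaps which one. This makes your proof complete in a few lines.
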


\paragraph{Main technical theorem.}
We next construct a pure PMG for a family of quantum changes of basis. For each computational-basis label $\ket{x}$, consider a product state obtained by applying a single-qubit unitary at each coordinate, choosing between two gates according to $h\left( x \right)$. The generated transformation will map this state back to $\ket{x}$. The condition in Equation \eqref{equation:uso_combinatorial_condition} guarantees both that these states form an orthonormal basis and that our generating circuit describes a pure process. Our algorithm for $\USO$ will follow as a special case.

\begin{theorem} [Generating a USO-decomposable unitary by a pure PMG] \label{theorem:decomposable_unitary_generation}
    Let $h:\{0,1\}^n\to\{0,1\}^n$ be given by a classical Boolean circuit satisfying Equation \eqref{equation:uso_combinatorial_condition}. Let $\{ U_{i,a} \}_{ i \in [n], a \in \zo }$ a set of single-qubit unitaries, each given by an ordinary unitary circuit. For every $x \in \zo^n$ define
    $$
    U(x):=
    \bigotimes_{i=1}^{n}U_{i,h_i(x)}
    \quad\text{and}\quad
    \ket{b_x}:=U(x)^\dagger\ket{x}
    \enspace .
    $$
    Then, the following holds.
    \begin{itemize}
        \item
        The vectors $\{\ket{b_x}\}_{x\in\{0,1\}^n}$ form an orthonormal basis of $\states(n)$ and thus the matrix
        \begin{equation}
        T:=\sum_{x\in\{0,1\}^n}\ket{x}\bra{b_x}
        =\sum_{x\in\{0,1\}^n}\ket{x}\bra{x}U(x)
        \enspace
        \label{eq:basis-map}
        \end{equation}
        is unitary.

        \item 
        Moreover, there is a classical deterministic polynomial-time algorithm that, given the circuits $\{ U_{i,a} \}_{ i \in [n], a \in \zo }$ and $h$, outputs a pair $(n,C)$ which is a pure PMG and whose generated Process Matrix Operator unitary (as in Definition \ref{definition:pure_process_matrix_generator}) is the matrix $T$.  
    \end{itemize}
\end{theorem}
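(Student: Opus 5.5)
The plan follows the technical overview in Section \ref{subsection:overview_uso}, with the bit flips replaced by the gates $U_{i,a}$. First I prove the orthonormality claim directly. Take $x\neq x'$. By Equation \eqref{equation:uso_combinatorial_condition} there is a coordinate $i$ with $x_i\neq x_i'$ and $h_i(x)=h_i(x')=:a$. Since $\ket{b_x}$ and $\ket{b_{x'}}$ are product states, their inner product factors over coordinates. The $i$-th factor is $\bra{x_i}U_{i,a}U_{i,a}^\dagger\ket{x_i'}=\braket{x_i}{x_i'}=0$. Each $\ket{b_x}$ is a unit vector, and there are $2^n$ of them, so they form an orthonormal basis. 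Hence $T$, which maps this basis to the computational basis, is unitary.

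Next I construct the generator. The circuit $C$ acts on the feedback register $R=(R_1,\ldots,R_n)$, the external register $Q$ on $n$ qubits, and work registers. On $\ket{x}_R\ket{y}_Q$ it does the following:
\begin{enumerate}
\item It reversibly computes $h(x)$ into work qubits, using Toffoli-based classical simulation of the circuit for $h$.
\item For each $i$, it applies to $Q_i$ the gate $U_{i,0}$ controlled on $h_i(x)=0$ and the gate $U_{i,1}$ controlled on $h_i(x)=1$. Controlled versions of the given unitary circuits are built gate by gate; up to the paper's gate set, I will assume exact controlled versions are available or can be compiled.
\item It uncomputes $h(x)$.
\item It swaps $R$ with $Q$.
\end{enumerate}
The result is $U_C\ket{x}_R\ket{y}_Q=(U(x)\ket{y})_R\otimes\ket{x}_Q$. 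This circuit is unitary, and it is produced in classical polynomial time.

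By Lemma \ref{lemma:simplified_isometry_pm}, the identity connection gives $\tr_R(U_C)=\sum_x\ket{x}\bra{x}U(x)=T$. The amplitude from $\ket{y}$ to $\ket{x}$ requires the returned feedback value to equal $x$, which yields exactly the row $\bra{x}U(x)$.

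The main step is purity, meaning the condition \eqref{eq:pure-test}. Fix local unitaries $V_i$ on $R_iE_i$. The block of $T_{\mathbf V}$ with external output $x$ on $Q$ is the tensor product over $i$ of the row operators
\[
\left(\bra{x_i}\otimes I_{E_i}\right)V_i\left(U_{i,h_i(x)}\otimes I_{E_i}\right),
\]
each mapping $Q_iE_i$ to $E_i'$. This is the same computation as in the overview. Define $B_{i,a}:=V_i(U_{i,a}\otimes I_{E_i})$; it is unitary. Unitarity of $B_{i,a}$ gives the block-row identity
\[
\sum_v B^{u,v}_{i,a}\left(B^{u',v}_{i,a}\right)^\dagger=\mathbf 1[u=u']\, I.
\]
Now I compute $T_{\mathbf V}T_{\mathbf V}^\dagger$ blockwise over $(x,x')$:
\begin{itemize}
\item If $x=x'$, every coordinate contributes $I_{E_i'}$, so the diagonal block is the identity.
\item If $x\neq x'$, take the coordinate $i$ supplied by \eqref{equation:uso_combinatorial_condition}. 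There both blocks use the same $B_{i,a}$ but different rows $x_i\neq x_i'$, so that factor vanishes and the whole off-diagonal block is zero.
\end{itemize}
Thus $T_{\mathbf V}T_{\mathbf V}^\dagger=I$. Since the input and output spaces have equal dimension when $E_i\cong E_i'$, this gives unitarity. Unitary local maps with differing auxiliary dimensions are impossible, so this case covers everything.

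The obstacle I expect is the careful bookkeeping of the tensor factor ordering and the partial trace over $R$ with auxiliary registers. Specifically, I need to verify that the $x$-block really factorizes as a tensor product over coordinates. This holds because $U_C$ acts "classically-controlled" on $x$ and places $x$ in $Q$, and the $V_i$ act on disjoint registers.

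Having pure-test unitarity for all local unitaries, Definition \ref{definition:pure_process_matrix_generator} identifies $(n,C)$ as a pure PMG. Validity as a process matrix under arbitrary local channels would follow by dilating each $\mathcal A_i$ to a unitary and tracing out, or from Theorem 2 of \cite{araujo2017purification}. The generated unitary is then $T$.
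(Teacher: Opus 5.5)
Your proposal is correct and follows essentially the same route as the paper: the same coordinate-wise orthogonality argument for the $\ket{b_x}$, the same compute--apply--uncompute--swap circuit, the same operator blocks $B_{i,a}^{u,v}$ with the block-row identity giving $T_{\mathbf V}T_{\mathbf V}^\dagger=I$, and Lemma \ref{lemma:simplified_isometry_pm} to identify the generated unitary as $T$. Your extra step, deriving validity under arbitrary local channels by Stinespring-dilating each $\mathcal A_i$, is sound and makes explicit something the paper leaves implicit; by contrast, invoking Theorem 2 of \cite{araujo2017purification} for that step would be circular, since that theorem presupposes a valid process.
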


\begin{proof}
    We first verify the claim about the basis. Every $\ket{b_x}$ has norm one because $U(x)$ is unitary. For different $x,x'$, choose $i$ as in Equation \eqref{equation:uso_combinatorial_condition}. In the inner product $\braket{b_x}{b_{x'}}$, the factor at this coordinate is
    $$
    \bra{x_i}U_{i,h_i(x)}U_{i,h_i(x')}^\dagger\ket{x_i'}
    =\braket{x_i}{x_i'}=0
    \enspace .
    $$
    Thus the vectors are orthonormal, and there are $2^n$ of them in $\states(n)$.

    \paragraph{The process matrix generator.}
    We describe a polynomial-size unitary quantum circuit $C$ acting on $2$ registers $R, Q$, each having $n$ qubits.
    Specifically, let $R=R_1\cdots R_n$ be the $n$-qubit feedback input register with one qubit in each $R_i$, and let $Q$ be the $n$-qubit external input register. On a basis input $\ket{x}_R$ and an arbitrary quantum state $\ket{\psi}_Q$, the circuit $C$ performs the following operations:
    \begin{enumerate}
        \item Compute $h(x)$ in an initially all-zero, $n$-qubit work register $W$.
        \item For every $i\in[n]$, apply $U_{i,h_i(x)}$ to qubit $i$ of $Q$, controlled on the corresponding bit of $W$.
        \item Reverse the computation of $h(x)$, returning $W$ to its all-zero state. The register $R$ has not been changed, so this uncomputation does not require knowing the contents of $Q$.
        \item Swap $R$ and $Q$.
    \end{enumerate}
    Consequently, the channel of $C$ is the unitary $U_C$ given by
    \begin{equation}
    U_C\left(\ket{x}_R\otimes\ket{\psi}_Q\right)
    =
    \left(U(x)\ket{\psi}\right)_R\otimes\ket{x}_Q
    \enspace .\label{eq:basis-C}
    \end{equation}

    \paragraph{Validity of the generator.}
    We prove the equivalent condition in Definition \ref{definition:pure_process_matrix_generator}. Let $E_1,\ldots,E_n$ be arbitrary additional registers, and let $V_i$ be an arbitrary unitary on $R_iE_i$. Write $E := E_1, \cdots, E_n$. Recall again that these registers and unitaries are used only in the proof of validity, the algorithm is not given their descriptions and need not implement them. Our goal is to show that the matrix
    $$
    T_{\mathbf V} :=
    \tr_R\left[
    \left(\bigotimes_{i=1}^n V_i\right)
    \left(U_C\otimes I_{\Lin(\Hilb_E)}\right)
    \right]
    $$
    is a unitary transformation on the registers $Q, E$.

    Our first step will be to look on the entire transformation before applying the partial trace. Let $r\in\{0,1\}^n$, $y\in\{0,1\}^n$ be inputs to the $R$ and $Q$ registers respectively and for $i \in [n]$ let $w_i \in \zo^{|E_{i}|}$ be an input to the $E$ register. We have the equalities
    \begin{equation}
    \begin{aligned} \label{equation:simplify_pre_trace_unitary}
    \left(\bigotimes_{i=1}^n V_i\right)
    &\left(U_C\otimes I_{\Lin(\Hilb_E)}\right)
    \left(\ket{r}_R\otimes\ket{y}_Q\otimes\ket{w}_E\right)
    \\
    &\underset{\text{Equation } \eqref{eq:basis-C}}{=}
    \left(\bigotimes_{i=1}^n V_i\right)
    \left(
    \left( U(r)\ket{y} \right)_R \otimes \ket{r}_Q \otimes\ket{w}_E
    \right)
    \\
    &\underset{\text{Def. of }U(r)}{=}
    \left(\bigotimes_{i=1}^n V_i\right)
    \left(
    \left( \bigotimes_{i=1}^{n}U_{i,h_i(r)}\ket{y_i} \right)_R \otimes \ket{r}_Q \otimes\ket{w}_E
    \right)
    \enspace .
    \end{aligned}
    \end{equation}
    We next want to trace out the $R$ register from the above transformation.

    \vspace{2mm}
    \noindent
    \textit{Writing the state as a standard-basis superposition over the feedback register.}
    To compute the partial trace explicitly, we first expand the feedback output in the computational basis. This will let us identify the component which stays in each summand of the partial trace. 
    Let us focus on registers $R$ and $E$ in Equation \ref{equation:simplify_pre_trace_unitary} and stop thinking about this as the output of a computation, and just as a state -- we seek to write this state in a way that the $R$ register contains a superposition of computational-basis vectors. Even more specifically, the state is enumerated by $r \in \zo^n$ which we can fix, so what remains in registers $R, E$ is
    $$
    \bigotimes_{i=1}^n V_i \cdot \left( U_{i,h_i(r)}\ket{y_i}_{R_i} \otimes \ket{w_i}_{E_{i}} \right) \enspace ,
    $$
    where for each $i \in [n]$, we define $w_{i} \in \zo^{|E_{i}|}$ as the disjoint $i$-th part of the string $w \in \zo^{|E|}$.
    
    Next, for each value $i \in [n]$, $a\in\{0,1\}$, define the unitary
    $$
    B_{i,a}:=V_i\left(U_{i,a}\otimes I_{\Lin(\Hilb_{E_i})}\right)
    \enspace .
    $$
    For bits $u,v\in\{0,1\}$, let $B_{i,a}^{u,v}$ be the matrix on $E_i$ obtained by fixing the input feedback bit of $B_{i,a}$ to $v$ and its output feedback bit to $u$. Formally,
    $$
    B_{i,a}^{u,v}:=
    \left( \bra{u} \otimes I_{\Lin(\Hilb_{E_i})} \right)
    B_{i,a}
    \left(\ket{v}\otimes I_{\Lin(\Hilb_{E_i})}\right)
    \enspace .
    $$
    Equivalently, $B_{i,a}=\sum_{u,v}\ket{u}\bra{v}\otimes B_{i,a}^{u,v}$. We deduce the equality:
    $$
    \bigotimes_{i=1}^n V_i \cdot \left( U_{i,h_i(r)}\ket{y_i}_{R_i} \otimes \ket{w_i}_{E_{i}} \right)
    =
    \sum_{z\in\{0,1\}^n}
    \ket{z}_R \otimes
    \left(\bigotimes_{i=1}^n B_{i,h_i(r)}^{z_i,y_i}\right)\ket{w}_E
    \enspace ,
    $$
    which overall, given Equation \ref{equation:simplify_pre_trace_unitary} implies
    \begin{align*}
    \left(\bigotimes_{i=1}^n V_i\right)
    \left(U_C\otimes I_{\Lin(\Hilb_E)}\right)
    &\left(\ket{r}_R\otimes\ket{y}_Q\otimes\ket{\eta}_E\right)
    \\&=
    \sum_{z\in\{0,1\}^n}
    \ket{z}_R\otimes\ket{r}_Q\otimes
    \left(\bigotimes_{i=1}^n B_{i,h_i(r)}^{z_i,y_i}\right)\ket{\eta}_E
    \enspace .
    \end{align*}
    
    Therefore, when we take the partial trace, only the term $z=r$ survives. The external output $\ket{r}_Q$ remains, together with the matrix on $E$ multiplying that term. Summing over all feedback input labels $r$, we obtain
    \begin{equation} \label{equation:T_V_descripion}
    T_{\mathbf V}\left(\ket{y}_Q\otimes\ket{\eta}_E\right)
    =
    \sum_{r\in\{0,1\}^n}
    \ket{r}_Q\otimes
    \left(\bigotimes_{i=1}^n B_{i,h_i(r)}^{r_i,y_i}\right)\ket{\eta}_E
    \enspace .
    \end{equation}

    \vspace{2mm}
    \noindent
    \textit{Proving unitarity by block-wise analysis.}
    We would next like to prove that $T_{\mathbf V} \cdot T_{\mathbf V}^\dagger = I_{\Lin(\Hilb_Q\otimes\Hilb_E)}$. One view that Equation \ref{equation:T_V_descripion} allows is that the input and output of the $Q$ register completely determines the transformation applied to the remaining system $E$. Accordingly we will analyze $T_{\mathbf V} \cdot T_{\mathbf V}^\dagger$ block-wise, partitioned into blocks fixing the input-output of $Q$. 

    For $x,y\in\{0,1\}^n$, define the matrix $(T_{\mathbf V})_{x,y}\in\Lin(\Hilb_E)$ by
    $$
    (T_{\mathbf V})_{x,y}
    :=
    \left(\bra{x}_Q\otimes I_{\Lin(\Hilb_E)}\right)
    T_{\mathbf V}
    \left(\ket{y}_Q\otimes I_{\Lin(\Hilb_E)}\right)
    \enspace .
    $$
    Thus, $x$ and $y$ refer only to the external output and input registers, respectively. It follows that
    \begin{align*}
    (T_{\mathbf V})_{x,y}
    =
    \sum_{r\in\{0,1\}^n}
    \braket{x}{r}
    \bigotimes_{i=1}^n B_{i,h_i(r)}^{r_i,y_i}
    =
    \bigotimes_{i=1}^n B_{i,h_i(x)}^{x_i,y_i}
    \enspace .
    \end{align*}

    We next compute the $Q$-blocks of $T_{\mathbf V}T_{\mathbf V}^\dagger$. For every $x,x'\in\{0,1\}^n$,
    \begin{equation}
    \begin{aligned}
    \left(T_{\mathbf V}T_{\mathbf V}^\dagger\right)_{x,x'}
    &=
    \sum_{y\in\{0,1\}^n}
    (T_{\mathbf V})_{x,y}
    \left((T_{\mathbf V})_{x',y}\right)^\dagger\\
    &=
    \sum_{y\in\{0,1\}^n}
    \bigotimes_{i=1}^n
    \left[
    B_{i,h_i(x)}^{x_i,y_i}
    \left(B_{i,h_i(x')}^{x_i',y_i}\right)^\dagger
    \right]\\
    &=
    \bigotimes_{i=1}^n
    \left[
    \sum_{v\in\{0,1\}}
    B_{i,h_i(x)}^{x_i,v}
    \left(B_{i,h_i(x')}^{x_i',v}\right)^\dagger
    \right]
    \enspace .
    \end{aligned}
    \label{eq:basis-full-valid}
    \end{equation}
    In the last equality, we used that, for fixed $x,x'$, the $i$-th factor depends on $y$ only through its $i$-th bit. Recall that for all $i, a$ the matrix $B_{i, a}$ is unitary and thus $B_{i,a}B_{i,a}^\dagger=I_{1 + |E_i|}$. For $u, u' \in \zo$, taking the $(u,u')$ part of the equality $B_{i,a}B_{i,a}^\dagger=I$ gives
    \begin{equation}
    \sum_{v\in\{0,1\}} B_{i,a}^{u,v}\left(B_{i,a}^{u',v}\right)^\dagger
    =\mathbf{1}[u=u']I_{\Lin(\Hilb_{E_i})}
    \enspace .\label{eq:basis-row}
    \end{equation}

    Back to analyzing $\left(T_{\mathbf V}T_{\mathbf V}^\dagger\right)_{x,x'}$. If $x=x'$, Equation \eqref{eq:basis-row} implies that every factor in Equation \eqref{eq:basis-full-valid} is $I_{\Lin(\Hilb_{E_i})}$, so the resulting matrix is $I_{\Lin(\Hilb_E)}$. Otherwise, Equation \eqref{equation:uso_combinatorial_condition} gives an index $i$ such that $x_i\neq x_i'$ and $h_i(x)=h_i(x')$. Write $a:=h_i(x)=h_i(x')$. At this coordinate, the factor is
    $$
    \sum_{v\in\{0,1\}}
    B_{i,a}^{x_i,v}
    \left(B_{i,a}^{x_i',v}\right)^\dagger
    =0
    \enspace ,
    $$
    where the equality follows from Equation \eqref{eq:basis-row}, because $x_i\neq x_i'$. Hence the entire tensor product is zero. We have proved that the diagonal matrices of $T_{\mathbf V}T_{\mathbf V}^\dagger$ are identities and all its off-diagonal matrices are zero. Consequently,
    $$
    T_{\mathbf V}T_{\mathbf V}^\dagger
    =
    I_{\Lin(\Hilb_Q\otimes\Hilb_E)}
    \enspace .
    $$
    Since the additional registers $E_i$ and the unitaries $V_i$ were arbitrary, this proves the equivalent condition in Definition \ref{definition:pure_process_matrix_generator}. Together with the unitarity of $C$, we conclude that $(n,C)$ is a pure PMG.

    \paragraph{The generated channel.}
    For the identity insertion, Lemma \ref{lemma:simplified_isometry_pm} identifies the generated unitary with $\tr_R\left( U_C \right)$. For every $y\in\{0,1\}^n$, its action is
    \begin{align*}
    \left (\tr_R \left( U_C \right) \right)\ket{y}
    &=\sum_{x\in\{0,1\}^n}(\bra{x}\otimes I_Q)U_C(\ket{x}\otimes\ket{y})\\
    &=\sum_{x\in\{0,1\}^n}\bra{x}U(x)\ket{y}\ket{x}
    =T\ket{y}
    \enspace ,
    \end{align*}
    for the matrix $T$ from the statement of Theorem \ref{theorem:decomposable_unitary_generation}.
    This proves Equation \eqref{eq:basis-map}. In particular, the generated operation sends $\sum_x\alpha_x\ket{b_x}$ to $\sum_x\alpha_x\ket{x}$, and the same equality holds with an identity operation on any untouched reference register. This concludes the proof.
\end{proof}

\paragraph{The special case of USO.}
Taking $U_{i,0}=I_1$ and $U_{i,1}=X$ in the preceding theorem gives the following result.

\begin{theorem} [Coherent inversion of a USO outmap] \label{thm:uso}
    There is a classical deterministic polynomial-time algorithm which, given a circuit for a USO outmap $s:\{0,1\}^n\to\{0,1\}^n$, outputs a pure PMG $(n,C_s)$ (w.r.t. Definition \ref{definition:pure_process_matrix_generator}) whose generated unitary is
    \begin{equation}
    T_s:=\sum_{x\in\{0,1\}^n}\ket{x}\bra{s(x)}
    \quad\text{and hence}\quad
    T_s\ket{y}=\ket{s^{-1}(y)}
    \enspace .\label{equation:uso_characterizationso-inverse}
    \end{equation}
    In particular, one query to the generated channel on $\ket{0^n}$ finds the unique sink with probability $1$.
\end{theorem}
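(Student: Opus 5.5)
The plan is to obtain this as a direct specialization of Theorem \ref{theorem:decomposable_unitary_generation}. Given the circuit for the outmap $s$, the algorithm first builds, in classical polynomial time, a Boolean circuit computing $h(x):=x\oplus s(x)$. This only appends $n$ XOR gates to the circuit for $s$. For every $i\in[n]$ it then fixes $U_{i,0}:=I_1$ and $U_{i,1}:=X$; each of these is given by a constant-size unitary circuit, and $X=HT^4H$ over our gate set. By Lemma \ref{lem:uso-characterization}, since $s$ is a USO outmap, $h$ satisfies Equation \eqref{equation:uso_combinatorial_condition}. The hypotheses of Theorem \ref{theorem:decomposable_unitary_generation} therefore hold. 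That theorem yields, deterministically and in polynomial time, a pure PMG $(n,C_s)$ whose generated unitary is $T=\sum_x\ket{x}\bra{x}U(x)$.

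It remains to identify $T$ with $T_s$. For each $x$ we have $U(x)=\bigotimes_i X^{h_i(x)}$. Hence
\[
\ket{b_x}=U(x)^\dagger\ket{x}=\ket{x\oplus h(x)}=\ket{s(x)}.
\]
This gives $T=\sum_x\ket{x}\bra{s(x)}=T_s$. By the same lemma $s$ is a permutation, so for every $y$ there is exactly one $x$ with $s(x)=y$. Consequently $T_s\ket{y}=\ket{s^{-1}(y)}$.

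For the final claim, the algorithm queries the generated channel once on $\ket{0^n}$ and obtains the basis state $\ket{s^{-1}(0^n)}$. Measuring in the computational basis returns $z=s^{-1}(0^n)$ with probability $1$. By definition this $z$ is the unique vertex with no outgoing edges, i.e.\ the sink of the whole cube.

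I do not expect a real obstacle here. The only points needing care are checking that $X$ and $I$ are legitimate ``ordinary unitary circuits'' in the model, and that the circuit for $h$ is efficiently derived. All the substantive work, namely validity and purity, is inherited from Theorem \ref{theorem:decomposable_unitary_generation} and Lemma \ref{lem:uso-characterization}.
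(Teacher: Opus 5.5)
Your proposal is correct and follows essentially the same route as the paper: specialize Theorem~\ref{theorem:decomposable_unitary_generation} with $U_{i,a}=X^a$ and $h(x)=x\oplus s(x)$, invoke Lemma~\ref{lem:uso-characterization} for the combinatorial condition and for $s$ being a permutation, and read off $\ket{b_x}=\ket{s(x)}$ so that $T=T_s$. Your additional checks, that $X=HT^4H$ is available in the gate set and that $h$ is obtained by appending $n$ XOR gates, are correct details that the paper leaves implicit.
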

\begin{proof}
    Compute $h(x)=x\oplus s(x)$ by an ordinary Boolean circuit. By Lemma \ref{lem:uso-characterization}, $h$ satisfies Equation \eqref{equation:uso_combinatorial_condition}. For the gates $U_{i,a}=X^a$, the circuit in Theorem \ref{theorem:decomposable_unitary_generation} is
    \begin{equation}
    U_{C_s} \left( \ket{x}_R\ket{y}_Q \right)
    =
    \ket{y\oplus x\oplus s(x)}_R \ket{x}_Q
    \enspace.
    \label{equation:uso_characterizationso-C}
    \end{equation}
    It is a pure PMG by that theorem. Moreover, $\ket{b_x}=\ket{x\oplus h(x)}=\ket{s(x)}$, which proves Equation \eqref{equation:uso_characterizationso-inverse}. The computational model gives access to the generated channel $T_s$. Applying it to $\ket{0^n}$ gives $\ket{s^{-1}\left(0^n\right)}$, the unique sink. 
\end{proof}

For general $U_{i,a}$, Theorem \ref{theorem:decomposable_unitary_generation} instead gives a quantum basis transformation. Thus the same validity argument supports both the permutation inverse and the quantum constructions below. The relation between process functions and orthogonal product bases has been studied in \cite{kunjwal2023trading}; here we specify a succinct circuit representation and prove the corresponding algorithm in $\BQPuctc$.

\begin{corollary}[Consequences of classical reductions] \label{corollary:catalogue}
    Every problem with a classical polynomial-time promise-preserving reduction to circuit-described Cube-USO has a polynomial-time algorithm using pure process computation.
\end{corollary}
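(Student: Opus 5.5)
The plan is to compose the given reduction with the algorithm of Theorem \ref{thm:uso} inside the model of Definition \ref{definition:BQPuctc}. Fix a problem $\Pi$ (decision or search) and a classical polynomial-time reduction to circuit-described Cube-USO. On input $x$, the reduction produces a Boolean circuit $s_x:\zo^{n}\to\zo^{n}$ of size polynomial in $|x|$. It also supplies a polynomial-time map $g$ that recovers an answer for $x$ from the sink of $s_x$. By promise preservation, whenever $x$ satisfies the promise of $\Pi$, $s_x$ is a USO outmap. The machine $M$ first runs the reduction to obtain $s_x$. It then applies the deterministic construction of Theorem \ref{thm:uso} to get $(n,C_{s_x})$, queries the generation oracle on this pair, applies the returned channel once to $\ket{0^n}$, measures in the computational basis to obtain $z$, and outputs $g(x,z)$.

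For correctness on promised inputs, Lemma \ref{lem:uso-characterization} and Theorem \ref{thm:uso} show that $(n,C_{s_x})$ is a pure PMG with generated unitary $T_{s_x}$. By Definition \ref{definition:unitary_ctc_generator}, every unitary CTC generator must return exactly $\pmo_{(n,C_{s_x})}$. The measurement therefore yields the unique sink $s_x^{-1}(0^n)$ with probability $1$, and $g$ maps it to a correct answer. This exceeds the required success probability $1-2^{-|x|}$.

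Inputs outside the promise impose no requirement in the promise-class definition, so an arbitrary returned channel there is harmless. If the reduction is randomized rather than deterministic, correctness would be amplified by repetition as in the proof of Theorem \ref{thm:np}. For search reductions whose answer is checkable, the output would also be verified.

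The cost analysis is direct. $|C_{s_x}|$ is polynomial in $|s_x|$, which is polynomial in $|x|$. The algorithm makes one query and one channel use, each costing $\Theta(|C_{s_x}|)$, and all remaining steps are classical polynomial time.

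The only point requiring care is fixing what ``promise-preserving reduction'' must mean. The recovery map $g$ has to be applied to the sink, which the USO problem returns, and not to some other object, and the promise must transfer to the USO promise on every face, not merely to the existence of a global sink. I would state this formally as part of the corollary's hypothesis and then check that the cited reductions (P-matrix LCP, simple stochastic games, and so on) fit it. For those reductions this verification is inherited from the literature rather than proved here.
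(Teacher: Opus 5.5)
Your proposal is correct and follows the paper's proof: run the reduction, build the pure PMG of Theorem~\ref{thm:uso} for the resulting outmap circuit, apply the generated channel to $\ket{0^n}$ to obtain the sink, and decode. The paper additionally covers polynomial-time Turing reductions by applying the same procedure to each USO instance the reduction produces; each such instance is a valid USO on promised inputs, and queries in Definition~\ref{definition:BQPuctc} may be adaptive, so your single-query argument extends to this case immediately.
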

\begin{proof}
    Compute the outmap circuit supplied by the reduction, use Theorem \ref{thm:uso} to find its sink, and decode the solution. For a polynomial-time Turing reduction, apply the same procedure to each USO instance it produces. On a promised source input, each such instance satisfies the USO promise, so our construction supplies a pure PMG for every call.
\end{proof}

Known reductions therefore give polynomial-time algorithms using pure process computation for rational P-matrix linear complementarity \cite{szabo2001unique}, its generalized and P--Lin-Bellman formulations, and Grid-USO with total-side-length encoding \cite{borzechowski2024two}. They also cover simple stochastic, discounted, mean-payoff, and parity games \cite{gartner2005simple,jurdzinski2008simple,fearnley2009linear}, quantitative stochastic parity games \cite{berthon2025direct}, and finite-state modal $\mu$-calculus model checking \cite{wilke2001alternating}. The game problems here have explicitly represented finite arenas and are turn-based, fully observed, two-player, and zero-sum, with binary rational numerical data where applicable. Model checking takes an explicitly represented finite transition system and a formula as input. We use these classical reductions directly; the process construction needed for each resulting USO instance is the one in Theorem \ref{thm:uso}.

\subsection{Connections to physical implementation} \label{subsection:uso_physical_implementation}
Theorem \ref{thm:uso} gives an efficient generating circuit for a pure process that solves $\USO$. Whether pure processes can be implemented physically with resources polynomial in the size of their generating circuits remains unknown. The structure of our construction gives concrete connections to existing circuit simulations and proposals for physical implementation, which we discuss below.

For our algorithm, a physical implementation would need to provide the generated channel, obtained when all local operations are identities. Reproducing the entire process under arbitrary local operations is a stronger requirement. This distinction already appears in \cite[Section VI]{araujo2017quantum}: they construct a family of pure processes that violate causal inequalities, yet whose connected channels admit efficient ordinary-circuit simulations when the local operations are given by circuit descriptions. Their generators use controlled bit flips followed by a SWAP, as in our construction.

Our more general construction also has a concrete connection to work on quantum control of causal order. Recall that Theorem \ref{theorem:decomposable_unitary_generation} generates the transformation
$$
T\ket{b_x}=\ket{x}
\quad\text{where}\quad
\ket{b_x}
=
\bigotimes_{i=1}^{n}
U_{i,h_i\left( x \right)}^\dagger\ket{x_i}
\enspace .
$$
Each basis vector is a product state, with only two possible single-qubit operations at each coordinate. For example, take $n=3$ and
$$
h\left( x \right)
=
\left(
\left( 1-x_2 \right)x_3,\,
x_1\left( 1-x_3 \right),\,
\left( 1-x_1 \right)x_2
\right)
\enspace .
$$
With $U_{i,0}=I$ and $U_{i,1}=H$, these vectors are exactly the SHIFT basis studied in \cite{kunjwal2023trading}. Steffinlongo and Dourdent \cite{steffinlongo2026simulating} show how a quantum switch can implement the corresponding projective measurement and propose a photonic realization. Thus, a specific case of our basis construction already has a switch-based implementation proposal. The proposal concerns measurement in this basis, rather than the coherent transformation $T$.

These connections suggest studying physical implementations of the generated transformation directly. For $\USO$, the target is particularly simple to state: take $U_{i,a}=X^a$, so that $\ket{b_x}=\ket{s\left( x \right)}$, and implement the map $\ket{s\left( x \right)}\mapsto\ket{x}$. The same two single-qubit gates are used for every instance; the instance enters through the efficiently computable function $h$. Whether this structure permits a physical implementation with resources polynomial in the description of the instance remains open. Existing simulations and switch-based proposals establish particular cases, but do not provide such a construction for arbitrary $\USO$ instances.

\vspace{7mm}

\paragraph{Acknowledgements.}
The main ideas in this paper were developed by the author. During its preparation, the author used ChatGPT (OpenAI) to help work through and check mathematical arguments, draft and revise proofs, explore the literature, and improve the presentation. The author takes full responsibility for all statements and proofs in this paper.

\ifllncs
\bibliographystyle{plain}
\else
\bibliographystyle{alpha}
\fi
\bibliography{bibliography,abbrev0}

\appendix

\section{Supplementary Proofs} \label{section:appedix}
We prove the pairwise characterization of $\USO$, show that the USO property is necessary for our particular generating circuit, and bound the effect of approximating the single-qubit gates in our more general construction.

\subsection{The USO characterization and validity of our construction} \label{app:uso-facts}
The characterization in Lemma \ref{lem:uso-characterization} compares pairs of vertices. We first show that, on every face of a USO, the outmap restricted to the free coordinates takes each possible value exactly once. The pairwise condition will then follow by considering the smallest face containing the two vertices.

\begin{proof}[Proof of Lemma \ref{lem:uso-characterization}]
    Suppose that $s$ is a USO outmap, and fix a face with free coordinates $J\subseteq[n]$. We want to show that
    $$
    x \longmapsto \left( s_i\left( x \right) \right)_{i\in J}
    $$
    is a bijection between the vertices of this face and $\zo^J$.

    First, fix $K\subseteq J$. Partition the face into the $2^{|J|-|K|}$ smaller faces whose free coordinates are precisely $K$. A vertex satisfies $s_i\left( x \right)=0$ for every $i\in K$ if and only if it is the sink of its smaller face. Since each smaller face has exactly one sink, there are exactly $2^{|J|-|K|}$ such vertices.

    Now fix $Z\subseteq J$. We count the vertices whose outmap is zero exactly on the coordinates in $Z$, among those in $J$. By inclusion--exclusion, this number is
    $$
    \sum_{K\subseteq J\setminus Z}
    \left( -1 \right)^{|K|}2^{|J|-|Z|-|K|}
    =
    2^{|J|-|Z|}\left( 1-\frac12 \right)^{|J|-|Z|}
    =1
    \enspace .
    $$
    Thus every possible outmap value occurs exactly once. For any distinct vertices $x,x'$, consider the smallest face containing both. Its free coordinates are exactly those where $x$ and $x'$ differ. Their outmap values on these coordinates must also differ, proving Equation \eqref{equation:uso_characterization}.

    Conversely, suppose Equation \eqref{equation:uso_characterization} holds. Apply it first to two vertices differing only at coordinate $i$. Their $i$-th outmap bits must differ, so $s$ assigns a consistent direction to their shared edge. Next, fix any face with free coordinates $J$. The same condition says that its restricted outmap is injective. Since the face and $\zo^J$ have the same size, this map is bijective. Exactly one vertex maps to the all-zero string on $J$, and this vertex is the unique sink of the face. Taking the entire cube also shows that $s$ is a permutation.

    Finally, whenever $x_i\neq x_i'$, we have
    $$
    x_i\oplus s_i\left( x \right)
    =x_i'\oplus s_i\left( x' \right)
    \quad \Longleftrightarrow \quad
    s_i\left( x \right)\neq s_i\left( x' \right)
    \enspace .
    $$
    Substituting $h\left( x \right)=x\oplus s\left( x \right)$ gives Equation \eqref{equation:uso_combinatorial_condition}.
\end{proof}

\paragraph{Why our generating circuit needs the USO property.}
Theorem \ref{thm:uso} shows that a USO gives a valid pure PMG. For the particular circuit used in that theorem, the converse also holds. To see why, we choose local channels that fix some feedback inputs and therefore restrict them to a face. On external input $\ket{0^n}$, the trace of the resulting output counts the sinks of that face. Validity requires this trace to be exactly one.

\begin{proposition}[The exact condition needed by our USO generator] \label{prop:uso-necessity}
    Let $s:\zo^n\to\zo^n$ be any Boolean function, and let $C_s$ be the unitary circuit in Equation \eqref{equation:uso_characterizationso-C}. Then $\left( n,C_s \right)$ is a PMG if and only if $s$ is a USO outmap.
\end{proposition}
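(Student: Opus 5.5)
The "if" direction is immediate from what we already have. If $s$ is a USO outmap, then Theorem \ref{thm:uso} (via Theorem \ref{theorem:decomposable_unitary_generation} with $U_{i,a}=X^a$) shows that $(n,C_s)$ is a pure PMG, and in particular a PMG. All the work is in the converse. We assume $(n,C_s)$ is a PMG for an arbitrary $s$ and derive the characterization of Lemma \ref{lem:uso-characterization}, or equivalently the definition directly: every face has exactly one sink.

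Following the hint before the statement, we fix a face. Choose a set $F\subseteq[n]$ of fixed coordinates with values $c\in\zo^F$, and let $J=[n]\setminus F$ be the free coordinates. The local channels are as follows.
\begin{itemize}
\item For $i\in J$, take $\mathcal A_i$ to be the identity.
\item For $i\in F$, take $\mathcal A_i$ to be the ``discard and prepare $\ket{c_i}$'' channel, $\rho\mapsto \tr(\rho)\ket{c_i}\bra{c_i}$, with no auxiliary registers. It has Kraus operators $\ket{c_i}\bra{u}$ for $u\in\zo$.
\end{itemize}
Validity says the induced map on $Q$ is trace-preserving, so in particular the output trace on input $\ket{0^n}\bra{0^n}$ must equal $1$.

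We compute this trace by the same branch calculation as in the proof of Theorem \ref{theorem:decomposable_unitary_generation}. For each choice of Kraus indices $u_F\in\zo^F$, the connected operator is a partial trace of $U_{C_s}$ with the inserted Kraus operators. Feeding $\ket{x}_R\ket{y}_Q$ into $U_{C_s}$ gives $\ket{y\oplus x\oplus s(x)}_R\ket{x}_Q$. Closing the loops forces two things. At free coordinates $i\in J$ we need $x_i=y_i\oplus x_i\oplus s_i(x)$, that is, $s_i(x)=y_i$. At fixed coordinates we need $x_i=c_i$ and $u_i=(y\oplus x\oplus s(x))_i$. With $y=0^n$, the Kraus operator indexed by $u_F$ maps $\ket{0^n}$ to
$$\sum_x \ket{x}\,\mathbf{1}\left[x_F=c,\ s_J(x)=0,\ u_F=(x\oplus s(x))_F\right].$$
Summing the squared norms over $u_F$, each vertex $x$ in the face with $s_J(x)=0^J$ is counted exactly once, since $u_F$ is determined by $x$ and distinct $x$ give orthogonal $\ket{x}$. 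So the output trace is the number of sinks of the face, and validity forces it to be exactly $1$. This holds for every face, and faces with $J=\emptyset$ are trivial, so $s$ is a USO outmap by Definition \ref{definition:uso}. The consistency condition $s_i(x)\neq s_i(x\oplus e_i)$ also follows: a one-dimensional face $\{x,x\oplus e_i\}$ has exactly one sink, so exactly one of $s_i(x)$, $s_i(x\oplus e_i)$ is $0$.

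The main obstacle is doing the contraction carefully with the formal Definition \ref{definition:extended_channel_indefinite_operator} (partial transposes and CJ operators) rather than the heuristic ``partial trace of the matrix''. To keep it clean, we would first note that for channels given by Kraus operators $\{K^{(\alpha)}_i\}$, the connected map has Kraus operators $\tr_R[(\bigotimes_i K^{(\alpha_i)}_i)U_{C_s}]$. This is the multi-Kraus generalization of Lemma \ref{lemma:simplified_isometry_pm}, proved by the same computation applied to each rank-one term of $\CJ(\mathcal A_i)$. After that, the count above is routine.
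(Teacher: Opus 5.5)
Your proposal is correct and follows essentially the same argument as the paper. The forward direction is Theorem~\ref{thm:uso}. For the converse, you insert reset channels with Kraus operators $\ket{c_i}\bra{u}$ on the fixed coordinates and identities on the free ones. On input $\ket{0^n}$, the connected Kraus operators pick out exactly the vertices $x$ of the face with $s_J(x)=0^J$, each for the unique index $u_F=h_F(x)$. Trace preservation then forces every face, including the one-dimensional faces that give a consistent orientation, to have exactly one sink. Your remark about deriving the Kraus-level loop contraction from Definition~\ref{definition:extended_channel_indefinite_operator} makes explicit a step that the paper only states as ``expanding the operator partial trace.''
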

\begin{proof}
    One direction is Theorem \ref{thm:uso}. For the other, suppose $\left( n,C_s \right)$ is a PMG. Fix a face by setting the coordinates in $I\subseteq[n]$ to a string $c\in\zo^I$, and write $J:=[n]\setminus I$ for its free coordinates. At each feedback pair $i\in I$, insert the channel that discards its input and prepares $\ket{c_i}$. At each feedback pair $j\in J$, insert the identity. Set the external input to $\ket{0^n}$.

    The reset channel at coordinate $i$ has Kraus operators
    $$
    \left\{ \ket{c_i}\bra{b_i}:b_i\in\zo \right\}
    \enspace .
    $$
    For each $b\in\zo^I$, let $K_b$ be the connected Kraus operator corresponding to these choices of $b_i$. Write $h\left( x \right)=x\oplus s\left( x \right)$. Expanding the operator partial trace gives
    $$
    K_b\ket{0^n}
    =
    \sum_{\substack{x_I=c;\ s_j\left( x \right)=0\ \text{for every }j\in J;\\
    h_i\left( x \right)=b_i\ \text{for every }i\in I}}
    \ket{x}
    \enspace .
    $$
    Indeed, the reset channels force $x_I=c$, and their Kraus operators select the feedback outputs $b_i=h_i\left( x \right)$. At a free coordinate $j$, the identity connection requires $h_j\left( x \right)=x_j$, which is equivalent to $s_j\left( x \right)=0$.

    Every vertex satisfying $x_I=c$ and $s_j\left( x \right)=0$ for all $j\in J$ appears for exactly one value of $b$, with amplitude one. Distinct vertices give orthogonal output states. Therefore,
    $$
    \sum_{b\in\zo^I}\norm{K_b\ket{0^n}}_2^2
    =
    \#\left\{ x\in\zo^n:x_I=c,\ s_j\left( x \right)=0\text{ for every }j\in J \right\}
    \enspace .
    $$
    Since the connected transformation is trace-preserving, this number must be one. On a one-dimensional face, this says that exactly one endpoint has outmap bit zero along its edge, so $s$ defines a consistent orientation. On any face, it then says that there is exactly one sink. Thus $s$ is a USO outmap.
\end{proof}

\paragraph{Approximating the single-qubit gates.}
Theorem \ref{theorem:decomposable_unitary_generation} applies to any choice of the single-qubit unitaries. Thus, replacing them by unitary approximations still gives a pure PMG. What changes is the generated unitary. The next lemma bounds this change in terms of the approximation error of the individual gates.

\begin{lemma}[Precision of the basis transformation] \label{lem:unitary-precision}
    Fix $h$ as in Theorem \ref{theorem:decomposable_unitary_generation}. Let $T,\widetilde T$ be the generated matrices for two collections of single-qubit unitaries $U_{i,a},\widetilde U_{i,a}$. If
    $$
    \norm{U_{i,a}-\widetilde U_{i,a}}_{\mathrm{op}}\leq\eta
    \quad\text{for every }i\in[n],\ a\in\zo,
    $$
    then
    \begin{equation}
    \norm{T-\widetilde T}_{\mathrm{op}}
    \leq\norm{T-\widetilde T}_{\mathrm F}
    \leq 2^{n/2}n\eta
    \enspace .
    \label{eq:precision}
    \end{equation}
    Here $\norm{M}_{\mathrm{op}}:=\sup_{\norm{v}_2=1}\norm{Mv}_2$ is the operator norm, and $\norm{M}_{\mathrm F}^2:=\sum_{x,y}\left| M_{x,y} \right|^2$ is the squared Frobenius norm. If both collections are implemented by unitary circuits, both generating circuits describe pure PMGs exactly.
\end{lemma}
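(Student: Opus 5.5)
We work directly from the explicit form of the generated matrix in Equation \eqref{eq:basis-map}, namely $T=\sum_{x}\ket{x}\bra{x}U(x)$ and $\widetilde T=\sum_x\ket{x}\bra{x}\widetilde U(x)$, where $\widetilde U(x):=\bigotimes_i\widetilde U_{i,h_i(x)}$. Both are valid generated unitaries by Theorem \ref{theorem:decomposable_unitary_generation}, since that theorem imposes no condition on the single-qubit gates beyond unitarity. This also settles the final sentence of the lemma: each collection, implemented by unitary circuits, yields a pure PMG exactly. The first inequality $\norm{\cdot}_{\mathrm{op}}\leq\norm{\cdot}_{\mathrm F}$ is standard, so the work is in the Frobenius bound.

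The key structural observation is that row $x$ of $T-\widetilde T$ is the row vector $\bra{x}\left(U(x)-\widetilde U(x)\right)$. Hence
$$
\norm{T-\widetilde T}_{\mathrm F}^2=\sum_{x\in\zo^n}\norm{\bra{x}\left(U(x)-\widetilde U(x)\right)}_2^2\leq\sum_{x}\norm{U(x)-\widetilde U(x)}_{\mathrm{op}}^2 ,
$$
because a single row of a matrix has Euclidean norm at most the operator norm of that matrix.

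It remains to bound each term by a telescoping (hybrid) argument over the tensor factors. Replace the factors $U_{i,h_i(x)}$ by $\widetilde U_{i,h_i(x)}$ one coordinate at a time. Each step changes the product by a tensor product of unitaries with a single difference $U_{i,a}-\widetilde U_{i,a}$. Operator norm is multiplicative under tensor products, so each step contributes at most $\eta$, and the total is $\norm{U(x)-\widetilde U(x)}_{\mathrm{op}}\leq n\eta$. Summing $n^2\eta^2$ over the $2^n$ strings $x$ and taking square roots gives $2^{n/2}n\eta$, as claimed.

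No step looks like a serious obstacle. The only point needing care is that the choice $h_i(x)$ varies with $x$. This is harmless because the hybrid bound holds uniformly for every fixed $x$ and uses only the hypothesis for all $(i,a)$.
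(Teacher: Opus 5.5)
Your proof is correct and follows essentially the paper's argument: a one-factor-at-a-time hybrid bounds each row of $T-\widetilde T$ by $n\eta$, summing the squared row norms over the $2^n$ rows gives the Frobenius bound, and purity of both generators comes from Theorem~\ref{theorem:decomposable_unitary_generation}. The only difference is cosmetic: you run the hybrid on the operators $U(x)-\widetilde U(x)$ in operator norm and then restrict to row $x$, whereas the paper runs it directly on the tensor-product row vectors $\bigotimes_i\bra{x_i}U_{i,h_i(x)}$.
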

\begin{proof}
    Fix a row $x$ of $T$. By Equation \eqref{eq:basis-map}, this row is
    $$
    \bigotimes_{i=1}^{n}\bra{x_i}U_{i,h_i\left( x \right)}
    \enspace .
    $$
    Replace these factors one at a time by the corresponding factors from $\widetilde T$. Each replacement changes the row by Euclidean norm at most $\eta$: the changed factor differs by norm at most $\eta$, and every other factor has norm one. By the triangle inequality, the two rows differ by norm at most $n\eta$. Summing the squared differences over the $2^n$ rows gives
    $$
    \norm{T-\widetilde T}_{\mathrm F}^2\leq 2^n n^2\eta^2
    \enspace .
    $$
    Taking square roots and using that the operator norm is at most the Frobenius norm proves Equation \eqref{eq:precision}. Both generating circuits describe pure PMGs by Theorem \ref{theorem:decomposable_unitary_generation}.

    The same bound controls the output trace distance, including when the input is entangled with a reference register. Tensoring with the identity on that register leaves the operator norm unchanged. For unit vectors $\ket{u},\ket{v}$, the trace distance between their density matrices is
    $$
    \sqrt{1-\left| \braket{u}{v} \right|^2}
    \leq\norm{\ket{u}-\ket{v}}_2
    \enspace .
    $$
    Applying this inequality to the two output states gives the claim for pure joint inputs. For a mixed joint input, first purify it and then discard only the additional purifying register. Trace distance cannot increase under this last operation, so the bound also holds for mixed inputs, with the original reference register included in the output.
\end{proof}

The approximation compares the unitary matrices with their specified phases. These phases matter here: a gate is selected according to the feedback input, so changing its phase can change the relative phases between different branches of the computation.

\end{document}